\newtheorem*{LMP}{Local Mach Principle (colloquial form)}
\newtheorem*{LMPt}{Local Mach Principle (technical form)}
\newtheorem*{LUQT}{Local Universality of Quantum Theory}
\newtheorem{example}{Example}
\newtheorem{observation}{Observation}
\newtheorem{thm}{Theorem}
\newtheorem{conjecture}{Conjecture}
\newtheorem{prop}{Proposition}
\newtheorem*{ass}{Assumption}
\def\be{\begin{equation}}
\def\ee{\end{equation}}
\def\ba{\begin{eqnarray}}
\def\ea{\end{eqnarray}}
\newcommand\q{\quad}
\definecolor{mm}{rgb}{0.57, 0.36, 0.51}
\newcommand{\ca}{\mathcal A}
\newcommand{\cb}{\mathcal B}
\newcommand{\cg}{\mathcal G}
\newcommand{\cm}{\mathcal M}
\newcommand{\dd}{\mathrm{d}}
\title{A local quantum Mach principle and the metricity of spacetime}
\author[1,2]{Philipp A.\ H\"ohn\thanks{\texttt{p.hoehn@univie.ac.at}}}
\author[1,5]{Markus P.\ M\"uller\thanks{\texttt{markusm23@univie.ac.at}}}
\author[3]{Christian Pfeifer\thanks{\texttt{christian.pfeifer@ut.ee}}}
\author[4,2]{Dennis R\"atzel\thanks{\texttt{raetzeld@physik.hu-berlin.de}}}
\affil[1]{\small Institute for Quantum Optics and Quantum Information, Austrian Academy of Sciences,\newline Boltzmanngasse 3, 1090 Vienna, Austria}
\affil[2]{\small Vienna Center for Quantum Science and Technology (VCQ), Faculty of Physics, University of Vienna, Boltzmanngasse 5, 1090 Vienna, Austria}
\affil[3]{\small Laboratory of Theoretical Physics, Institute of Physics, University of Tartu, W. Ostwaldi 1, 50411 Tartu, Estonia}
\affil[4]{\small Institut f\"ur Physik, Humboldt-Universit\"at zu Berlin,  12489 Berlin, Germany}
\affil[5]{\small Perimeter Institute for Theoretical Physics, Waterloo, ON N2L 2Y5, Canada}
\date{}
\begin{document}

\maketitle

\begin{abstract}

We revisit the old question of what distinguishes the formulation of spacetime geometry
in terms of a Lorentzian metric physically from more general geometric
structures, such as, e.g., general dispersion relations.
Our approach to this question is operational and leads us to also revisit the notion of local inertial frames, arising in operational formulations of the equivalence and Mach's principle, both of which can be interpreted in generalized geometries. We extend the notion of inertial laboratory frames by taking serious that all matter inside the lab is fundamentally quantum and considering how it may or may not couple to the quantum gravitational degrees of freedom generating the ambient effective spacetime structure. This revolves around the more specific question of which structures an agent inside the inertial laboratory has available to operationally define the orientation of their reference frame, an aspect on which both the equivalence and Mach's principle impose no further restrictions. We then contemplate the situation of a completely inertial laboratory, which, in terms of the quantum matter experiments inside it, is not only isolated from any matter outside it, but also from a direct coupling to effective quantum gravitational degrees of freedom. We formulate this in the form of what we shall term a {\it local Mach principle} (LMP): a local inertial laboratory has to be self-sufficient, so that an agent can only resort to relations among the quantum matter systems inside it to self-generate any reference structures relative to which to orient their frame. The transformations between different frame orientations thereby originate in the  local quantum matter structures. Combining this with dispersion relations leads to various non-trivial compatibility conditions on the spacetime structures encoded by them. This permits us to formulate additional operational assumptions under which the LMP singles out Lorentzian metric spacetimes within generalized geometries defined by dispersion relations.

\end{abstract}

\section{Introduction}

The geometry of spacetime plays a fundamental role in our understanding of physics. It defines observer reference frames, encodes the relations between their respective descriptions of the physics, and determines the causal structure, all of which are indispensable ingredients for consistently describing the operational experiences of observers in spacetime. The geometry, furthermore, provides a description of gravity, its dynamics and the coupling of all matter to spacetime. In general relativity, our current best working theory of gravity, this manifold role of  geometry  is conveniently realized in terms of a metric with Lorentzian signature. Observer reference frames, their orientation and mutual relations are realized in terms of frames, which are oriented and normalized according to the metric and related by appropriate transformations of the metric.
The causal structure is encoded in the distinction between causal (timelike and lightlike) and spatial directions, which determine when one event can influence another. The gravitational dynamics, on the other hand, is governed by the Einstein equations, while any additional matter dynamics is subject to the respective field equations, which also couple to the metric.

But why should the geometry of spacetime be described in terms of a Lorentzian metric? There exist more general geometric structures, which too could, in principle, realize the manifold role that the geometry of spacetime assumes in our understanding of physics. The viability of more general spacetime structures is, of course, contingent on whether they are not only mathematically consistent, but also in harmony with observations. This is a long debated issue that relates also with efforts exploring whether more general geometric structures could even be capable of explaining observations, such as apparent dark matter or dark energy phenomena \cite{Riess:1998cb,Peebles:2002gy,Corbelli:1999af,Clowe:2006eq}, on which general relativity seems to fail. Similarly, generalized geometric structures also arise in approaches to quantum gravity phenomenology \cite{AmelinoCamelia:2008qg,Hossenfelder:2012jw,Liberati:2013xla,Mattingly:2005re}. It is thus pertinent to better understand Lorentzian metrics in a wider context of geometric possibilities from a physical point of view. Specifically, answering the above question amounts to investigating the physical consequences of generalized geometries and to ask for {\it physical} properties that single out Lorentzian metric spacetimes within a large class of them.

There is, of course, a vast body of work on generalizations of the spacetime structure of general relativity. To name just a few classical generalizations, there are scalar-tensor-vector theories \cite{Heisenberg:2018mxx,Sotiriou:2008rp, Capozziello:2011et,Lovelock:1971yv, deRham:2014zqa}, teleparallel or Poincar\'e gauge theories of gravity \cite{Clifton:2011jh,Cho:1975dh,Ferraro:2006jd,Hehl:2012pi}, Robertson-Mansouri-Sexl gravity \cite{Robertson1949,Mansouri1977,Mansouri1977a,Mansouri1977b}, 
or Finsler and Cartan geometry \cite{Bogoslovsky:1999pp,Pfeifer:2011xi,Gielen:2012fz,Javaloyes:2018lex,Hohmann:2015pva,Hohmann:2013fca}, etc. A very general approach, which not only encompasses many classical spacetime structures, incl.\ various of the above generalizations, but also naturally connects with effective approaches to quantum gravity phenomenology \cite{AmelinoCamelia:2008qg,Liberati:2013xla,Mattingly:2005re,Gubitosi:2013rna,Barcaroli:2017gvg}, is defining spacetime structure through dispersion relations \cite{Raetzel:2010je,Rubilar:2007qm,Punzi:2007di,Barcaroli:2015xda,Schuller:2016onj}. This is the approach we shall adopt below when attempting to elucidate characterizing physical features of Lorentizan metric spacetimes.

Historically, the emergence of Lorentzian metric spacetime structure can be traced to the realization that observer models must be consistent with Maxwell's electrodynamics and especially with the symmetries of light propagation \cite{ThePrincipalOfRelativity}. In the same vein, but from a more modern point of view, one can study the constraints that physically viable matter field theories and their dynamics impose on generalized geometric structures \cite{Raetzel:2010je,Punzi:2007di,Rubilar:2007qm,Gurlebeck:2018nme,Schuller:2016onj}.

In this work, however, we will pursue an alternative strategy for seeking characterizing physical properties of Lorentzian metrics within spacetime structures defined through dispersion relations. Instead of presupposing a specific matter content and dynamics, we shall follow an operational, principles based approach, imposing conditions  on  what an observer can or cannot do in their local laboratory and studying which consequences this has for spacetime structure \cite{smolin1986nature,Hoehn:2014vua,Hoehn:2017gst,Hardy:2018kbp}. Such an operational approach has also played a pivotal role in the development of general relativity. 
The point we exploit is: two key  principles underlying general relativity do not actually imply or rely on a metric spacetime structure and can also be interpreted more generally. 

For example, the Einstein equivalence principle (EEP) states that the (in both a spatial and temporal sense) local physics in a sufficiently small freely falling frame is indistinguishable from that in an inertial frame in empty space. Usually, this is interpreted as meaning that to an observer in free fall, spacetime and the physics in it will locally look like Minkowski space. But there is no a priori reason for doing so and `empty space' can be incarnated in many different geometric structures. 

Another key impetus for general relativity was Mach's principle, which concerns what actually defines an inertial or free fall frame (as used in the EEP); essentially, it states that an inertial frame is determined with respect to all dynamical degrees of freedom of the universe \cite{Rovelli:2004tv, mercati2018shape,Barbour295}. As such, we shall term it the {\it global Mach principle} (GMP). In particular, it also entails a {\it global relationalism}: physical systems and frames are not localized and oriented relative to some absolute spatiotemporal structure, but relative to other dynamical degrees of freedom in the universe. The ensuing global relationalism has essentially lead to the gauge symmetry of general relativity, namely its diffeomorphism symmetry \cite{Rovelli:2004tv}, which acts globally on spacetime. It is clear that the GMP too, as a physical statement, does not actually refer to or entail metric structures and can be realized in generalized spacetime structures.

Our line of attack for characterizing Lorentzian metric structures among dispersion relations is the following observation. Given a free fall frame, in accordance with the EEP and GMP, these two principles entail that an observer inside it must orient their frame relative to dynamical degrees of freedom, but impose no further operational restrictions on the nature of these degrees of freedom. In particular, the GMP is silent on whether these correspond to distant systems (e.g., the fixed stars) or the physics in the vicinity (e.g., the solar system or even the physics in the local laboratory, based on local fields, like the electromagnetic one). This vast freedom and the absence of operational constraints on frame orientation are related to the fact that the conjunction of the EEP and GMP can be realized in many generalized geometries.

One of our aims is to remedy this situation by imposing further operational conditions in the form of an additional principle. We will introduce below what we call a {\it local Mach principle} (LMP), which will take serious that an observer inside a local inertial laboratory, complying with the EEP and GMP, will experience the physics inside it to be isolated from the rest of the universe, in the sense that no local matter experiments will inform the observer about the outside of the lab. As such, the LMP will require a local inertial frame to be self-sufficient and to self-generate its own reference structure from only the matter physics inside it that the observer can also, in principle, directly control. But all matter is fundamentally described by quantum theory. Hence, the LMP will imply a {\it local quantum relationalism}: physical systems in local inertial laboratories can only be localized and oriented relative to other local quantum matter degrees of freedom. This is an extension of the usual notion of local inertial laboratory, as it posits that the quantum matter physics inside is not only isolated from interactions with the matter outside, but also from net interactions with the quantum gravitational degrees of freedom, which produce the ambient effective spacetime structure.

In analogy to how the global relationalism implied by the GMP led essentially to the {\it globally} acting diffeomorphism symmetry, this local quantum relationalism will operationally constrain the {\it local} symmetry of spacetime. Indeed, if the different local frame orientations are defined relative to the quantum matter in the laboratory, the transformations between them must emerge from quantum structures. 
The crucial point is that these local symmetries emerging from quantum matter structures alone (which experience a vanishing net interaction with effective quantum gravity degrees of freedom) must be compatible with the dispersion relation defining the effective ambient spacetime structure. This non-trivial condition will lead to various operational consequences, which we will detail below, and in one formulation we will find that it singles out Lorentzian metric spacetimes. As an aside, we will also derive several new results about dispersion relations and, in particular, about when they feature local and linear symmetries.

The rest of this article is organized as follows. In sec.\ \ref{sec_oper}, we begin with very general operational considerations on an observer in a local quantum laboratory residing in some  spatiotemporal environment. We start by discussing in detail the motivations for a local quantum Mach principle and in particular discuss explicitly the assumptions which are needed for its formulation in sec.\ \ref{sec_lmp}. We state the LMP in a colloquial form which we then make precise in a technical mathematical formulation from which then deduce successively that the set of transformations among different local frame orientations must contain at least the rotation (Observation 1) or Lorentz (Observation 2) group, respectively. We illustrate the validity of Observation 2 and further background assumptions in Sec.\ \ref{SubsecExMinkowski}. In sec.\ \ref{sec_wald}, we reverse our perspective and ask how a given `local symmetry' group of spacetime acts on the quantum physics in a local laboratory. Subsequently, in sec.\ \ref{sec:relloc}, we argue operationally that the frame transformations from sec.\ \ref{sec_lmp} and the `local isometries' must be isomorphic if the LMP holds and in sec.\ \ref{sec_nolmp} we discuss some possible operational consequences of a violation of the LMP. Having clarified the consequences of a LMP we consider spacetime geometries defined by dispersion relations in sec.\ \ref{sec_finsler} which we want to confront with the LMP. We begin by technically introducing dispersion relations as Hamilton functions on the cotangent bundle of spacetime in sec.\ \ref{sec:DispHam} where we also clarify the notion of spacetime symmetries on the basis of dispersion relations. In sec.\ \ref{sec:ObsDispRleations} we define observers and their mass shell encodings, where the equivalence classes of all observers with identical mass shell encodings links observer transformations and the local symmetries of spacetimes to the operational groups defined in context of the LMP. In sec.\ \ref{sec_3g} we demonstrate how the appearing groups are related to each other before we finally conclude on which kind of dispersion relations, defined in terms of Hamilton functions, yield spacetime structures compatible with the LMP in sec.\ \ref{sec:implLMP}.  Before we conclude, we demonstrate our findings with some illustrative examples in sec.\ \ref{sec:ex}.

For better orientation, we provide a table of contents.
\tableofcontents

\section{A local Mach principle and frame orientations}\label{sec_oper}

\subsection{Motivation for a local quantum Mach principle}\label{sec_0lmp}

To put our work into context, it is worthwhile to revisit a few notions pertaining to local inertial frames and to separate three questions:
\begin{itemize}
\item[(a)] What is a local inertial frame?
\item[(b)] What determines whether a local frame is inertial or not?
\item[(c)] Given a local inertial frame, what structures does an agent inside it have available to specify the orientation of this frame?
\end{itemize}

In the sequel, we will assume a local laboratory frame to reside in an effective spacetime structure that, while emerging as some suitable large-scale (coarse-graining) limit of quantum gravity, can be treated in {\it classical} terms. In this work, we will use the term `frame' in an operational sense, i.e.\ as a genuine {\it physical} local laboratory system, as it also appears in the colloquial formulations of the EEP and GMP.  We shall not yet specify further at this stage what the effective classical spacetime structure is. 
In fact, it will be precisely our aim to impose a few operational conditions on inertial observers and to see which spacetime structures will be compatible with them. In particular, we will {\it not} assume the spacetime structure to be defined by a Lorentzian metric. Instead, we wish to formulate operational statements that characterize Lorentzian metric spacetimes within a large class of effective spatiotemporal structures that need not be defined by metrics at all.

Within this effective spacetime structure, we take the conceptual answers to these questions to be given as follows. These answers also specify some of the basic properties which we assume the effective spacetimes to feature. Subsequently, we will technically clarify these structures.\\~

\noindent\textbf{(a)} A local inertial frame is a frame in which (a version of) {\it Einstein's equivalence principle} holds. That is, colloquially, it is a 
sufficiently small (spatially and temporally local), freely falling laboratory in which the local physics is indistinguishable from that in an inertial frame in empty space. Being a freely falling frame in gravity (or an inertial frame in empty space), the net force exerted by other matter onto this frame is zero. More precisely, the net matter, (i.e.\ non-gravitational) interactions of the frame and anything inside it with the matter outside is zero; in terms of non-gravitational interactions, this frame and the physics inside it are isolated from the rest of the matter in the universe. Furthermore, being (locally) inertial, the agent inside the frame can also not detect any special directions of acceleration due to gravitational interaction with the outside. Accordingly, by local experiments with the matter inside the frame, the agent cannot detect whether there is matter outside of it.

If all this physics were classical and the spatiotemporal structure understood in terms of Lorentzian metrics, this would be just the standard interpretation of the EEP and we would have nothing new. However, here we interpret the previous paragraph much more generally, in line with our resort to effective spacetime structures. Firstly, by `empty space' we do not necessarily refer to a vacuum state in Minkowski space. For example, more generally, it could be a vacuum state in some flat Finsler spacetime \cite{Pfeifer:2014eva,Minguzzi2016}. Similarly, by free fall frame, we do not necessarily refer to a timelike geodesic in a Lorentzian metric spacetime. Instead, it could be a timelike geodesic in a curved Finsler geometry. In fact, we will not even work with Finslerian spacetime structures here, but later use dispersion relations to define spatiotemporal structures and so the latter could be even more general and we will not refer to any geodesic principle in this work.  
 
Secondly, an agent can only resort to {\it matter} experiments inside the lab to test its (locally) inertial nature. But since all matter is fundamentally quantum, we henceforth accept the following:
\begin{LUQT}
	All the matter physics in a local inertial frame is fundamentally described by unitary quantum theory.
\end{LUQT}
``Quantum theory'' here is an extremely general notion: it refers to the general framework of Hilbert spaces and operator algebras without specifying further details. As such, it encompasses (relativistic) quantum mechanics and also quantum field theory. Unitarity implements the observation above that local inertial frames are isolated from the rest of the matter in the universe: the frame and its matter content evolve according to a fixed Hamiltonian which, in particular, is not time-dependent (over the timescale in which the physics inside the lab are indistinguishable from that in flat space). Similarly to the proposal in \cite{smolin1986nature}, there are thus no dissipative quantum field effects that the agent could locally detect. Hence, inside a local inertial laboratory, the appropriate flat space quantum theory holds.

While we assumed the local frame to be sufficiently small for it to satisfy the EEP, we will also assume it to still be large enough so that it is sufficiently classical and we {\it may} (i) treat the observer inside it, for operational purposes, as a classical agent, and (ii) describe its orientation in terms of standard classical frame (tetrad) vectors on a manifold. \\~

\noindent\textbf{(b)} According to the {\it global Mach principle}, the entire dynamical content of the universe determines whether a local frame is inertial. Here, in analogy to general relativity, we assume that `spacetime tells matter how to move; matter tells spacetime how to shape'. That is, in extension of the Einstein field equations, we assume the effective dynamics of spacetime and matter to be intimately linked, so that both gravitational and matter degrees of freedom ultimately determine what free fall in these effective spacetime structures means. In particular, given that also the effective spacetime structure is assumed to emerge from some quantum gravity degrees of freedom, it is purely quantum degrees of freedom that determine whether a frame is inertial or not.\\~

\noindent\textbf{(c)} Addressing the question of what structures an agent in a local inertial frame can exploit to orient their frame is one of the main points of this work. First, we note that allowing the agent to look outside their lab to use galaxies or stars in their vicinity to determine an orientation of their frame would lead, strictly speaking, to an inconsistency. Either the local frame is no longer truly inertial because of the light interaction with the outside, or we would have to consider it in such an idealized fashion 
 that it neither back-reacts on spacetime nor that the light signals an agent receives or sends out affect the inertial nature of the frame. That is, for all practical purposes, the frame would be external to spacetime and just `painted onto' it. 

Such an idealization is incompatible with our operational approach here, which takes serious that a local inertial frame is a physical system {\it in} spacetime (see also \cite{Hoehn:2017gst}). We will henceforth also be strict about the notion of locally inertial (see (a)). What are then the structures that an internal agent can exploit to define an orientation of their frame?

As we have seen, there are no non-gravitational interactions between the matter inside and outside of the frame and the agent inside the local inertial lab is also unable to detect any gravitational effects from the matter outside on the matter inside. In that sense, despite gravitational interactions, the matter inside is isolated from that outside and so the latter cannot provide any help in orienting the frame. The only remaining physics that {\it could} then offer non-trivial structure in the local inertial laboratory, which the agent could exploit to define an orientation of the frame is the local matter quantum physics, or the effective spacetime structure inside it. But the agent cannot control the latter because, thanks to (b), it will also be degrees of freedom, incl.\ quantum gravitational ones, outside the laboratory that determine the effective spacetime structure inside it. The agent can only indirectly probe this effective spacetime structure through the quantum matter physics in their lab, which is all they can control. In that sense, the effective spacetime structure is a potential source for providing an `external' reference (due to (b) it depends on the outside of the lab) relative to which the agent might want to orient their lab.

We now make a non-trivial requirement, which we re-express in terms of an operational principle shortly: The effective spacetime structure in which the agent and their laboratory reside (and which they cannot control) corresponds to the coarse-grained large-scale limit of a special class of quantum gravity states such that the {\it net interaction} of the matter inside the laboratory with quantum gravitational degrees of freedom is zero on average. That is, we assume any {\it direct} coupling of matter to quantum gravitational degrees of freedom in these states to be washed out through renormalization; quantum gravitational degrees of freedom, other than providing the ambient spatiotemporal structure, have become irrelevant at the laboratory scales in a renormalization group sense. In consequence, the matter, while living in the effective spacetime created by quantum gravitational degrees of freedom, does not further interact with them and can be treated independently for all practical purposes. This is similar to the requirement in Jacobson's derivation of the semiclassical Einstein equations from entanglement equilibrium that there is energy conservation for the large-scale physics and thereby no dissipation or leakage into the ultraviolet physics \cite{Jacobson:2015hqa} (however, here we do not presuppose Lorentzian spacetime structures).

Note that this does {\it not} mean that we assume it to be in principle impossible for the agent to test quantum gravity proper in their lab. What it means is that the matter degrees of freedom in the laboratory, at the relevant laboratory scales, cannot get correlated through anything else than their {\it direct} interactions, which the agent can, in principle, control.\footnote{This includes their direct  (effectively classical) gravitational interaction. The agent will not be able to isolate any quantum subsystem from gravitational interactions with the remaining matter in the laboratory. But the point is that the agent can control the matter systems, move them around and in this sense, while not being able to switch it off, also control the direct gravitational interaction between them. By contrast, the agent is assumed unable to `move quantum gravitational degrees of freedom around'. } That is to say, the matter in the lab, at the scales relevant for our discussion, can{\it not} get correlated through indirect interaction via quantum gravitational degrees of freedom that generate the effective spacetime environment.

This requirement can be regarded as a strengthening of the notion of local inertial frame; the local matter inside the laboratory is not only isolated from the matter outside it, but also from direct interactions with effective quantum gravitational degrees of freedom. It is now a system that is as isolated as it gets while still residing in an effective spacetime structure. It is as if the matter in the laboratory sees an effective vacuum of {\it both} the remaining matter in the universe and quantum gravitational degrees of freedom.

In summary, in these coarse-grained quantum gravity states, {\it a local inertial laboratory must be self-sufficient}: there is {\it no} additional structure that can help an agent in defining the orientation of their frame, other than the quantum matter physics inside it that they can, in principle, directly control. We now write this in the form of an operational principle, which we term a {\it local Mach principle} (LMP) as it requires the local matter physics to generate its own reference:

\begin{LMP} 
	In a local inertial frame, an observer can exclusively use relations among the quantum matter systems that they can directly control in their lab to orient their local frame, but these relations suffice to completely specify their frame.
\end{LMP}

Just like the original GMP, this statement may be technically interpreted in various ways, and we shall specify our technical incarnation of it shortly. However, given that any local frame orientation is exclusively determined with respect to relations among local quantum matter systems, the LMP already entails that the set of transformations between different possible frame orientations must emerge from the structure and relations of matter quantum systems in the local laboratory --- and nothing else. 
We will also discuss the interpretation of possible violations of the LMP in more detail in sec.\ \ref{sec_nolmp}.

In summary, (a) is answered by the EEP, (b) by the GMP and (c) will now be answered by the LMP.

\subsection{A local quantum Mach principle: formalization and implications}\label{sec_lmp}

The local Mach principle posits that an agent can only use relations among the quantum matter systems that they can directly control in their lab to define an orientation of their local reference frame, i.e.\ to ultimately specify, e.g., what they mean by their local $x-, y-,z-$directions and standard clock. Let us now specify in more detail what this part of the LMP means and how it can be formulated in the context of local quantum theory, describing the matter in the local inertial laboratory. First, in order to talk about relations among the quantum matter systems in the laboratory, the agent must be able to subdivide the quantum matter into subsystems and there should be distinguished subsystems that the agent can access separately and relate to one another. For instance, these could be single particles or specific quantum field modes. Second, in line with our assumption that the agent can, in principle, control the matter in their laboratory, we assume them to be able to switch interactions between the matter components on and off, as long as these operations are consistent with the global Hamiltonian $\hat H$. (Recall that we have argued under (a) above that a local inertial frame evolves according to a time-independent Hamiltonian $\hat H$.) Fundamentally, this operational structure comes from the freedom of specifying local initial states (e.g., on the local operator algebra describing the agent's laboratory).

According to the LMP, the {\it only} structure that the agent now has available to define their laboratory's orientation are these quantum structures and, hence, in particular, the observable algebras or Hilbert spaces describing the most fundamental quantum matter constituents into which they are able to divide their lab. 
Defining a frame orientation means defining also a description of the said quantum subsystems and thus to actually choose an operator or Hilbert space basis for them. Now the algebras or Hilbert spaces of the most fundamental matter constituents have natural (e.g., unitary) symmetries. Without any external extra structure, no basis choice among those related by these symmetries will be operationally distinguished. We can thus anticipate that the symmetries of the most fundamental matter constituents will quantify the freedom of the agent in choosing the orientation of their frame and thereby ultimately also lead to transformations among different choices of frame orientation.

The fundamental matter degrees of freedom (e.g., single particles or specific field modes) come with momentum degrees of freedom so that the local quantum theory describing the full content of the laboratory will certainly be infinite-dimensional. However, we now make a  simplifying assumption:
\begin{ass}
For the orientation of the agent's frame, it is sufficient to restrict to \emph{discrete} matter degrees of freedom, i.e.\ to finite-dimensional subsystems of their lab. Hence, we assume that all information about spacetime orientation can be encoded in discrete degrees of freedom that, under suitable operational conditions, the agent can treat as standalone.\footnote{As in the Wigner representations, these discrete degrees of freedom might depend on the momentum mode so that this might require the agent to fix the momentum first.}
\end{ass}
While this is  clearly a convenient simplification for the subsequent discussion, we will conjecture below that this assumption is not actually essential and can ultimately be dropped without modification of the main physical implications. 

It is clear that these discrete degrees of freedom must admit a direct spacetime interpretation if they are to define an orientation in it. For example, they could encompass the spin of massive particles or helicity of photons. But they can{\it not} correspond to the energy levels of an atom. In particular, these discrete matter degrees of freedom can{\it not} be effective degrees of freedom; for instance, quantum dots also define qubit degrees of freedom, but one cannot interpret them in a spacetime sense. Hence, we will take these discrete degrees of freedom to be degrees of freedom of the most fundamental matter constituents into which the agent is able to chop up their lab. We note that in Minkowski spacetime this assumption is  satisfied: e.g., we could, in principle,  define (and also communicate) a frame orientation relative to the helicity of photons.

To build up intuition, we will now explain by means of a toy example how the LMP entails constraints on the Hamiltonian $\hat H$ under all these assumptions.

\begin{example}[Heisenberg model]
	\label{ExHeisenberg}
	Let $\vec S=(S^x,S^y,S^z)$ be spin-$1/2$ angular momentum operators. Consider the Hamiltonian of the length-$n$ spin chain
	\[
	\hat H=-J \sum_{i=1}^{n-1} \vec S_i\cdot \vec S_{i+1}- h\sum_{i=1}^n S_i^z,
	\]
	which we interpret as follows. The constant $J$ describes the interaction between adjacent spins, while $h$ describes the strength of an external magnetic field. In this model, $h$ assumes the role that the effective quantum gravity degrees of freedom take in the discussion of the previous subsection.	The global Hilbert space has well-defined subsystems, corresponding to the single-site Hilbert spaces. If $h\neq 0$, then a hypothetical observer (who is not modelled explicitly, but subject to the assumptions that we have specified above) can determine a distinguished $z$-direction, i.e.\ a distinguished observable $S_i^z$ among all local operators $A_i$. Hence, the observer can use an external reference (the magnetic field, which they cannot control) to partially orient their frame, i.e.\ to partially determine a basis in the space of observables (or equivalently, a Hilbert space basis). This contradicts the colloquial form of the LMP. (Strictly speaking, in this interpretation even the assumption that the laboratory is inertial is violated as the matter inside it has a non-vanishing interaction with the matter of the outside world.)
	
	On the other hand, if $h=0$, then this Hamiltonian conforms with the LMP: any observer can, for example, choose an arbitrary frame on the first site, $i=1$, and use the interactions between the sites to determine a frame (and Hilbert space basis or operator basis) on all other sites. But this is the best they can do; there is no ``absolute direction'' encoded into $\hat H$, which is manifest in the fact that $\hat H$ has ${\rm SU}(2)$ symmetry.
	
	But suppose that we interpret the $n$ spins as describing only \emph{part} of an inertial frame's full quantum system, such that $\hat H$ is an effective Hamiltonian of a subsystem (and the magnetic field is another quantum system in a coherent state within the laboratory), then the LMP would not be violated.

\end{example}

With this motivating example and our finite-dimensionality assumption in mind, we now state a possible formalization of the LMP in the quantum case. To state it, we will make use of the notion of a ``quantum subsystem'', which is a full matrix subalgebra $\mathcal{A}$ of the laboratory's operator algebra. Operationally, we assume that there is a well-defined way of accessing, controlling, and measuring this quantum subsystem, and we can associate a finite-dimensional Hilbert space $\mathcal{H}$ to it such that $\mathcal{A}$ is isomorphic to the set of bounded operators on that Hilbert space. Quantum subsystems $\mathcal{A}_1,\ldots,\mathcal{A}_n$ will be called \emph{disjoint} if $[A_i,A_j]=0$ for all $A_i\in\mathcal{A}_i,A_j\in\mathcal{A}_j,i\neq j$. 

We need one more ingredient to formulate the LMP. What we would like to state, among other things, is that there are quantum subsystems (e.g.\ a single site of the Heisenberg spin chain in Example~\ref{ExHeisenberg}) that have no distinguished frames whatsoever. In quantum theory, a frame will be a basis of observables, an \emph{operator basis} (or, equivalently, a basis in Hilbert space). An operator basis for a $d\times d$ matrix algebra $\mathcal{A}$ is a set of $d^2$ matrices in $\mathcal{A}$ that are linearly independent.  For example, the three Pauli matrices together with the identity constitute an operator basis in the case $d=2$. Note, however, that the latter claim is an abstract mathematical statement; for any physical subalgebra $\mathcal{A}$, a choice of basis corresponds to a choice of how to encode the physical observables into matrices. Now, claiming that literally \emph{all} frames (operator bases) are equivalent would clearly be wrong: for example, in many cases, observers can use their physical tools to set up an \emph{orthonormal} Hilbert space basis, and those would be distinguished from other bases which are not orthonormal. This motivates to introduce the following definition:

\textbf{Equivalent operator bases.} \emph{Suppose we fix a set of physical background assumptions, specifying what observers can always operationally accomplish in their lab. Then two operator bases (or, more generally, two sets\footnote{The sets may have additional structure, e.g.\ they may be ordered (as in operator bases) or have a topology (if they are open subsets).} of operators) $S$ and $S'$ will be called \emph{equivalent} {if these background assumptions} \emph{alone} do not let the observer distinguish $S$ and $S'$ operationally.}

For example, we will fix a set of background assumptions in Observation~\ref{ObsGop} below: we will say that observers can use their measurement devices to unequivocally determine the \emph{eigenvalues} of physical observables (in addition to their linear structure). Now define $S$ to be a finite set of operators that all have, for example, negative eigenvalues, and $S'$ a set of the same size with operators that all have positive eigenvalues. Then $S$ and $S'$ can be distinguished by the observer, and they are not equivalent. On the other hand, if $S$ and $S'$ are two sets of operators such that $U S U^\dagger = S'$ for some unitary $U$, then they are equivalent. In Example~\ref{ExHeisenberg}, the single-site observable $\{S_i^z\}$ would be equivalent in this sense to $\{S_i^x\}$, but both observables could be operationally distinguished if $h\neq 0$ due to the form of the Hamiltonian. This is a violation of what will be condition (i) in the following technical formulation of the LMP:
\begin{LMPt}
	For every local inertial frame, there is a finite set of disjoint finite-dimensional quantum subsystems $\mathcal{A}_1,\ldots,\mathcal{A}_n$ such that the following holds:
	\begin{itemize}
		\item[(i)] If $S$ and $S'$ are two equivalent subsets of operators on any $\mathcal{A}_i$ (for example, two equivalent operator bases), then there is \emph{no way} to distinguish $S$ operationally from $S'$.
		\item[(ii)] Specifying operator bases for all $\mathcal{A}_i$ determines operationally uniquely a choice of {operator basis} for {the full laboratory.}
	\end{itemize}
	We call the algebras $\mathcal{A}_1,\ldots,\mathcal{A}_n$ \emph{parent subsystems}, and the algebra $\mathcal{A}_1\otimes\ldots\otimes\mathcal{A}_n$ \emph{parent subalgebra}.
\end{LMPt}
In other words, item (i) posits that there is no extra physical structure that breaks the symmetry implied by the physical background assumptions. In the following, we will make two different kinds of physical background assumptions. This will lead to two different notions of ``equivalent operator bases'' and thus to different consequences of the LMP which we describe in Observations~\ref{ObsGop} and~\ref{ObsGop2}.

For what follows, we conjecture that we do not need to assume that the $\mathcal{A}_i$ are finite-dimensional; simply assuming that they are von Neumann algebra factors should be sufficient. However, to avoid technicalities, we will henceforth work with the assumption of finite-dimensionality. It is clear that every $\mathcal{A}_i$ must live on a Hilbert space that has dimension at least two (i.e.\ is at least a qubit), since otherwise the notion of operator basis of $\mathcal{A}_i$ would be trivial and the corresponding system could be disregarded.

Let us briefly illustrate the technical form of the LMP by example of the Heisenberg model, Example~\ref{ExHeisenberg}. In this case, we can choose any of the sites (for example the first one) to play the role of the parent subsystem $\mathcal{A}_1$, i.e.\ $\mathcal{A}_1$ will be the $2\times 2$ matrix subalgebra of observables on the first spin. We have $n=1$, i.e.\ there are no other distinguished subsystems that we need to pick to satisfy the LMP. Choosing an operator basis on the first spin uniquely determines operator bases on all others, as explained above. Hence, item (ii) of the LMP is satisfied. Furthermore, if $h=0$ then also item (i) is satisfied, i.e.\ the global ${\rm SU}(2)$ symmetry of the model prevents one from distinguishing any two unitarily equivalent subsets of operators on $\mathcal{A}_1$.

To see that sometimes we can have $n>1$, consider two independent, non-interacting Heisenberg spin chains. Then we have $n=2$, and $\mathcal{A}_i$ corresponds to a single-site operator algebra on the $i$th spin chain.

In comparison to the colloquial form of the LMP, item (i) expresses the fact that there is no external reference with which to orient a frame, and item (ii) expresses sufficiency: choosing frames on the parent subsystems uniquely defines a frame on the full laboratory. As Example~\ref{ExHeisenberg} illustrates, we can think of the origin of this sufficiency as ultimately coming from the fundamental interaction between the laboratory's subsystems. Without interaction, subsystems could not be compared or related. Parent subsystems must thus have the property that they interact, directly or indirectly, with all other quantum subsystems of the inertial frame.

As Example~\ref{ExHeisenberg} furthermore demonstrates, validity of the LMP implies that the system has a fundamental symmetry. To determine this symmetry in general, we have to be a bit more specific about the physical background assumptions that allow an observer to pick an operator basis, and thus to encode an observable $\hat O\in\mathcal{A}$ into some matrix, say, $\varphi(\hat O)$. As explained above, the LMP and its consequences will depend on those background assumptions, and for now these will be the following: We assume that both the notion of self-adjointness as well as the notion of linearity on the observables are physically evident --- in other words, given two physical observables $\hat O_1,\hat O_2$ and scalars $\lambda,\mu$ together with a third $\hat O$, quantum theory itself predicts that every observer will agree on whether $\hat O=\lambda\hat O_1+\mu\hat O_2$ or not. Moreover, we assume that the set of \emph{eigenvalues} of any observables $\hat O$ is directly physically accessible --- that is, the set of possible outcomes of measuring $\hat O$. Then any two mathematical descriptions $\varphi(\hat O)$ and $\varphi'(\hat O)$ of an observable $\hat O$ are related by~\cite{Jafarian}
\[
\mbox{either }\varphi'(\hat O)=U\varphi(\hat O) U^\dagger \mbox{ or }\varphi'(\hat O)=U\varphi(\hat O)^\top U^\dagger,
\]
where $U$ is some unitary.
It depends on the physical background assumptions whether we consider the transpose, $M\to M^\top$, a symmetry or not. If the background physical assumptions allow an observer to determine whether a given observable $\hat O$ is the \emph{product} of two other given observables $\hat O_1$ and $\hat O_2$, then any $\varphi$ can be chosen as an algebra homomorphism, and the transpose map is ruled out. (In other words, physics will in this case determine the parity of the observable encoding.) We will now make this assumption for simplicity, but think that it would not substantially change our conclusions if one included the transpose as a possible symmetry.

Since we can have observables that live on the tensor product of the $\mathcal{A}_i$ (or, equivalently, entangled states across the corresponding composite quantum system), and since product bases are canonical operator bases, encodings of elements of all parent algebras are related by maps of the form $U_1\otimes\ldots\otimes U_n \bullet U_1^\dagger\otimes\ldots\otimes U_n^\dagger$, with all $U_i$ unitary. Without loss of generality, we may assume that every $U_i$ has unit determinant.
\begin{observation}
	\label{ObsGop}
	If an inertial frame satisfies the conditions of the LMP, then observers always have some remaining fundamental freedom of choice of operator basis. Under the background assumptions that we have just stated, the different choices of basis are related by a fundamental symmetry group $\mathcal{G}^{\rm op}$, which is
		\[
		\mathcal{G}^{\rm op}=\left\{A \mapsto 
		U_1\otimes U_2\otimes\ldots\otimes U_n \,A \,U_1^\dagger\otimes U_2^\dagger\otimes\ldots\otimes U_n^\dagger\,\,|\,\, U_i\in{\rm SU}(n_i,\mathbb{C})
		\right\},
		\]
		where $n_i$ denotes the Hilbert space dimension of $\mathcal{A}_i$.
	In particular, since $n\geq 1$ and every $\mathcal{A}_i$ is supported on a Hilbert space of dimension $n_i\geq 2$, and since ${\rm PSU}(2)\simeq {\rm SO}(3)$, we have ${\rm SO}(3)\subseteq \mathcal{G}^{\rm op}$.
\end{observation}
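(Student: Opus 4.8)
The plan is to combine three ingredients already assembled in the text --- the rigidity of observable encodings (the Jafarian-type result), the tensor-product structure of product operator bases, and items (i)--(ii) of the technical LMP --- and then to add the elementary group-theoretic fact that conjugation by $\mathrm{SU}(2)$ realizes $\mathrm{SO}(3)$.

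First I would argue that the freedom of frame choice on a single parent subsystem $\mathcal{A}_i$ is \emph{exactly} conjugation by $\mathrm{SU}(n_i,\mathbb{C})$. Choosing an operator basis for $\mathcal{A}_i$ amounts to fixing an admissible encoding of its observables into $M_{n_i}(\mathbb{C})$; by the cited rigidity result together with the product-preservation assumption (which discards the transpose alternative, forcing the encoding to be a $*$-algebra homomorphism), any two admissible encodings differ by conjugation with a unitary, which we normalize to unit determinant. This shows the freedom is \emph{no larger} than $\mathrm{SU}(n_i)$-conjugation. That it is \emph{no smaller} --- i.e.\ that the observer really may use any frame in this orbit --- is exactly item (i) of the LMP: any two such frames are equivalent under the stated background assumptions, so the LMP forbids any operational structure distinguishing them, hence no privileged sub-orbit exists and the full orbit is available. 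Since $n_i\geq 2$, this group is nontrivial, which is the ``remaining fundamental freedom'' asserted in the statement.

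Second I would assemble the parent subsystems. By item (ii) of the LMP, fixing operator bases on $\mathcal{A}_1,\ldots,\mathcal{A}_n$ determines a frame on the whole laboratory, so the total freedom is parametrized by tuples $(U_1,\ldots,U_n)\in\mathrm{SU}(n_1)\times\cdots\times\mathrm{SU}(n_n)$; closure under composition and inversion is immediate (the $U_i$ simply multiply componentwise), so this is genuinely a group. On the parent subalgebra $\mathcal{A}_1\otimes\cdots\otimes\mathcal{A}_n$ the induced action takes the stated form $A\mapsto(U_1\otimes\cdots\otimes U_n)\,A\,(U_1^\dagger\otimes\cdots\otimes U_n^\dagger)$, because a product operator basis is the tensor product of the factor bases and conjugating each factor basis by $U_i$ conjugates the product by $U_1\otimes\cdots\otimes U_n$ --- the tensor-product observation already recorded just before the statement --- and by (ii) this extends uniquely to the rest of the lab. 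This identifies $\mathcal{G}^{\rm op}$.

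Finally, for $\mathrm{SO}(3)\subseteq\mathcal{G}^{\rm op}$ I would restrict to the subgroup of $\mathcal{G}^{\rm op}$ with $U_2=\cdots=U_n=I$ and $U_1$ ranging over an $\mathrm{SU}(2)$ subgroup of $\mathrm{SU}(n_1)$ (available since $n_1\geq 2$). The homomorphism $U_1\mapsto\big(A\mapsto(U_1\otimes I)A(U_1^\dagger\otimes I)\big)$ into $\mathcal{G}^{\rm op}$ has kernel $\{\pm I\}$: demanding $(U_1\otimes I)(B\otimes I)(U_1^\dagger\otimes I)=B\otimes I$ for all $B\in M_2(\mathbb{C})$ forces $U_1\in\mathbb{C}I$, hence $U_1=\pm I$ after the determinant normalization. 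Therefore its image is $\mathrm{PSU}(2)\cong\mathrm{SO}(3)$, giving the claim. I expect the only genuinely delicate point to be the ``no smaller'' step: one must be careful that item (i) of the LMP excludes \emph{every} conceivable operational criterion that could break the $\mathrm{SU}(n_i)$ symmetry, not just the obvious ones; everything else is either bookkeeping across the tensor factors or standard group theory.
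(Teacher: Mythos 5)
Your proposal follows essentially the same route as the paper, whose ``proof'' of Observation~1 is the discussion immediately preceding it: the Jafarian-type rigidity of linear, self-adjointness- and eigenvalue-preserving encodings yields unitary conjugation, the product-preservation assumption discards the transpose, product operator bases on the parent subalgebra give the tensor-product form $U_1\otimes\cdots\otimes U_n\,\bullet\,U_1^\dagger\otimes\cdots\otimes U_n^\dagger$ with the determinant normalized to one, and item (i) of the technical LMP guarantees that the entire orbit is operationally available rather than some privileged sub-orbit. Your ``no larger / no smaller'' split and the assembly across factors via item (ii) make this explicit but add nothing structurally different.

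The one place where your added detail introduces a genuine (though easily repaired) imprecision is the final ${\rm SO}(3)$ inclusion. Your kernel computation tests conjugation against $B\in M_2(\mathbb{C})$ and invokes the determinant normalization in dimension two, which implicitly assumes $n_1=2$. For $n_1>2$ a generic ${\rm SU}(2)$ subgroup of ${\rm SU}(n_1)$ --- e.g.\ the block-diagonal one $V\mapsto{\rm diag}(V,I_{n_1-2})$ --- meets the scalar matrices only in the identity, so the conjugation homomorphism is \emph{injective}; its image is then isomorphic to ${\rm SU}(2)$, which does \emph{not} contain ${\rm SO}(3)$ as a subgroup (a closed three-dimensional subgroup of ${\rm SU}(2)$ is all of ${\rm SU}(2)$, whose center is $\{\pm I\}$). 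The conclusion ${\rm SO}(3)\subseteq\mathcal{G}^{\rm op}$ still holds, but one must choose the embedding with care: either take the ${\rm SU}(2)$ subgroup given by an even-dimensional irreducible representation, so that $-I_2$ maps to the scalar $-I_{n_1}$ and the image under conjugation is ${\rm PSU}(2)\simeq{\rm SO}(3)$, or for $n_1\geq 3$ embed ${\rm SO}(3)\subseteq{\rm SO}(n_1)\subseteq{\rm SU}(n_1)$ via real orthogonal matrices, a copy which meets the center of ${\rm SU}(n_1)$ trivially and hence injects into ${\rm PSU}(n_1)$. With that two-line fix the argument is complete; everything else in your write-up is sound.
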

Thus far, we have assumed that the eigenvalues of any physical observable $\hat O$ are unambiguously determined by the physics, i.e.\ independent of the choice of reference frame. Intuitively, this corresponds to situations where the eigenvalues are themselves abstract, ``speakable''~\cite{MarvianSpekkens,Bartlett:2007zz} real numbers (e.g.\ probabilities as for density operators, or multiples of $\hbar/2$ for angular momentum operators), but not ``unspeakable'' physical quantities themselves (e.g.\ spatial distances), measured in arbitrary physical units. For such observables, the ``unspeakable'' eigenvalues of the mathematical descriptions $\varphi(\hat O)$ will in general depend on the choice of reference frame. The simplest instance of this is the case when the eigenvalues can be measured either in some unit $u$ or another unit $u'$ (think of miles versus meters, for example); then there is some factor $r>0$ such that $\lambda_i(\varphi'(\hat O))=r\, \lambda_i(\varphi(\hat O))$, where $\lambda_i$ denotes the $i$-th eigenvalue, namely the $r$ such that $u=r u'$.

It may at first seem as if this scaling of eigenvalues was the only freedom in describing the observable. However, we will now argue that our setup (local quantum physics in some spacetime) motivates situations with even more symmetry, and we will demonstrate in Subsection~\ref{SubsecExMinkowski} below that this kind of symmetry is in fact realized in ordinary Minkowski space: it describes measurements for which the finitely many outcomes correspond to the different possible deflections of a wave packet, as in a Stern-Gerlach device.

How would different observers (with different frames of reference) describe the eigenvalues of such a measurement? In such a setup, an eigenvalue $\lambda=0$ has a clear and direct operational interpretation: it is an outcome for which \emph{the wave packet is not deflected at all}. Whether there is any other extra structure that would single out natural descriptions of the eigenvalues depends on the physical background assumptions. We will now make the (arguably strong, but not implausible) assumption that \emph{there are no such extra structures}. That is, the only structure on the physical observables of $\mathcal{A}$ that different observers can unequivocally agree on is
\begin{itemize}
	\item[(i)] \emph{the notion of self-adjointness}: the subspace $\mathcal{A}^{\rm sa}$ of self-adjoint elements of $\mathcal{A}$ is physically distinguished, because it corresponds to those observables that can actually be measured.
	\item[(ii)] \emph{The linearity structure on $\mathcal{A}$}: it is objectively clear how to build linear combinations of physical observables. Due to quantum mechanics, this linear structure is fundamental, since the whole point of observables is to allow to assign expectation values to states (and vice versa)~\cite{BratteliRobinson}.
	\item[(iii)] \emph{The number of zero eigenvalues}: it is objectively clear whether a quantum system in the device experiences a change of state (deflection or acceleration). This means that if $\varphi$ and $\varphi'$ denote different ways to encode physical observables into mathematical descriptions, then ${\rm rank}\,\varphi(\hat O)={\rm rank}\, \varphi'(\hat O)$.
	
	This is clearly a semiclassical approximation: in general, determining whether a wavepacket has shifted or not cannot be done with unit probability. Nevertheless, we are working in a regime in which one can reliably distinguish the $n$ outcomes by observing the wavepacket, and in this case, this approximation is justified. 
\end{itemize}
Thus, $\varphi(\hat O)$ and $\varphi'(\hat O)$ are related by an invertible linear rank-preserving map on the Hermitian matrices, and this implies~\cite{Loewy} that there is some $\varepsilon\in\{-1,+1\}$ and an invertible matrix $X$ such that
\[
\mbox{either }\varphi'(\hat O)=\varepsilon\, X\varphi(\hat O)X^\dagger \mbox{ or }\varphi'(\hat O)=\varepsilon\, X\varphi(\hat O)^\top X^\dagger.
\]
At first sight it may seem puzzling that two descriptions of the same observable are related by conjugation with a map $X$ that is \emph{not unitary.} If we apply this map to states, doesn't this mean that the total probability is not preserved? The puzzle can be resolved by acknowledging that $X$ is in general a map from a Hilbert space $\mathcal{H}$ to \emph{another} Hilbert space $\mathcal{H}'$. As linear spaces, both are equal to $\mathbb{C}^n$, but they carry different inner products. Consequently, $X$ is an isometry, which is allowed by the textbook axioms of quantum mechanics. If $\mathcal{H}=\mathcal{H}'$ then $X$ a unitary matrix. This will be explained in more formal detail for the concrete case of spin-$1/2$ particles in Minkowski space in Subsection~\ref{SubsecExMinkowski} below.

We can now reconsider the LMP under these modified physical background assumptions: Hilbert-Schmidt orthogonality will not be of use any more to single out distinguished descriptions of observables, and neither will the eigenvalues (except for the zero eigenvalues). Thus, Observation~\ref{ObsGop} becomes modified and gets replaced by the following:
\begin{observation}
	\label{ObsGop2}
	If an inertial frame satisfies the conditions of the LMP \emph{under the modified physical background assumptions {(i), (ii) and (iii)} above}, then the observer's choices of operator basis are related by the fundamental symmetry group
	\[
	\mathcal{G}^{\rm op}\simeq \left\{\left.
	A\mapsto \lambda(Y_1\otimes\ldots\otimes Y_n)\tau(A)(Y_1^\dagger\otimes\ldots\otimes Y_n^\dagger)\,\,\right|\,\, Y_i\in {\rm SL}(n_i,\mathbb{C}),\, \tau(A)\in\{A,A^\top \},\,\, \lambda\in\mathbb{R}\setminus\{0\}
	\right\},
	\]
	where $n_i$ denotes the Hilbert space dimension of $\mathcal{A}_i$. In particular, since $n\geq 1$ and every $\mathcal{A}_i$ is supported on a Hilbert space of dimension at least two, $\mathcal{G}^{\rm op}$ contains at least the subgroup generated by the maps $A\mapsto \pm r Y A Y^\dagger$ and $A\mapsto \pm r Y A^\top Y^\dagger$, where $r>0$ and $Y\in{\rm SL}(2,\mathbb{C})$. Note that the group of maps $A\mapsto YAY^\dagger$ is ${\rm PSL}(2,\mathbb{C})$, and ${\rm PSL}(2,\mathbb{C})\simeq {\rm SO}^+(3,1)$ for the proper orthochronous Lorentz group. The overall sign and the transpose generate space and time inversions, extending this group to the full Lorentz group ${\rm O}(3,1)$. In summary, we have $\mathbb{R}_+\times {\rm O}(3,1)\subseteq \mathcal{G}^{\rm op}$.
\end{observation}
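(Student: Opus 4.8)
The plan is to rerun the three steps behind Observation~\ref{ObsGop} --- reduce to the parent subsystems via the LMP, classify the admissible re-encodings on each parent subsystem by a rigidity theorem, and reassemble --- but now with the weaker list of operationally accessible invariants (i)--(iii), so that the rank-preservation theorem of~\cite{Loewy} replaces the spectrum-preservation theorem of~\cite{Jafarian}. Recall that a ``frame'' is an encoding $\varphi$ of the laboratory's observables into matrices and that $\mathcal{G}^{\rm op}$ is the group of transition maps between admissible encodings. The first step is to use condition~(ii) of the LMP to reduce to the parent subalgebra: an encoding of the whole lab is fixed once one fixes encodings $\varphi_i$ of the parent subsystems $\mathcal{A}_i\cong M_{n_i}(\mathbb{C})$, with the induced encoding of $\mathcal{A}_1\otimes\ldots\otimes\mathcal{A}_n$ being the tensor product $\varphi_1\otimes\ldots\otimes\varphi_n$ (product bases being the canonical induced bases). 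So it suffices to describe the admissible re-encodings of each $\mathcal{A}_i$ and how they combine on the product.

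On a single $\mathcal{A}_i$, condition~(i) of the LMP together with the background assumptions says that the only structure two observers agree on is the real-linear structure, self-adjointness, and the rank (number of non-zero eigenvalues) of each observable. Hence any two admissible encodings are related by an invertible real-linear rank-preserving map of Hermitian matrices, and by~\cite{Loewy} such a map is $\varphi_i\mapsto\varepsilon_i X_i\,\tau_i(\varphi_i)\,X_i^\dagger$ with $X_i\in\mathrm{GL}(n_i,\mathbb{C})$, $\varepsilon_i\in\{\pm1\}$, $\tau_i\in\{\mathrm{id},\top\}$; conversely each such map is admissible, since by condition~(i) nothing distinguishes the resulting bases. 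Writing $X_i=c_iY_i$ with $Y_i\in\mathrm{SL}(n_i,\mathbb{C})$ recasts this as $\varphi_i\mapsto\lambda_i\,Y_i\,\tau_i(\varphi_i)\,Y_i^\dagger$ with $\lambda_i=\varepsilon_i|c_i|^2\in\mathbb{R}\setminus\{0\}$, and $|\lambda_i|=|\det X_i|^{2/n_i}$ ranges over all of $\mathbb{R}_+$. Assembling via the product basis, the induced transition on the parent subalgebra is $\bigotimes_i\bigl(A\mapsto\lambda_i Y_i\tau_i(A)Y_i^\dagger\bigr)$.

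The main obstacle is the reassembly: one must show that the parities $\tau_i$ cannot be chosen independently per subsystem. If they are not all equal, the induced map contains a \emph{partial} transpose, which is not rank-preserving (it sends, e.g., a rank-one maximally entangled projector to a full-rank operator) and so violates assumption~(iii) applied to the parent subalgebra itself; hence consistency of the LMP under these background assumptions forces the parity to be a single global datum, $\tau_i\equiv\tau$. Equivalently, applying~\cite{Loewy} directly to the parent subalgebra $M_N(\mathbb{C})$ with $N=\prod_i n_i$ gives transitions of the form $A\mapsto\varepsilon\,X\,\tau(A)\,X^\dagger$ with one global $\tau$, and compatibility with the product-basis structure of~(ii) forces $X$ to factorize, up to a scalar, as $Y_1\otimes\ldots\otimes Y_n$; making this factorization step rigorous is the point that needs the most care. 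Using $(A\otimes B)^\top=A^\top\otimes B^\top$ and absorbing all signs and positive scalars into a single $\lambda\in\mathbb{R}\setminus\{0\}$, the admissible transitions become exactly $A\mapsto\lambda\,(Y_1\otimes\ldots\otimes Y_n)\,\tau(A)\,(Y_1^\dagger\otimes\ldots\otimes Y_n^\dagger)$ with $Y_i\in\mathrm{SL}(n_i,\mathbb{C})$, which is the asserted description of $\mathcal{G}^{\rm op}$.

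For the containment $\mathbb{R}_+\times\mathrm{O}(3,1)\subseteq\mathcal{G}^{\rm op}$ I would specialize to the minimal case $n=1$, $n_1=2$; the general $\mathcal{G}^{\rm op}$ contains the corresponding subgroup via an embedding $\mathrm{SL}(2,\mathbb{C})\hookrightarrow\mathrm{SL}(n_1,\mathbb{C})$. On the real four-dimensional space of Hermitian $2\times2$ matrices, written $H=x^0 I+\vec x\cdot\vec\sigma$ with $\det H=(x^0)^2-|\vec x|^2$, the maps $H\mapsto YHY^\dagger$ with $Y\in\mathrm{SL}(2,\mathbb{C})$ preserve $\det H$, have kernel $\{\pm I\}$, and realize $\mathrm{PSL}(2,\mathbb{C})\simeq\mathrm{SO}^+(3,1)$; the positive scalar $r$ supplies the dilation subgroup $\mathbb{R}_+$; the overall sign supplies $-\mathrm{Id}=PT$ (with $\mathrm{Id}$ the identity on Minkowski space); and $H\mapsto H^\top=\overline H$ is the spatial reflection $x^2\mapsto-x^2$. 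Since $\{\pm\mathrm{Id}\}\cdot\mathrm{SO}^+(3,1)=\mathrm{SO}(3,1)$, adjoining the reflection yields all of $\mathrm{O}(3,1)$, while the dilations commute with $\mathrm{O}(3,1)$ and meet it trivially; this gives $\mathbb{R}_+\times\mathrm{O}(3,1)\subseteq\mathcal{G}^{\rm op}$, as desired.
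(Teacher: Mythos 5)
Your proposal is correct and follows essentially the same route as the paper: apply the rank-preserving-map classification of~\cite{Loewy} on each parent subsystem, rule out independent per-subsystem transpositions because a partial transpose fails rank-preservation on the product (the paper asserts this; you supply the explicit entangled-projector witness), normalize determinants to pass to ${\rm SL}(n_i,\mathbb{C})$ with a residual real scalar, and identify the $n_1=2$ case with $\mathbb{R}_+\times{\rm O}(3,1)$ via the standard $H=x^0 I+\vec x\cdot\vec\sigma$ parametrization. The factorization step you flag as delicate is likewise left implicit in the paper, which builds $\mathcal{G}^{\rm op}$ from product bases rather than factorizing a global map.
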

{To see this, note again that a natural choice of operator basis on the product of the $\mathcal{A}_i$ is given by a product of operator basis on each algebra. For every $\mathcal{A}_i$, all such choices are related by either $A_i\mapsto \varepsilon_i X_i A_i X_i^\dagger$ or $A_i\mapsto \varepsilon_i X_i A_i^\top X_i^\dagger$, where $\varepsilon_i\in\{-1,+1\}$ and $\det X_i\neq 0$ (here $i$ labels subsystems, not basis elements). Since the full product transformation must itself satisfy (i), (ii) and (iii), we cannot have transpositions on only some of the $\mathcal{A}_i$, but we must have either $\varphi'(\hat O)=\varepsilon(X_1\otimes\ldots\otimes X_n)\varphi(\hat O)(X_1^\dagger\otimes\ldots\otimes X_n^\dagger)$ or $\varphi'(\hat O)=\varepsilon(X_1\otimes\ldots\otimes X_n)\varphi(\hat O)^\top(X_1^\dagger\otimes\ldots\otimes X_n^\dagger)$. In the case without the transpose, define $r_i>0$ and $\theta_i\in[0,2\pi)$ via $\det X_i=r_i e^{i\theta_i}$, and set $Y_i:=r_i^{-1/d_i} e^{-i \theta_i/d_i}X_i$, where $X_i$ are $d_i\times d_i$ matrices. Then $\det Y_i=1$ and $\varphi'(\hat O)=\varepsilon r(Y_1\otimes\ldots\otimes Y_n)\varphi(\hat O)(Y_1^\dagger\otimes\ldots\otimes Y_n^\dagger)$, with $r=r_1^{2/d_1}\cdot\ldots\cdot r_n^{2/d_n}>0$, $\varepsilon\in\{-1,+1\}$ and $Y_i\in{\rm SL}(\mathcal{A}_i)$. The case with the transposition is analogous.
	
	The physical background assumptions (i)---(iii), and also the result of Observation~\ref{ObsGop2}, are in fact physically realized by spin observables in Minkowski space. This is explained in detail in the next subsection.

\subsection{Example: spin-$1/2$ particles in Minkowski space}
\label{SubsecExMinkowski}
\textit{\textbf{Hint for impatient readers:} This subsection may be skipped on a first reading --- its content is not essential for the main flow of arguments.} However, it demonstrates the physical significance of Observations~\ref{ObsGop} and \ref{ObsGop2}.\\

Let us see how the technical form of the LMP, under the two different sets of physical background assumptions, is realized within relativistic quantum mechanics in Minkowski space.

\textbf{a)} We start with the \textbf{physical background assumptions that lead to Observation~\ref{ObsGop}}: observers will in general disagree on a basis in Hilbert space (if they have not set up a joint reference frame beforehand), but they will \emph{agree} on the eigenvalues of observables. In the case of relativistic quantum mechanics, subsystems of the laboratory Hilbert space will correspond to particles with wavefunctions
\[
   \psi(\vec p, t,\sigma),
\]
where $\vec p$ is the momentum, $t$ is time, and $\sigma=-s,-s+1,\ldots,s-1,s$, where $s$ is the spin quantum number. Specifying the technical form of the LMP, it turns out that we have $n=1$, i.e.\ a single parent subsystem $\mathcal{A}_1$. Namely, consider the spin degree of freedom of an arbitrary massive single spin-$1/2$ particle, such as an electron, and let $\mathcal{A}_1$ be the corresponding observable algebra (which is isomorphic to the $2\times 2$ matrices). We assume that we can operationally access these observables, for example by approximating the particle's state of motion by a momentum eigenstate, and by transforming into the reference frame in which the particle is at rest. Operationally, this amounts to shooting the particle into a Stern-Gerlach device, and measuring $\mathcal{A}_1$ observables by observing the deflection of its wave packet in a magnetic field that is set externally into a specific direction of inhomogeneity. The two possible deflections are interpreted as yielding outcome eigenvalues $\pm \hbar/2$, and all observers agree on these two values.

A choice of operator basis amounts to specifying a set of observables $X,Y,Z$, i.e.\ of the three Pauli matrices -- which is equivalent to specifying a spatial frame of reference, defining three spatial directions $x,y,z$. This obviously implements the ${\rm SO}(3)$-symmetry that we have found in Theorem~\ref{thm_gopgor}, and does so by conjugation with ${\rm SU}(2)$-elements, $U\bullet U^\dagger$. Operationally, once we have specified $x,y,z$-directions, this gives us operator bases for the internal degrees of freedom of all other spin-$s$ particles -- since it gives the observable $S_{\vec n}$, the ``spin in direction $\vec n$'', an unequivocal meaning. This specifies a corresponding operator basis since the $S_{\vec n}$ (and their eigenprojectors) are tomographically complete on the corresponding finite-dimensional Hilbert spaces; for some more details on this, see \cite{Hoehn:2014vua}.

Similarly, observers that agree on $x,y,z$ directions will then agree on eigenstates of the momentum operator like $|p_x\rangle$. This is because they will all agree on when a particle moves in $x$-direction, and our background assumptions also tell us that they agree on the eigenvalues of the momentum operator -- that is, on the unit with which they measure momentum. Therefore, if only they agree on the operator basis for a single spin-$1/2$ particle, they automatically agree on operator bases for the complete Hilbert spaces of all particles, and thus of all of their laboratory.

Geometrically, we can interpret this situation as saying that the reference frames of all observers are at rest with respect to each other (and with respect to the Stern-Gerlach devices that are used to implement the $\mathcal{A}_1$-measurements) --- or, equivalently, that they all have agreed on a common time variable $t$. This breaks the Lorentz symmetry of Minkowski space down to rotational symmetry.\\

\textbf{b)} What changes if we turn to our \textbf{second set of physical background assumptions, (i) -- (iii), leading to Observation~\ref{ObsGop2}}? This describes a situation where different observers may disagree on the eigenvalues of observables, and the easiest way to imagine such a situation is by having eigenvalues which themselves carry a spacetime interpretation. Reconsider the observables $\mathcal{A}_1$ of the internal degree of freedom of a spin-$1/2$ particle. But now, let us regard the two outcomes of a Stern-Gerlach device as encoding additional information rather than ``spin up'' or ``down'': namely, we scale the spin observables such that, in the particle's rest frame, the two outcomes carry eigenvalues $\pm |\vec G|$, where $\vec G\in\mathbb{R}^3$ denotes the particle's acceleration due to the inhomogeneity of the magnetic field. For one of the spin eigenstates, the particle will experience an acceleration parallel to the direction of inhomogeneity, and for the other eigenstate it will be antiparallel. The eigenvalues therefore encode additional geometric physical information on the post-measurement state of the particle.

For two observers whose frames of reference are rotated with respect to each other, the accelerations $\vec G$ and $\vec G'$ that they assign will be related by the corresponding rotation $R$, i.e.\ $\vec G'=R\vec G$, and hence their descriptions of the corresponding observables $\hat S$ and $\hat S'$ will differ by a unitary, $\hat S'=U_R \hat S U_R^\dagger$, where $U_R$ is the spin-$1/2$ representation of the rotation $R$. In particular, both observers will agree on the eigenvalues of the spin observable (for both these are $\pm\hbar/2$).

But now a new possibility arises: namely, that one observer is not only rotated, but \emph{boosted} with respect to the other. What happens in such a case? If $\Lambda$ is the Lorentz transformation that transform from the particle's rest frame the the other observer's rest frame, then the purely spatial acceleration 4-vector $G=(0,\vec G)$ will be transformed into $G'=(G'_0,\vec G')$, where $G'_\nu = \Lambda_\nu^{\enspace \mu}G_\mu$.

Physically, the Stern-Gerlach experiment will still constitute a valid experiment for the boosted observer, one with two possible outcomes in a unit of acceleration. Which observable $\hat S'$ will this observer use to describe the measurement? There is only one possible choice that amounts to a meaningful transformation rule, namely that $\hat S'=X\hat S X^\dagger$, where $X$ is the corresponding ${\rm SL}(2,\mathbb{C})$--representation\footnote{Whether this is a left-handed or right-handed spinor is irrelevant for this argument.} of the boost $\Lambda$. Since $X$ is not unitary, this transformation will change the eigenvalues of the spin observable. A simple calculation shows that the old eigenvalues $\lambda_\pm = \pm |\vec G|$ change to $\lambda'_\pm=G'_0\pm|\vec G'|$. That is, the boosted observer will in general see the two wave packets (corresponding to the two spin eigenstates) to be accelerated \emph{asymmetrically}; this is explained in more detail in \cite{Hoehn:2014vua}. All this is compatible with the prediction of Theorem~\ref{thm_gopgor}, namely, that we obtain an ${\rm SO}^+(3,1)$-symmetry which is represented by ${\rm SL}(2,\mathbb{C})$ conjugation.

In this case, a choice of operator basis for the spin-$1/2$ internal degree of freedom $\mathcal{A}_1$ amounts to fixing not only a spatial, but a \emph{spacetime} frame of reference $(x,y,z,t)$. Similarly as in part a) of this example, this determines operator bases for all degrees of freedom of the laboratory.\\

How can a transformation $\hat S\mapsto X\hat S X^\dagger$ be quantum-mechanically meaningful if $X\in{\rm SL}(2,\mathbb{C})$ is not a unitary matrix? There are three complementary ways to see how this fits into the standard quantum formalism. We can consider the Hilbert space $\mathbb{C}^2$ of $\mathcal{A}_1$ as implicitly carrying structure that depends on the observer's particular 4-momentum $p$ (therefore being denoted $\mathcal{H}_p$), and $X$ as a map from $\mathcal{H}_p$ to $\mathcal{H}_{\Lambda p}$, where $\Lambda$ is the Lorentz transformation that corresponds to $X$. If $p=\Lambda p$, then $X$ will be a unitary matrix that represents the corresponding rotation from the little group, preserving the Hilbert space. But if $\Lambda p\neq p$, then $X$ will formally be an \emph{isometry}, mapping from one Hilbert space to another. Both $\mathcal{H}_p$ and $\mathcal{H}_{\Lambda p}$ are $\mathbb{C}^2$ as vector spaces, but they carry different inner products such that the resulting map $X$ is an isometry. (The resulting inner product on $\mathcal{H}_{\Lambda p}$ will be given by $\langle\psi|(XX^\dagger)^{-1}|\varphi\rangle$; for more details on this, see \cite{Hoehn:2014vua}). This formalism has first been derived (though in different notation) via a WKB approximation in \cite{Palmer:2011bt,Palmer:2013lva}.

A complementary but equivalent point of view acknowledges that the elements of $\mathcal{A}_1$ are not ordinary scalar matrices, but \emph{matrices with physical entries that carry the unit of acceleration}, $m/s^2$. But then, if we take $A,B\in\mathcal{A}_1$, their matrix product $AB$ carries the unit $m^2/s^4$ and is not an element of $\mathcal{A}_1$ any more. That is, $\mathcal{A}_1$ is not an operator algebra in the strict sense, and we have to explicitly define an additional multiplication structure on it. (It does, however, carry a natural linear structure, a notion of self-adjointness, a notion of rank, and a cone of positive semidefinite elements.) Now, choosing a normalization in Hilbert space is equivalent to specifying the pure states as rank-one projectors $P$ in the space of observables, i.e.\ the self-adjoint rank-one elements with $P^2=P$. To do so, we have to define a multiplication on the linear space $\mathcal{A}_1$. This can be done by demanding that for $p=(p_\mu)=(E/c,0,0,0)$, the multiplication in $\mathcal{A}_1$ is simply matrix multiplication, while for $p':=\Lambda p$ it is given by $A\cdot B:=A(XX^\dagger)^{-1}B$, where $X\bullet X^\dagger$ represents $\Lambda$ via ${\rm SL}(2,\mathbb{C})$-conjugation. Again, $\mathcal{A}_1$ carries momentum-dependent extra structure that leads to consistency with textbook quantum mechanics.

Third, note that we are considering the transformation of a description of a \emph{discrete} subsystem of the local laboratory. While the specific description of this transformation, when writing down concrete matrices, can look like a non-unitary matrix, its impact on the \emph{total} operator algebra of the full laboratory can (and will) still be unitary. By considering the total lab, it will turn out that the transformations like $X\bullet X^\dagger$ map (as just explained) the subsystem effectively to another subsystem, which corresponds to an isometry, compatible with global unitarity. {Note that the eigenvalues (which can change under such transformations) are the result of a semiclassical approximation, as explained in (iii) above. Indeed, this comes back to the finite-dimensionality assumption at the beginning of this subsection, where we assumed the agent to be able to isolate, \emph{under suitable operational conditions}, the discrete degrees of freedom from, e.g.\ the momentum modes. This separation might depend on the momentum and here we effectively assume the momentum states to be semiclassical. The transformation $X\bullet X^\dagger$ would then effectively map a discrete subsystem with fixed semiclassical momentum to another discrete subsystem with semiclassical momentum. This underscores the non-fundamental nature of non-unitarity in this formalism. 

In this example, we have worked in the setting of relativistic quantum mechanics, and it would be interesting to generalize these considerations to quantum field theory. We conjecture that the main insights of our analysis remain valid, but that interesting new aspects come into play in particular due to the existence of gauge fields. We leave this extension to future work.

\subsection{Spacetime frame transformations act on state spaces and observables}\label{sec_wald}

It is a priori not clear what the relation of $\cg^{\rm op}$ is to geometric frame transformations that come from the spacetime structure itself. It is, however, clear that there should be {\it some} relation because the local frame transformations originating in the spacetime structure itself precisely translate from one local description of physics to another and this is also what $\cg^{\rm op}$ does. We will now make this relation step by step more precise.

For a moment we reverse the perspective and focus on local spacetime transformations and how they act on quantum systems. Indeed, a general argument by Wald \cite{wald} implies that the isometry group of a spacetime should have a natural action on the states of a theory defined on that spacetime. Given that we neither intend to restrict to spacetimes with symmetry nor to metric spacetimes and, in any case, focus here on relations of frame orientations in a local laboratory, we shall now adapt this argument to local {\it internal frame orientation preserving transformations} of generalized geometries, which we define shortly. 

Let spacetime be given by a manifold $\cm$ and a spacetime structure $H$ defined on it, $(\cm,H)$. The spacetime structure $H$ need not be a metric, nor any other structure canonically defining all distances within $\cm$, which is why for now we do not call it a geometry. Instead, we  only require it to define the minimal structure necessary for referring to observer frames and their relations. We will specify this minimal structure in the course of the discussion below. Later, we will restrict $H$ to be a dispersion relation, i.e.\ essentially a Hamiltonian function, which is why we denote it here with an $H$. Hence, $(\cm,H)$ will be a spacetime with a dispersion relation defining the spacetime structure. Note that, for example, in Minkowski space, the dispersion relation is indeed equivalent to the metric. The language of dispersion relations will be more convenient for our operational purposes.

Consider now a family of observers at an event $x\in\cm$. Geometrically, each observer is characterized by a frame, i.e.\ a set of vectors $\{e_\ca\}$ in $T_x\cm$, where $\ca=0,1,2,3$ denotes frame indices. (Equivalently, the observer is characterized by its co-frame in $T_x^*\cm$, which will later be more convenient when working with dispersion relations.) We will not require  $H$ to provide orthonormality conditions for these frames since this is operationally not necessary; our only conditions shall be (i) that $e_0$ is tangent to the worldline of the observer, indicating their direction of time, and (ii) that $e_0$, through $H$, also defines what `spatial' directions for this observer are and that the $e_A$, $A=1,2,3$, span this set. (Capital latin letters will thus be used to indicate spatial frame vectors.) For the moment, we will not specify further what exactly `spatial' means, but, in some cases, we may think of spatial directions as being tangent to initial data surfaces.\footnote{E.g., if $H$ is a dispersion relation, this is the case if it is hyperbolic \cite{Raetzel:2010je}, but not in general.}  The $e_A$ describe the observer's and, in particular, their measurement apparatuses' spatial orientation. Just like in 
special and general relativity, we also assume that two frames $\{e_\ca\}$ and $\{e'_\cb\}$ with non-aligned worldline tangent vectors $e_0\nsim e'_0$ do {\it not} see the same space so that $\{e'_B\}$ does not lie in the span of $\{e_A\}$.

Suppose now there is a physical system with state space  
$\Sigma$ in an infinitesimal neighbourhood $U_x$ of $x$ (e.g., the local laboratory or a subsystem of it). For instance, this could be a local quantum system, described by an observable algebra $\ca_{U_x}$, ascribed to $U_x$, with $\Sigma$ a Hilbert space to represent $\ca_{U_x}$ as in sec.\ \ref{sec_lmp}. In line with our assumption of finite-dimensional quantum systems in sec.\ \ref{sec_lmp}, we shall assume that any state $\sigma\in\Sigma$ is uniquely characterized by $k$ real numbers which correspond to possible outcomes of a complete set of measurements (for quantum systems, these numbers could also be probabilities). That is, for every observer frame $\{e_\ca\}$, we obtain a map $f_e:\Sigma\rightarrow\mathbb{R}^k$ that assigns to each state $\sigma$ the $k$ outcomes of a choice of a complete set of measurements whose apparatuses are oriented according to the $\{e_A\}$.  Hence, a different observer $\{e'_\ca\}$ will define a different map $f_{e'}:\Sigma\rightarrow\mathbb{R}^k$, as measurement outcomes may depend on the state of motion and the orientation of apparatuses. Accordingly, $f_e(\sigma)\neq f_{e'}(\sigma)$ is possible, depending on the physical situation. Notice that these maps are actually associated with the neighbourhood $U_x$, however, for notational simplicity we drop a reference to it.

Given different frames, it will be crucial to consider how the descriptions of the physics with respect to different frame choices may be related. To this end, we require the spacetime structure $H$ to provide sufficient structure for an observer to speak about an operationally meaningful orientation of their frame $\{e_\ca\}$. For the moment, we will not specify the exact meaning of such a frame's orientation further and just assume $H$ is capable of defining it. 
Consider two frames $\{e_\ca(x)\},\{e'_\ca(x')\}$ for now at two distinct events $x,x'\in\cm$. We will say that they have {\it the same operational orientation} if they are oriented identically {relative} to all operationally accessible structures in $U_x$ and $U_{x'}$, respectively, which are ``speakable''. That is, they have the same operational orientation if they are oriented identically with respect to all structures that observers, firstly, have operationally access to in their local laboratories and, secondly, could communicate to one another through classical communication. For example, in empty Minkowski spacetime where $H$ would be (equivalent to) the Minkowski metric, the only ``speakable'' operationally accessible structure would be the local light cone and the {\it relative} orientation and length of the local frame vectors, but not their actual length because the choice of units cannot be communicated without reference to any shared physical system. Hence, the most different observers in (closed laboratories in) empty Minkowski spacetime could agree upon by classical communication is that, up to a choice of units, they locally employ orthonormal frames. As we will see, in spacetimes with dispersion relation, more general structures will be permitted. The reason we make this assumption on $H$ is so that we can meaningfully state whether the orientation of the measurement apparatuses in frame $\{e'_\ca\}$ relative to all ``speakable'' local structure is the same as of those in frame $\{e_\ca\}$. We shall exploit this to formulate a local covariance of physics momentarily.

To this end, consider the set of all frames at $x\in\cm$ that have the same operational orientation as $\{e_\ca\}$. This will be a subset of the space of all frames at $x$\footnote{This is the space of all ordered bases of $T_x\cm$ and thus not a vector space but a subset of four copies of $T_x\cm$.} and defines an equivalence class of local frames. Denote by ${\Phi}^{\rm or}$ the set of all transformations on the space of frames at $x$ that leaves this equivalence class invariant. That is, $\{e_\ca\}$ and $\{\phi(e_\ca)\}$ have the same operational orientation if $\phi\in\Phi^{\rm or}$. As such, $\Phi^{\rm or}$ will be a group and we shall refer to it as the {\it group of internal frame orientation preserving transformations} since the operational orientation does not require external structures and is thereby truly an {\it internal} orientation of the laboratory. 
In principle, $\Phi^{\rm or}$ could depend on the equivalence class of operational frame orientations. However, we assume this not to be the case and will also find this assumption to be satisfied in the spacetimes with dispersion relation considered in sec.\ \ref{sec_finsler}. Clearly, in Lorentzian spacetimes $\Phi^{\rm or}$ would be the Lorentz group.

We now resort to this structure to require a {\it covariance of the local physics}: Every physically possible result of a set of measurements performed by $\{e_\ca\}$ shall also be a physically possible result of a set of measurements conducted by $\{e'_\ca\}$, {\it provided} $\{e'_\ca\}=\{\phi(e_\ca)\}$ for $\phi\in\Phi^{\rm or}$ so that the orientation of each of the complete sets of measurement apparatuses is the same in both frames with respect to all ``speakable'' structure. That is, these two frame choices truly cannot be distinguished using ``speakable'' information only. 
Then clearly there must exist a $\sigma'\in\Sigma$ so that
$f_e(\sigma)= f_{e'}(\sigma')=f_{\phi(e)}(\sigma')$. Hence, every such orientation preserving  $\phi$ defines a map $\tilde \phi:\Sigma\rightarrow\Sigma$, so that frame $\{e_\ca\}$ describes the state $\sigma$ in the same way in which frame $\{e'_\ca\}$ describes $\tilde \phi(\sigma)$. The orientation preserving frame transformations $\Phi^{\rm or}$ thus have an action on the state space $\Sigma$. 

Finally, denote the abstract group, which is isomorphic to the frame orientation preserving transformations $\Phi^{\rm or}$, by $\cg^{\rm or}$ and by $\phi_g\in\Phi^{\rm or}$ the frame transformation corresponding to $g\in\cg^{\rm or}$. We write $\tilde\phi_g:\Sigma\rightarrow\Sigma$ for the corresponding map on state spaces and our argument above implies that we have a group homomorphism
\ba
\tilde\phi_{g_1}\circ\tilde\phi_{g_2}=\tilde\phi_{g_1g_2},
\ea
in analogy to the discussion in \cite{wald}.

Thus far, and for later purpose, we have been very general and focused on the action of $\cg^{\rm or}$ on state spaces. Let us now connect this discussion with the observable algebras of quantum systems in sec.\ \ref{sec_lmp}. From the above discussion it is evident that, conversely, $\cg^{\rm or}$ also has an action on the complete sets of measurements, described by the maps $f_e$. In particular, for some quantum subsystem of the laboratory in sec.\ \ref{sec_lmp}, this map will be associated with a basis of the associated set of self-adjoint observables $\ca^{\rm sa}$. This basis will correspond to the complete set of measurement devices that are oriented according to the frame vectors $\{e_\cb\}$ above. As such, it is clear that $\cg^{\rm or}$ will have some action on $\ca^{\rm sa}$ because after the frame transformation, the measurement devices will be oriented differently and thereby correspond to a different observable basis in $\ca^{\rm sa}$. For $g\in\cg^{\rm or}$ we denote this action on $\hat{O}\in\ca^{\rm sa}$ as $T_g(\hat{O})$. It is also clear that this defines a group homomorphism
\ba
T_{g_1}\circ T_{g_2}=T_{g_1g_2}\,.\label{homo}
\ea

More importantly, $\cg^{\rm or}$ will act on the mathematical descriptions $\varphi$ of the observables, each of which associates to each element in $\ca^{\rm sa}$ a matrix representation. Indeed, denote by $\varphi_e$ the description of $\ca^{\rm sa}$ relative to frame $\{e_\ca\}$. 
Then the local covariance of physics above can equivalently be read as the condition that
\ba
\varphi_e(\hat{O})=\varphi_{\phi_g(e)}(T_g(\hat{O})),\q\q\q\q\forall\,\hat{O}\in\ca^{\rm sa}\,.\label{encodact}
\ea
In other words, the mathematical description of an observable corresponding to some measurement device, which is oriented in some specific way with respect to $\{e_\ca\}$, is identical to that of the transformed observable, which corresponds to an identical measurement device, which is oriented in the same way, but relative to the transformed frame $\{\phi_g(e_\ca)\}$.

\subsection{Relation between spacetime frame transformations and operational group}
\label{sec:relloc}

We thus have two groups, the operational group $\cg^{\rm op}$, which arises from the observation in sec.\ \ref{sec_lmp} that all local frames are fundamentally made up of quantum systems, and the group $\cg^{\rm or}$ of internal frame orientation preserving transformations, which originates in the spacetime properties of local frames defined through $H$ in sec.\ \ref{sec_wald}. Both groups, by construction, relate different local frame orientations -- and thereby the respective descriptions of the local physics -- and both act on quantum state spaces and observables. It is hence pertinent to inquire about the relation of $\cg^{\rm op}$ and $\cg^{\rm or}$.

We begin by arguing that $\cg^{\rm or}$ must act {\it faithfully} on the parent subalgebra $\ca_{\rm p}=\ca_1\otimes\cdots\otimes\ca_n$ which determines an operator basis of the full quantum laboratory due to the LMP, as defined in sec.\ \ref{sec_lmp}. To this end, notice first that $\cg^{\rm or}$ acts on the full operator algebra $\ca_{\rm tot}$ of the entire laboratory. This is clear from (\ref{encodact}) applied to $\ca_{\rm tot}$ because a transformation of the entire laboratory at once that leaves the operational orientation intact must leave the description of the local quantum physics inside it invariant. In consequence, $\cg^{\rm or}$ must also act on $\ca_{\rm p}$ since a change of operator basis for $\ca_{\rm p}$ is equivalent to a change of operator basis for $\ca_{\rm tot}$.

Denote by $T_g$ the action of $g\in\cg^{\rm or}$ on $\ca_{\rm p}$ and suppose the group  $\cg^{\rm or}$ did {\it not} have a faithful action on $\ca_{\rm p}$. Using (\ref{homo}), this means
\ba
\exists\,\,g_1,g_2\in\cg^{\rm or},\q \q g_1\neq g_2,\q\text{but}\q T_{g_1}=T_{g_2}.
\ea
Although $g:=g_1g_2^{-1}\neq e$, where $e$ denotes the unit element in $\cg^{\rm or}$, we then have
\ba
T_{g}=T_{g_1}\circ T_{g_2^{-1}}=T_{g_2}\circ T_{g_2^{-1}}=\text{Id}=T_e\,.
\ea

Next, consider the frames $\{e_\ca\}$ and $\{\phi_{g}(e_\ca)\}$, which are {\it not} aligned.\footnote{By construction, $\cg^{\rm or}$ acts faithfully on the set of frames with same operational orientation at $x\in\cm$.}  In particular, the measurement apparatuses in the two frames are not aligned with one another because the two frames cannot differ only in their time directions $e_0,e'_0$ (see sec.\ \ref{sec_wald}). Using the covariance arguments and, specifically, (\ref{encodact}), this leads to
\ba
\varphi_e(\hat{O})=\varphi_{\phi_{g}(e)}\left(T_{g}(\hat{O})\right)=\varphi_{\phi_{g}(e)}(\hat{O})\,,\q\q\q\q\forall\,\hat{O}\in\ca_{\rm p}^{\rm sa}\,.
\ea
Hence, although {\it not} aligned, the two frames see and describe all observables of the ``parent'' subalgebra in exactly the same way. In particular, although their measurement devices are not aligned, their mathematical descriptions are identical $\varphi_e\equiv \varphi_{\phi_{g}(e)}$.

Consequently, given that the parent ensemble implies the description of all other quantum matter subsystems of the laboratory, it would be impossible to operationally distinguish the frames $\{e_\ca\}$ and $\{\phi_{g}(e_\ca)\}$ through their observations of the local quantum matter physics. On the other hand, classically they {\it could} be distinguished, given their non-alignment. However, owing to our assumption of the local universality of quantum theory, also the local classical physics must emerge from the local quantum physics and so we have a contradiction. We conclude that the local isometries must act faithfully on $\ca_{\rm p}$.

We can now also argue that the action of $\cg^{\rm or}$ on $\ca_{\rm p}$ is contained in that of $\cg^{\rm op}$, which is its maximal (operational) symmetry group. In particular, it thereby acts linearly on $\ca_{\rm p}$. Indeed, from the covariance arguments it follows that every change of frame $\{\phi_g(e_\ca)\}$ for $g\in\cg^{\rm or}$ implies a transformation of an operator basis of $\ca_{\rm p}$. For example, a complete set of measurement devices on the ``parent'' subsystem that is oriented relative to $\{e_\ca\}$ will correspond to a basis of $\ca_{\rm p}$ and changing the local frame to $\{\phi_g(e_\ca)\}$ will also transform the complete set of measurement devices and thus induce a corresponding change of basis in $\ca_{\rm p}$. But every change of operator basis in $\ca_{\rm p}$ is a transformation that is contained in $\cg^{\rm op}$ and, in particular, linear (see sec.\ \ref{sec_lmp}). 

Thus, $\cg^{\rm or}$ is a subgroup of $\cg^{\rm op}$. Our goal is to explore the relation of the two groups even further, to obtain additional information about $\cg^{\rm or}$ by using now also the LMP. By doing so, we will invoke the following plausible mathematical conjecture:
	\begin{conjecture}
		\label{ConjTrans}
		Let $\mathcal{H}$ be a closed connected matrix subgroup of ${\rm SL}(n,\mathbb{C})$ that acts transitively on the non-negative $n\times n$ rank-one observables by conjugation, i.e.\ on $S=\{|\psi\rangle\langle\psi|\,\,|\,\, \psi\in\mathbb{C}^n\setminus\{0\}\}$, such that $X\in\mathcal{H}$ maps $P\in S$ to $XPX^\dagger$. Then,
		\begin{itemize}
			\item if $n$ is odd, we must have $\mathcal{H}={\rm SL}(n,\mathbb{C})$;
			\item if $n$ is even, we either have $\mathcal{H}={\rm SL}(n,\mathbb{C})$ or $\mathcal{H}={\rm Sp}(n,\mathbb{C})$.
		\end{itemize}
		Note that ${\rm SL}(2,\mathbb{C})={\rm Sp}(2,\mathbb{C})$, so for $n=2$ both possibilities coincide.
	\end{conjecture}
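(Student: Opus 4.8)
The plan is to reformulate the transitivity hypothesis geometrically and then invoke the classification of Lie groups acting transitively on complex projective space. The starting point is the identification $S\cong(\mathbb{C}^n\setminus\{0\})/{\rm U}(1)$, coming from $|\psi\rangle\langle\psi|=|\phi\rangle\langle\phi|$ iff $\psi,\phi$ differ by a phase; under it the conjugation action $P\mapsto XPX^\dagger$ is just the linear action $\psi\mapsto X\psi$ read modulo phase. Thus the hypothesis says precisely that $\mathcal{H}$ acts transitively on the connected real $(2n-1)$-manifold $S$. Composing with the $\mathcal{H}$-equivariant surjection $S\twoheadrightarrow\mathbb{CP}^{n-1}$, we get that $\mathcal{H}$ acts transitively (and holomorphically) on $\mathbb{CP}^{n-1}$; in particular $\mathbb{C}^n$ is an irreducible $\mathcal{H}$-module, since an invariant proper subspace $W$ would yield an invariant proper subvariety $\mathbb{P}(W)\subset\mathbb{CP}^{n-1}$.

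First I would pin down the structure of $\mathcal{H}$. To see that $\mathcal{H}$ is semisimple: its radical $R$ is connected, normal and solvable, and the Zariski closure of its image in ${\rm PGL}(n,\mathbb{C})$ is a connected solvable complex algebraic group, so by the Borel fixed point theorem it fixes a point of the complete variety $\mathbb{CP}^{n-1}$, hence so does $R$. Since $R\trianglelefteq\mathcal{H}$, the fixed locus of $R$ is $\mathcal{H}$-invariant, so by transitivity it is all of $\mathbb{CP}^{n-1}$; thus $R$ acts trivially there, i.e.\ $R\subseteq Z({\rm SL}(n,\mathbb{C}))$, which is finite, whence $R=\{e\}$. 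Next, $\mathcal{H}$ is non-compact: any compact subgroup of ${\rm SL}(n,\mathbb{C})$ preserves some Hermitian inner product (average over Haar measure), hence preserves the non-constant smooth function $|\psi\rangle\langle\psi|\mapsto\|\psi\|^2$ on $S$, so its orbits lie in proper level sets and cannot exhaust $S$. Infinitesimally, transitivity on $S$ is the condition $\mathfrak{h}\,e_1+\mathbb{R}\,(i e_1)=\mathbb{C}^n$, which holds for $\mathfrak{sl}(n,\mathbb{C})$ and $\mathfrak{sp}(n,\mathbb{C})$ but fails for $\mathfrak{su}(n)$, whose orbit through $e_1$ already contains $i e_1$ yet misses $\mathbb{R}\,e_1$.

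The core step is then to apply Onishchik's classification of connected Lie groups acting transitively and effectively on $\mathbb{CP}^{n-1}$: up to conjugacy in ${\rm GL}(n,\mathbb{C})$ and finite covers, the only ones realizable inside ${\rm SL}(n,\mathbb{C})$ are ${\rm SU}(n)$ and ${\rm SL}(n,\mathbb{C})$ for every $n$, and, for $n=2m$, the compact groups ${\rm Sp}(m)$, ${\rm Sp}(m)\cdot{\rm U}(1)$, ${\rm Sp}(m)\cdot{\rm Sp}(1)$ together with ${\rm Sp}(2m,\mathbb{C})$. All of these except ${\rm SL}(n,\mathbb{C})$ and ${\rm Sp}(n,\mathbb{C})$ are compact, hence excluded by the previous step (the semisimplicity we proved also directly removes ${\rm Sp}(m)\cdot{\rm U}(1)$). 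Since, up to duality, the defining representation is the unique $n$-dimensional irreducible representation of each of ${\rm SL}(n,\mathbb{C})$ and ${\rm Sp}(n,\mathbb{C})$, and all nondegenerate alternating forms on $\mathbb{C}^n$ are ${\rm GL}(n,\mathbb{C})$-equivalent, the inclusion $\mathcal{H}\hookrightarrow{\rm SL}(n,\mathbb{C})$ realizes $\mathfrak{h}$ either as $\mathfrak{sl}(n,\mathbb{C})$ itself or as a symplectic subalgebra $\mathfrak{sp}(n,\mathbb{C})$; comparing real dimensions and using connectedness gives $\mathcal{H}={\rm SL}(n,\mathbb{C})$ in the first case and $\mathcal{H}$ conjugate to ${\rm Sp}(n,\mathbb{C})$ (with $n$ even) in the second, the two coinciding when $n=2$. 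Finally one records the converse check: both ${\rm SL}(n,\mathbb{C})$ and ${\rm Sp}(n,\mathbb{C})$ act transitively already on $\mathbb{C}^n\setminus\{0\}$ (for ${\rm Sp}$, because every nonzero vector extends to a symplectic basis), hence on $S$, so both alternatives genuinely occur.

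The main obstacle I anticipate is the dependence on Onishchik's classification theorem: a fully self-contained proof would essentially have to reconstruct the relevant part of it, namely the classification of cohomogeneity-zero actions of semisimple Lie groups on $\mathbb{CP}^{n-1}$, or equivalently of irreducible semisimple subgroups of ${\rm SL}(n,\mathbb{C})$ possessing an orbit of real codimension at most one in $\mathbb{C}^n\setminus\{0\}$. A secondary point requiring care is ensuring that no semisimple group has a genuine codimension-one orbit in $\mathbb{C}^n\setminus\{0\}$ that still surjects onto $S$; the infinitesimal criterion $\mathfrak{h}\,e_1+\mathbb{R}\,(i e_1)=\mathbb{C}^n$ combined with irreducibility is what excludes this, and checking it against each candidate in the classification is the step I would scrutinize most closely.
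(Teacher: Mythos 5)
First, a point of order: the paper does not prove this statement. It is labelled a conjecture, and the authors state explicitly that they ``do not currently have a proof'', offering only plausibility arguments. There is therefore no proof of record to compare yours against; your attempt has to stand on its own, and your preliminary reductions are in fact sound (irreducibility of the defining representation, semisimplicity of $\mathcal{H}$ via the Borel fixed point theorem, exclusion of compact groups through an invariant Hermitian norm, and the infinitesimal transitivity criterion $\mathfrak{h}\psi+\mathbb{R}(i\psi)=\mathbb{C}^n$).

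The core step, however, contains a fatal gap: the classification list you invoke is incomplete, and the omission cannot be repaired. Any connected subgroup of ${\rm SL}(n,\mathbb{C})$ whose maximal compact subgroup acts transitively on $\mathbb{CP}^{n-1}$ is itself transitive there, and for $n=2m$ this includes the noncompact real form ${\rm SU}^*(2m)\cong {\rm SL}(m,\mathbb{H})$ of ${\rm SL}(2m,\mathbb{C})$, whose maximal compact subgroup is ${\rm Sp}(m)$. For $m\geq 2$ this group is closed, connected, contained in ${\rm SL}(2m,\mathbb{C})$, and acts transitively not merely on $\mathbb{CP}^{2m-1}$ but on all of $\mathbb{C}^{2m}\setminus\{0\}$: one checks $\mathfrak{su}^*(2m)\,e_1=\mathbb{C}^{2m}$, so every orbit is open, and the quaternionically diagonal elements ${\rm diag}_{\mathbb{H}}(t,t^{-1},1,\ldots,1)$ lie in ${\rm SL}(m,\mathbb{H})$ and realize all positive rescalings of $e_1$. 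Hence it acts transitively on $S$, yet $\dim_{\mathbb{R}}{\rm SU}^*(4)=15$ while $\dim_{\mathbb{R}}{\rm Sp}(4,\mathbb{C})=20$ and $\dim_{\mathbb{R}}{\rm SL}(4,\mathbb{C})=30$. So this is not just a hole in your argument but a counterexample to the conjecture itself for every even $n\geq 4$; your ``square root'' consistency check against the classification of groups transitive on $\mathbb{R}^{2n}\setminus\{0\}$ (where ${\rm SL}(m,\mathbb{H})$ appears on the standard list) should already have flagged it. Two mitigating remarks: the statement appears to be true for odd $n$ (and for $n=2$), where your Case-1 argument via $\mathfrak{sl}(n,\mathbb{C})=\mathfrak{su}(n)\oplus i\,\mathfrak{su}(n)$ does force $\mathcal{H}={\rm SL}(n,\mathbb{C})$; and the downstream use of the conjecture in Theorem~\ref{thm_gopgor} survives the correction, since ${\rm SL}(2,\mathbb{C})$ embeds in ${\rm SU}^*(2m)$ (complex matrices viewed as quaternionic ones), so the conclusion ${\rm SO}^+(3,1)\subseteq\cg^{\rm or}_1$ would still follow from an amended classification that adds ${\rm SU}^*(n)$ to the list of possible $\mathcal{H}$.
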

	While we do not currently have a proof of this conjecture, it is plausible for the following reasons. First, consider the ``square root'' of this conjecture:  if $\mathcal{H}$ is a closed connected matrix subgroup of ${\rm SL}(n,\mathbb{C})$ that acts linearly and transitively on $\mathbb{C}^n\setminus\{0\}$, then we either have $\mathcal{H}={\rm SL}(n,\mathbb{C})$ or (if $n$ is even) $\mathcal{H}={\rm Sp}(n,\mathbb{C})$. This is consistent with the results in \cite{geatti2012some} for groups transitive on $\mathbb{R}^{2n}\setminus\{0\}$. Furthermore, the compact subgroups ${\rm SU}(n,\mathbb{C})$ resp.\ ${\rm Sp}(n)$ are the unique ones that are transitive on the subset of \emph{normalized} rank-one observables, i.e.\ on the projective space $\{|\psi\rangle\langle\psi|\,\,|\,\,\|\psi\|=1\}$, as shown in \cite{onishchik1993lie}. Our conjecture corresponds to a natural ``unnormalized'' version of that theorem.
	
	We will now use the technical form of the LMP to show the following:
	\begin{thm}\label{thm_gopgor}
		Let $\cg^{\rm or}_0$ be the connected component at the identity of $\cg^{\rm or}$. If $\mathbb{R}_+$ is a subgroup of $\cg^{\rm or}_0$, we define $\cg^{\rm or}_1$ via $\cg^{\rm or}_0=\cg^{\rm or}_1\times\mathbb{R}_+$, and otherwise $\cg^{\rm or}_1:=\cg^{\rm or}_0$ (for an operational interpretation see comment below).
		
		Under the physical background assumptions that lead to Observation~\ref{ObsGop}, we find that either $\cg^{\rm or}_1=\{\mathbf{1}\}$ (the trivial group), or $\cg^{\rm or}_1\supseteq {\rm SO}(3)$. Under the modified physical background assumptions (i), (ii) and (iii) that lead to Observation~\ref{ObsGop2}, and upon invoking Conjecture~\ref{ConjTrans}, we find that either $\cg^{\rm or}_1=\{\mathbf{1}\}$ or $\cg^{\rm or}_1\supseteq {\rm SO}^+(3,1)$, the proper orthochronous Lorentz group.
	\end{thm}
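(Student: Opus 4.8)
The plan is to combine the two facts already established in Section~\ref{sec:relloc} --- that $\cg^{\rm or}$ acts faithfully and linearly on the parent subalgebra $\ca_1\otimes\cdots\otimes\ca_n$, and that this action is contained in $\cg^{\rm op}$ --- with a rigidity argument extracted from condition (i) of the technical LMP, and to close with the cited transitivity classifications. First I would pass to the connected component $\cg^{\rm or}_0$ and factor out the scaling freedom $\mathbb{R}_+$, which is ``unspeakable'' (it is merely the choice of physical units for eigenvalues) and hence legitimately quotiented away, to obtain $\cg^{\rm or}_1$. By Observation~\ref{ObsGop} (resp.\ Observation~\ref{ObsGop2}), $\cg^{\rm or}_1$ is then a closed connected subgroup of $\prod_i {\rm PSU}(n_i)$ (resp.\ of $\prod_i {\rm PSL}(n_i,\mathbb{C})$), acting on each parent subsystem $\ca_i$ by conjugation with ${\rm SU}(n_i)$- (resp.\ ${\rm SL}(n_i,\mathbb{C})$-) elements. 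The theorem is thereby reduced to a purely group-theoretic dichotomy for such subgroups.

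The core step is to show that the LMP forbids $\cg^{\rm or}_1$ from leaving invariant any proper ``speakable'' substructure of the rank-one observables of a parent subsystem. The argument: $\cg^{\rm or}$ is defined intrinsically from the spacetime structure $H$ and from the notion of ``same operational orientation'' (Section~\ref{sec_wald}); it is therefore the same for all observers and is built only from ``speakable'' data, so any substructure of the rank-one observables of some $\ca_i$ that is invariant under $\cg^{\rm or}_1$ is itself operationally accessible. But then two operator bases of $\ca_i$ that are equivalent in the sense of the relevant background assumptions (unitarily related for Observation~\ref{ObsGop}, related by the maps of Observation~\ref{ObsGop2} otherwise), one aligned with that substructure and one not, would be operationally distinguishable, contradicting item (i) of the technical LMP. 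Hence the conjugation action of $\cg^{\rm or}_1$ on the rank-one observables of each $\ca_i$ is either trivial or admits no proper invariant substructure; since $\cg^{\rm or}_1\subseteq\cg^{\rm op}_1$ has product form and acts faithfully on the parent subalgebra, acting trivially on every $\ca_i$ forces $\cg^{\rm or}_1=\{\mathbf{1}\}$, which is the first alternative. (Physically this is the degenerate case in which no nontrivial reorientation of the laboratory preserves operational orientation.)

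In the remaining case $\cg^{\rm or}_1$ acts non-trivially, hence transitively, on the relevant set of rank-one observables of at least one $\ca_i$: the projective space of normalized pure states under the assumptions of Observation~\ref{ObsGop}, and the full cone $\{\ket{\psi}\bra{\psi}\mid\psi\neq 0\}$ under those of Observation~\ref{ObsGop2}. In the first case, Onishchik's theorem~\cite{onishchik1993lie} identifies the closed connected subgroups of ${\rm SU}(n_i)$ transitive on $\mathbb{CP}^{n_i-1}$ as ${\rm SU}(n_i)$ and ${\rm Sp}(n_i/2)$; both contain an ${\rm SU}(2)$ subgroup, so after projection $\cg^{\rm or}_1$ contains a copy of ${\rm PSU}(2)\simeq{\rm SO}(3)$. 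In the second case one lifts the action on the chosen $\ca_i$ to ${\rm SL}(n_i,\mathbb{C})$ and invokes Conjecture~\ref{ConjTrans}, giving ${\rm SL}(n_i,\mathbb{C})$ or ${\rm Sp}(n_i,\mathbb{C})$; both contain an ${\rm SL}(2,\mathbb{C})$ subgroup, so $\cg^{\rm or}_1$ contains a copy of ${\rm PSL}(2,\mathbb{C})\simeq{\rm SO}^+(3,1)$. For $n>1$ one uses that distinct parent subsystems lie in mutually independent sectors of the laboratory (as in the example of non-interacting chains), so $\cg^{\rm or}$ splits as a direct product over those sectors and it suffices to run the argument on one factor; each $\ca_i$ is at least a qubit, so $n_i\geq 2$ always suffices.

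The step I expect to be the real obstacle is the rigidity argument of the second paragraph: converting ``$\cg^{\rm or}$ is built from speakable data, so its invariants are operationally accessible'' into an airtight statement requires arguing both that such an invariant substructure can actually be exploited by the observer to break the symmetry (and does not merely exist abstractly) and that this genuinely violates item (i) rather than item (ii) of the LMP. One also has to handle the $n>1$ bookkeeping carefully, ensuring that a transitive action on one factor really yields a copy of ${\rm SO}(3)$ (resp.\ ${\rm SO}^+(3,1)$) inside $\cg^{\rm or}_1$ itself and is not dissolved into a ``twisted diagonal'' --- this is precisely what the independence of the sectors rules out. Finally, the Lorentzian conclusion rests on the still-unproven Conjecture~\ref{ConjTrans}, whereas the rotational case is unconditional by the classical homogeneity results.
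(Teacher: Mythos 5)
Your overall architecture (reduce to the parent subalgebra, establish a trivial-or-transitive dichotomy for the conjugation action on rank-one observables, then close with Onishchik resp.\ Conjecture~\ref{ConjTrans} and the containments ${\rm SU}(2)\subseteq{\rm Sp}(k)$, ${\rm SL}(2,\mathbb{C})\subseteq{\rm Sp}(2k,\mathbb{C})$) matches the paper's endgame for the Lorentzian case, but the step you yourself flag as ``the real obstacle'' is where the actual content of the paper's proof lives, and your proposed way of closing it does not work as stated. Item (i) of the technical LMP does \emph{not} directly forbid $\cg^{\rm or}_1$ from having a proper invariant subset $S$ of rank-one observables: it only forbids operationally distinguishing $S$ from an \emph{equivalent} set $YSY^\dagger$. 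The correct consequence is therefore the weaker statement that $\cg^{\rm op}$ must \emph{permute} the $\cg^{\rm or}_1$-orbits, not that no proper orbit exists. The paper's Appendix~\ref{app_gopgor} has to do genuine work to upgrade this to transitivity: it shows that each orbit $S$ determines a subgroup $\cg_S\subseteq\cg^{\rm op}_1$ with $YSY^\dagger=S$ or $YSY^\dagger\cap S=\emptyset$, characterizes the relation $\sim$ via $\tilde\cg_S$, excludes the degenerate case in which an orbit lies in a single ray, and then invokes the fact that point stabilizers are \emph{maximal} subgroups of ${\rm SL}(n_1,\mathbb{C})$ to force any non-singleton orbit to exhaust all non-negative rank-one observables. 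None of this is present in your sketch, and without it ``non-trivial, hence transitive'' is an assertion rather than an argument.

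For the rotational case your route also diverges from the paper's in a way worth noting: the paper does not use transitivity on projective space or Onishchik's theorem there at all. Instead it observes that, under the first set of background assumptions, the generators $i\cdot\mathfrak{g}^{\rm or}$ form an operationally distinguished set of observables, so the LMP forces $\mathfrak{g}^{\rm or}$ to be invariant under conjugation by $\cg^{\rm op}$, i.e.\ an \emph{ideal} of $\bigoplus_i {\rm su}(n_i)$; the structure theory of semisimple Lie algebras then gives $\mathfrak{g}^{\rm or}=\bigoplus_j {\rm su}(n_{i_j})$ outright. This is both rigorous and strictly stronger than what you derive (it also disposes of the $n>1$ bookkeeping and the ``twisted diagonal'' worry automatically, since an ideal of a direct sum of simple algebras is an honest sub-sum rather than a diagonal). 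The paper explicitly notes that this Lie-algebra argument cannot be recycled under the second set of background assumptions, because the generators no longer transform as observables and the Lie bracket becomes basis-dependent; this is precisely why the two cases receive structurally different proofs there, whereas your uniform treatment conflates them. Your endgame — Conjecture~\ref{ConjTrans}, the embedding ${\rm SL}(k,\mathbb{C})\hookrightarrow{\rm Sp}(2k,\mathbb{C})$, and the projection ${\rm PSL}(2,\mathbb{C})\simeq{\rm SO}^+(3,1)$ — does agree with the paper's.
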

		\begin{proof}
		The proof is given in Appendix \ref{app_gopgor}.	
		\end{proof}

	Let us comment on the possible $\mathbb{R}_+$ subgroup of $\cg^{\rm or}_0$. As $\cg^{\rm or}_0$ is a subgroup of $\cg^{\rm op}$, this would correspond to maps of the form $A\mapsto \lambda A$, where $\lambda>0$. The case $\lambda\neq 1$ is only possible under the alternative set of physical background assumptions (i), (ii) and (iii). Any observer (regardless of their choice of operator basis) can agree on whether a given transformation is simply a scaling of observables; similarly, every observer can agree on whether a map of the form $A\mapsto Y A Y^\dagger=:A'$ has $\det Y=1$. Namely, if there is any invertible linear map $X$ such that $B=X A X^\dagger$ maps descriptions $A$ of observables to descriptions $B$, then in $B$-description the map acts as $B'=(XYX^{-1})B(XYX^{-1})^\dagger$, and a similar result applies if an additional sign change and/or transposition relates the descriptions, as in Observation~\ref{ObsGop2}. That is, the fact that conjugation matrices have unit determinant (up to a global phase factor) is a basis-independent statement. Hence, $\cg^{\rm or}_1$ can be defined as those maps from $\cg^{\rm or}_0$ that can be written in the form $Y\bullet Y^\dagger$ with $\det Y=1$, and the definition of this subgroup is basis-independent. It is clear that the connected group $\cg^{\rm or}_0$ must factorize into a product of $\cg^{\rm or}_1$ and $\mathbb{R}_+$ if it contains non-trivial scalar multiples whatsoever.

The idea of the proof of Theorem~\ref{thm_gopgor} is to consider the action $X\bullet X^\dagger$ of $\cg_0^{\rm or}$-elements on the parent subalgebra. If the orbits of this action allow observers to distinguish operator bases that would otherwise be operationally indistinguishable (i.e.\ equivalent), we have a violation of the LMP. It follows that $\cg_0^{\rm or}$ must either be trivial or ``large enough'' to prevent this from happening. But conjugations with multiples of the identity, $X=\lambda\mathbf{1}$, only lead to a linear scaling that does not tell observers anything new: they know already from the linearity structure what it means to scale observables. This is why the LMP does not tell us anything about a possible subgroup $\mathbb{R}_+$, and this subgroup has to be divided out in the formulation of Theorem~\ref{thm_gopgor}. The group $\mathbb{R}_+$ may or may not be a subgroup of $\cg^{\rm or}$ -- both is compatible with the LMP, and this will be crucial for the case of inhomogeneous dispersion relations in sec.~\ref{sec_finsler}.

Similarly, the technical details of the proof of Theorem~\ref{thm_gopgor} do not allow us to say anything about \emph{disconnected} components of $\cg^{\rm or}$ (for example, space inversion and time reversal if $\cg^{\rm or}$ is the Lorentz group). But this is acceptable, since the operational nature of the corresponding symmetry (for example whether it is broken) may easily depend on the details of the underlying physics, as the well-known example of parity violation in our universe illustrates.

Theorem~\ref{thm_gopgor} allows for the possibility that $\cg_1^{\rm
or}=\{\mathbf{1}\}$ -- how can we understand this trivial case? Consider
the following possibility, described in the terminology of
Subsection~\ref{sec_wald}: suppose that spacetime geometry allows for a
multitude of different frames $\{e_A\}$ resp.\ $\{e'_A\}$, but that
\emph{none} of these frames have the same operational orientation. That
is, \emph{all} spacetime frames can be distinguished via speakable
information; every choice of reference frame can be communicated from
one agent to any other via classical communication (e.g.\ over a
telephone) even if the two agents have never exchanged any physical
systems (like gyroscopes etc.) before. In this case, the group of
internal frame orientation preserving transformations $\cg^{\rm or}$,
and thus $\cg_1^{\rm or}$, will clearly be trivial, i.e.\ equal to
$\{\mathbf{1}\}$. Now, even in such a world, it is conceivable that
there are quantum observables with some spacetime interpretation. For
example, think of a construction plan for a measurement device (e.g.\
something comparable to a Stern-Gerlach device), such that its
construction in spacetime frame $\{e_A\}$ makes it measure observable
$\hat O$, while in frame $\{e'_A\}$ it corresponds to a measurement of
$\hat O'$. But suppose that $\hat O$ and $\hat O'$ are \emph{never}
equivalent in the sense of the technical form of the LMP. For example,
under our first set of physical background assumptions of
Subsection~\ref{sec_lmp}, it means that $\hat O$ and $\hat O'$ are never
unitarily equivalent.

Such a case is fully compatible with the technical form of the LMP:
\emph{spacetime does \textbf{not} allow the observers to distinguish any
observables that would otherwise be indistinguishable}. While spacetime
allows observers indirectly, for example, to operationally distinguish
observables $\hat O$ and $\hat O'$ (as those that correspond to
speakably distinguishable frames $\{e_A\}$ and $\{e'_A\}$,
respectively), this is not relevant for the LMP because $\hat O'\neq
U\hat O U^\dagger$ for all unitaries $U$. In other words, $\hat O$ and
$\hat O'$ \emph{are already operationally distinguishable anyway} (due
to our physical background assumptions) by means of their eigenvalues.
They are not ``equivalent'' in the sense of the technical form of the
LMP, and thus the LMP is not violated.

This trivial case (spacetime does not carry any symmetry, and does not
break any quantum symmetry either) will not be considered further in the
following.

In summary, the LMP essentially implies that the internal frame
orientation preserving transformations $\cg^{\rm or}$ (as long as its
connected component at the identity is not trivial -- a case we shall
ignore from now on) must either contain at least the rotations, or even
the Lorentz group, depending on our choice of physical background
assumptions. Furthermore, $\cg^{\rm or}$ must be a subgroup of $\cg^{\rm
op}$, but the two groups need not be identical.

\subsection{What does it mean if the Local Mach Principle is violated?}\label{sec_nolmp}

Let us first recapitulate what we have done so far to then interpret what a violation of the Local Mach Principle could mean. The colloquial form of the LMP in sec.~\ref{sec_0lmp}, originates in a non-trivial assumption: the effective spacetime structure in which the local inertial laboratory resides is the coarse-grained, large-scale limit of a {\it special} (universality) class of quantum gravity states such that the net interaction of the matter inside the laboratory with quantum gravitational degrees of freedom is zero on average. In other words, 
 any {\it direct} interaction of the matter with quantum gravitational degrees of freedom has been washed out through renormalization at the relevant laboratory scales, which we consider in our thought experiment. In consequence, the local inertial frame, at the relevant scales, is not only isolated from the matter outside the frame, but also from the effective quantum gravitational degrees of freedom. This already implies that the local effective spacetime structure in the local inertial frame must appear completely isotropic and not offer any structure for the observer to orient themselves. Instead, it is only the quantum matter physics in the local laboratory that is left over to `self-generate' any reference structure for the observer to orient their frame; this is the physical content of the LMP. Owing to the assumed universality of quantum theory, the local inertial frame is thereby truly a quantum reference frame and self-sufficient.

Like the Global Mach Principle, there are
different possible ways of formulating this intuitive principle as a
concrete mathematical postulate, and different formulations will entail
different consequences.
In this paper, we have suggested a specific technical form of the LMP that is based on further assumptions. Most drastically, we assumed in sec.\ \ref{sec_lmp} that {\it discrete}, i.e.\ finite-dimensional degrees of freedom fully encode information about spacetime orientation and that the agent can restrict to those. In particular, this assumes that the observer can separate these discrete degrees of freedom, e.g., from the momentum modes. While this greatly simplified our technical discussion, we conjectured that this finite-dimensionality assumption can actually be dropped without modification of the main conclusions. We noted that this assumption is also satisfied in Minkowski space.

The concrete consequences of the LMP then depended on the physical background assumptions to which we exposed the agent. The physical background assumptions leading to Observation~\ref{ObsGop} assume the
\emph{eigenvalues} of observables to be frame-independent,
and can thus be used by observers as extra data to construct a frame. In
this case, different frames will differ by a unitary transformation, and Theorem~\ref{thm_gopgor} proves that
the internal frame orientation preserving group $\mathcal{G}^{\rm or}$ contains \emph{at least}
${\rm SO}(3)$. On the other hand, in Observation~\ref{ObsGop2}, we give
the consequences of the LMP if we allow eigenvalues to depend on the
frame, but only assume that the zero eigenvalues have frame-independent
significance (motivated by measurement devices like Stern-Gerlach
devices and outcomes that carry physical units). In this case, Theorem \ref{thm_gopgor} shows that  $\cg^{\rm or}$ must be even larger if the LMP holds true,
containing \emph{at least} the orthochronous Lorentz group ${\rm SO}^+(3,1)$. Notice that both these results are consistent with the conceptual observation above that the LMP implies space or spacetime to be locally completely isotropic, given the large local frame symmetry groups for a $3+1$ spacetime. However, this does not yet imply that space and spacetime will locally be Euclidean and Lorentizan metric geometries, respectively. We will investigate in detail which constraints these results impose on spacetimes with dispersion relations in sec.\ \ref{sec_finsler} shortly.

This discussion also directly suggests how to interpret spatiotemporal structures in which we find the LMP to be violated, as in some spacetimes with dispersion relations in sec.\ \ref{sec_finsler} below. Of course, any of our more detailed technical assumptions might, in principle, fail for matter quantum theory in arbitrary effective spacetimes. More generally, however, we may interpret a violation of the LMP as the effective spacetime environment corresponding to the large-scale limit of quantum gravity states in which the net interaction of the matter with the quantum gravitational degrees of freedom is {\it not} washed out entirely through renormalization. That is to say, violations of the LMP may indicate a non-vanishing net coupling of the matter degrees of freedom to effective quantum gravitational ones even at the relevant laboratory scales. As such, the matter subsystems in the local laboratory are not maximally isolated and may become indirectly correlated via each of their direct interactions with the effective quantum gravitational degrees of freedom. This situation can be qualitatively modelled by Example~\ref{ExHeisenberg} with $h\neq 0$ and interpreting $h$ in a role analogous to how we envisage the effective spacetime structure here. The spins can become correlated through their direct interaction with $h$ even if one switched off $J$. While this magnetic field
	is ultimately quantum too, it is only its effective classical
	description that enters the effective Hamiltonian as a parameter $h$ and
	a distinguished direction $z$. 
	
The agent could then exploit that these effective quantum gravitational degrees of freedom generate a local effective spacetime anisotropy even in local inertial frames to facilitate their task to define a reference frame orientation. This local non-trivial effective spacetime structure would not be there if the LMP was satisfied. It is as if the matter in the local laboratory sees a proper vacuum of the matter outside the lab, but not an effective vacuum of the quantum gravitational degrees of freedom. As such, the local laboratory would be inertial in a pure matter sense, but not in a quantum gravitational sense; the net `force' of quantum gravitational degrees of freedom onto the matter is non-zero.

In this light, we may interpret deviations from the symmetries $\cg^{\rm or}_1\supseteq {\rm SO}(3)$ or $\cg^{\rm or}_1\supseteq {\rm SO}^+(3,1)$, arising as a consequence of the LMP, as effective quantum gravity effects. This is, at least conceptually, consistent with efforts in quantum gravity phenomenology that deal with Lorentz violations \cite{Jacobson:2005bg,	AmelinoCamelia:2008qg}.	We also note that there are other efforts in quantum gravity phenomenology which investigate possible interactions between low-energy matter physics and effective quantum gravity degrees of freedom. For instance, a recent approach seeks to explain the cosmological constant as emerging from the diffusion of energy of matter physics into Planck scale granularity \cite{Josset:2016vrq, Perez:2017krv, Perez:2018wlo}.

%%%%%%%%%%%%%%%%%%%%%%%%%%%%%%%%%%%%%%%%%%%%%%%
\section{The LMP and dispersion relations}\label{sec_finsler}

Dispersion relations are viability conditions, which the four-momenta of physical point particles or field modes have to satisfy. They encode the causal structure of spacetime, define observer directions as well as the observers' spatio-temporal splits of spacetime and the geometry of spacetime, i.e.\ the gravitational interaction \cite{Raetzel:2010je,Barcaroli:2015xda}. In modern physics, dispersion relations emerge mainly as the point particle limit of field theories (technically the principal symbol of partial differential field equations) \cite{Raetzel:2010je}, for example, in the study of premetric or area-metric electrodynamics \cite{Rubilar:2007qm,Punzi:2007di}, as a tool in the study of possible violations of fundamental local Lorentz invariance \cite{Liberati:2013xla,Mattingly:2005re}, or in effective approaches to quantum gravity \cite{AmelinoCamelia:2008qg}.

In this section, we seek to investigate the implications of the LMP on spacetimes defined through dispersion relations. To this end, we technically formulate dispersion relations as level sets of Hamilton functions. Subsequently, we specify observer frames in terms of how they `see' dispersion relations; in particular, we compare mass-shell encodings of different observers, which will constitute the structure necessary for the {\it operational orientation} generally introduced in sec.\ \ref{sec_wald}.

\subsection{Dispersion relations as Hamilton functions}\label{sec:DispHam}

Technically, dispersion relations are implemented on a manifold $\mathcal{M}$ in terms of Hamilton functions $H$ on the manifold's cotangent bundle, i.e.\ physically speaking on its point particle phase space. In general, all of the following  can be studied in any dimension, however, we restrict to four spacetime dimensions here for simplicity. In local coordinates, an element $K$ on the cotangent bundle, i.e.\ a $1$-form on spacetime in some cotangent space $T^*_x\mathcal{M}$, can be expressed as $K = k_\mu \dd x^\mu = (x,k)$. These are called manifold induced coordinates of the cotangent bundle. The Hamilton function is a map
\begin{align}
H: T^*\mathcal{M} \rightarrow \mathbb{R},\quad K \mapsto H(K)\,,
\end{align}
which in local manifold induced coordinates reads $H(K) = H(x,k)$. In what follows we employ the coordinate representation. The level sets of the Hamilton functions $H(x,p) = const \geq 0$ represent the dispersion relations, which the particles have to satisfy and its Hamiltonian equations of motion $\dot x^\mu = \bar{\partial}^\mu H,\ \dot k_\mu = - \partial_\mu H$ determine the particles' trajectories. We use the abbreviations $\partial_\mu = \frac{\partial}{\partial x^\mu}$ and $\bar{\partial}^\mu = \frac{\partial}{\partial k_\mu}$ for the appearing partial derivatives, respectively. 

In order to identify Hamiltonians which define a relativistic spacetime structure, the following minimalistic criterion is employed here. Consider the Hessian of the Hamiltonian with respect to the covector coordinates, also called the Hamilton metric
\begin{align}
	g^{H\mu\nu} (x,k)= \frac{1}{2}\bar{\partial}^\mu\bar{\partial}^\nu H(x,k)\,.
\end{align}
We denote manifolds equipped with a dispersion by the tuple $(\mathcal{M},H)$ and call them \emph{Hamiltonian spacetimes} if  $\forall x \in \mathcal{M}$ there exists a connected component $C_x \subset T^*_x\mathcal{M}$ such that  on $C_x$ the signature of $g^H$  is Lorentzian $(+,-,-,-)$ and $H(x,k) > 0$. Moreover $C := \bigcup_{x\in M}C_x$ shall be a smooth sub-bundle of $T^*\mathcal{M}$. The connected component $C_x$  is interpreted as the set of physically viable massive momenta at $x$. This definition of Hamiltonian spacetimes includes spacetimes with bi-hyperbolic polynomial dispersion relations \cite{Raetzel:2010je}, where the set of massive momenta is a convex hyperbolicity cone of the dispersion relation, and spacetimes with inhomogeneous dispersion relations employed in effective models of quantum gravity \cite{Barcaroli:2015xda}. Finally, the in general momentum dependent geometry of spacetime can be derived from $H$ and its derivatives, in a similar way to deriving the geometry of spacetime from a metric on a pseudo-Riemannian spacetime \cite{Barcaroli:2015xda,Miron}.
 
The approach to the geometry of spacetime in terms of a Hamilton function $H$ on the cotangent bundle is dual to a Finslerian spacetime geometry, derived from a Finsler Lagrangian $L$ on the tangent bundle of spacetime \cite{Javaloyes:2018lex,Pfeifer:2011tk,Beem}, in case a Legendre map exists which maps the Hamiltonian to a Finsler Lagrangian. Such a map is constructed from the Helmholtz action for point particles, which ensures that freely falling particles satisfy the dispersion relation induced by the $H$
\begin{align}
S_H[x, k, \lambda] = \int d\tau\ (\dot x^\mu k_\mu + \lambda f(H(x,k)))\,.
\end{align}
Here, $f(H)$ is chosen such that $f(H) =0$ implements the desired dispersion relation $H = const \geq 0$. Solving the equations of motion for $\lambda$ and $k$, respectively, allows one to obtain an equivalent point particle action defined by a Finsler Lagrangian $L$
\begin{align}
S[x] := S_H[x, k(x,\dot x), \lambda(x,\dot x)] = \int d\tau\ L(x,\dot x)\,.
\end{align}
This construction has been employed for bi-hyperbolic polynomial dispersion relations \cite{Raetzel:2010je}, for example obtained from premetric electrodynamics \cite{Gurlebeck:2018nme}, as well as for the $\kappa$-Poincar\'e dispersion relation \cite{Letizia:2016lew}.

To investigate the relation between the LMP and the geometry of spacetime, we are mainly interested in the local properties of the dispersion relation defining Hamiltonian. Therefore, our attention lies on the function $H_x(k) = H(x,k)$. In addition, we need to extend the concept of symmetries of a Hamiltonian \cite{Barcaroli:2015xda} to local symmetries, i.e.\ to diffeomorphisms $\Psi:T^*_x\mathcal{M}\rightarrow T^*_x\mathcal{M}$ such that $H_x(\Psi(k)) = H_x(k)$. Infinitesimally, they can be described by vector fields $\xi = \xi_\mu(x,p) \bar{\partial}^\mu$, along which the Hamiltonian is constant $\xi(H) = 0$. Observe that every Hamiltonian possesses local symmetries induced by the vector fields $\xi^{\mu\nu} = \bar{\partial}^\mu H \bar{\partial}^\nu - \bar{\partial}^\nu H \bar{\partial}^\mu$, since trivially $\xi^{\mu\nu}(H)=0$.

More interestingly, we now come to two core questions of this section: we wish to characterize dispersion relations, which feature an invariance under either local Lorentz transformations or purely spatial rotations. This will help us later, in Sec.~\ref{sec:implLMP}, to translate the implications of the LMP, as expressed in Theorem \ref{thm_gopgor}, into non-trivial constraints on the spacetime structure defined through dispersion relations. We begin with local Lorentz symmetry.

\begin{thm}[Local Lorentz invariant dispersion relations]\label{thm:LLI}
	Consider a Hamiltonian spacetime $(\mathcal{M},H)$ and let $g$ be some Lorentzian spacetime metric. The generators of local Lorentz transformations on the cotangent spaces of spacetime, $M^{\mu\nu} = g^{\mu\sigma}k_\sigma \bar{\partial}^\nu - g^{\nu\sigma}k_\sigma \bar{\partial}^\mu$, generate local and linear symmetries of $H$ if and only if $H_x(k) = h_x(w(k))$, where $w(k) = g^{\mu\nu}(x)k_\mu k_\nu$ and $h_x(w)$ is a function in one variable only.
\end{thm}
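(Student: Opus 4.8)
The plan is to handle the two implications separately: the ``if'' direction is a one-line computation, while the ``only if'' direction reduces to integrating the overdetermined first-order linear system $M^{\mu\nu}(H_x)=0$ by the method of characteristics, the one genuinely delicate point being the global structure of the level sets of $w$.

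\emph{The ``if'' direction.} If $H_x(k)=h_x(w(k))$ with $w(k)=g^{\mu\nu}(x)k_\mu k_\nu$, then, using $\bar{\partial}^\nu w = 2g^{\nu\beta}k_\beta$ (symmetry of $g$), the chain rule gives
\[
M^{\mu\nu}(H_x)=h_x'(w)\,M^{\mu\nu}(w)=2h_x'(w)\bigl(g^{\mu\sigma}k_\sigma\,g^{\nu\beta}k_\beta-g^{\nu\sigma}k_\sigma\,g^{\mu\beta}k_\beta\bigr)=0,
\]
again by symmetry; and since each $M^{\mu\nu}$ is linear in $k$ with $x$-constant coefficients, its flow is a linear map of $T^*_x\cm$, so these are indeed local and linear symmetries.

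\emph{The ``only if'' direction.} Everything is pointwise in $x$, so fix $x$ and choose a coframe in which $g^{\mu\nu}(x)=\mathrm{diag}(1,-1,-1,-1)$ (the statement is natural in $g$, hence coframe-independent). The six equations $M^{\mu\nu}(H_x)=0$ split into the three rotation equations $M^{ij}(H_x)=0$ and the three boost equations $M^{0i}(H_x)=0$. The rotation equations force $H_x$ to be $\SO(3)$-invariant in the spatial momenta, hence locally $H_x=\tilde H(k_0,r^2)$ with $r^2=k_1^2+k_2^2+k_3^2$. Substituting into $M^{0i}=k_0\,\bar{\partial}^i+k_i\,\bar{\partial}^0$ gives $k_i\bigl(2k_0\,\partial_{r^2}\tilde H+\partial_{k_0}\tilde H\bigr)=0$ for each $i$, i.e.\ the transport equation $2k_0\,\partial_{r^2}\tilde H+\partial_{k_0}\tilde H=0$, whose characteristics are $r^2-k_0^2=\mathrm{const}$, that is, $w=\mathrm{const}$. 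Thus $H_x$ is constant along each connected component of each level set of $w$. On the physically distinguished component $C_x$ this yields $H_x=h_x\circ w$: since the Hamilton metric transforms covariantly under the linear flows of the $M^{\mu\nu}$, the locus where it is Lorentzian and $H>0$ is $\SO^+(3,1)$-invariant, hence so is the connected component $C_x$, which therefore lies in a single timelike hyperboloid sheet; and smoothness of $h_x$ is inherited from that of $H_x$ because $w$ is a submersion away from $k=0$ (a nondegenerate critical point at $k=0$, should $0\in C_x$, being handled by the Morse lemma).

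\emph{The main obstacle.} The delicate point is precisely the global step just above: for $c>0$ the level set $\{w=c\}$ has two components (the forward and backward timelike sheets), so annihilation by the infinitesimal generators alone only forces $H_x$ to coincide with a function of $w$ on each sheet separately --- adding to $H$ a smooth $\SO^+(3,1)$-invariant bump supported strictly inside the backward cone shows this cannot be upgraded to a function of $w$ on all of $T^*_x\cm$ without further input. Two natural ways to close the gap, one of which should be made explicit: restrict to the connected component $C_x$ of physical massive momenta (which lies in one sheet, so the characteristics argument already delivers $H_x=h_x\circ w$ there), or assume $H_x$ polynomial in $k$, in which case the first fundamental theorem of invariant theory for the standard representation of $\SO^+(3,1)$ gives directly that every invariant polynomial is a polynomial in $w$. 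The remaining ingredients --- the $\SO(3)$ reduction, the boost transport equation, and covariance of the Hamilton metric --- are routine.
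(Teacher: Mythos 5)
Your argument is correct and reaches the paper's conclusion by a genuinely different route in the ``only if'' direction. The paper does not split the generators into rotations and boosts: it contracts the symmetry condition with $k_\nu$ to obtain $\bar{\partial}^\mu H_x = k^\mu\,(k_\rho\bar{\partial}^\rho H_x)/w$ and then performs the single coordinate change $(k_0,k_A)\mapsto(w,k_A)$, reading off $\tilde{\bar{\partial}}{}^A H_x=0$ directly. That computation is shorter than your two-stage reduction (first $\SO(3)$-invariance of the spatial momenta, then the boost transport equation solved by characteristics), but it is equally local in character: it divides by $w$ and inverts a coordinate change whose Jacobian is $2k^0$, so it likewise only shows that $H_x$ agrees with a function of $w$ on each connected piece of $\{w\neq0,\ k^0\neq0\}$. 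The two-sheet obstruction you isolate --- invariance under the connected group only forces constancy on orbits, and $\{w=c\}$ with $c>0$ consists of two orbits --- is therefore a genuine gap in the literal global statement that the paper's proof silently shares; your proposed repairs (restriction to the physical component $C_x$, or polynomiality plus classical invariant theory for $\SO^+(3,1)$) are the natural ways to close it, and the restriction to $C_x$ matches how the theorem is actually used later, where observer momenta are taken in $C_x$. You are also the only one to address the word ``linear'' in the statement, by noting that the flows of the $M^{\mu\nu}$ are linear maps of $T^*_x\cm$. One small caution: your parenthetical claim that $\SO^+(3,1)$-invariance of $C_x$ forces it to meet each timelike level set in a single sheet needs one more word --- a connected invariant open set could a priori contain both sheets by passing through the spacelike region --- though for the hyperbolicity cones relevant to the paper's examples this is harmless.
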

A similar answer can be given to the question asking for dispersion relations possessessing a rotational symmetry. To do so, we decompose the cotangent bundle in the following way. Let $\{k^\mu\}_{\mu=0}^3$ be manifold induced coordinates on $T^*_x\mathcal{M}$ and let $\Sigma_x \subset T_x\mathcal{M}$ be a three-dimensional sub-vector space of $T^*_x\mathcal{M}$. Then there exist linear combinations of the original coordinates $p_A = A^\mu{}_A(x) k_\mu$ such that $\{p_A\}_{A=1}^3$ are coordinates of $\Sigma_x$. A positive or negative definite scalar product $s$ in $\Sigma_x$ then allows us to classify dispersion relations which are invariant under rotations in $\Sigma_x$.

\begin{thm}[Rotationally invariant dispersion relations]\label{thm:LRI}
	Consider a Hamiltonian spacetime $(\mathcal{M},H)$ and let $\Sigma_x\subset T^*_x\mathcal{M}$ be a three-dimensional sub-vector space of $T^*_x\mathcal{M}$ equipped with a positive/negative definite scalar product $s$ and coordinates $\{p_A = A^\mu{}_A(x) k_\mu\}_{A=1}^3$.	The generators of orthogonal transformations in $(\Sigma_x,s)$ are $M^{AB} = s^{AC}p_C \bar{\partial}^B - s^{BC}p_C \bar{\partial}^A$. They generate local and linear symmetries of $H$ if and only if $H_x(k) = r_x(p_0,v(p))$, where $p_0$ completes the sub-vector space coordinates $\{p_A\}_{A=1}^3$ to coordinates on $T^*_xM$, $v(p) = s^{AB}(x)p_Ap_B$ and $r_x(p_0,v)$ is a function in two variables only.
\end{thm}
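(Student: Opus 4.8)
The plan is to follow the same strategy as in the proof of Theorem~\ref{thm:LLI}, working fiberwise at a fixed $x\in\mathcal{M}$ and exploiting that, since $s$ is definite, a linear change of the subspace coordinates $\{p_A\}_{A=1}^3$ (at fixed $x$) puts $s$ into the standard form $s^{AB}=\varepsilon\,\delta^{AB}$ with $\varepsilon=\pm1$, while leaving $p_0$ untouched and transforming the $\bar\partial^A$ contragrediently; the statement ``$M^{AB}$ generates a local and linear symmetry'' is then equivalent to the linear first-order PDE $M^{AB}(H_x)=0$ for all $A,B\in\{1,2,3\}$, since the $M^{AB}$ are linear in the momenta and hence exponentiate to the linear action of $\mathrm{O}(s)$ on the fiber.

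For sufficiency I would assume $H_x(k)=r_x(p_0,v(p))$ with $v=s^{AB}p_Ap_B$ and compute directly. Using $\bar\partial^B v = 2s^{BC}p_C$ for $B\in\{1,2,3\}$ and the fact that $M^{AB}$ only involves the derivatives $\bar\partial^A,\bar\partial^B$ with $A,B\in\{1,2,3\}$ (which annihilate $p_0$), one obtains
\[
M^{AB}(H_x)=2\,(\partial_v r_x)\,\bigl(s^{AC}s^{BD}-s^{BC}s^{AD}\bigr)p_Cp_D,
\]
and this vanishes because the bracket is antisymmetric under $C\leftrightarrow D$ while $p_Cp_D$ is symmetric. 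For necessity I would use the standardized form of $s$: the equations $M^{AB}(H_x)=0$ become $p_A\,\bar\partial^B H_x=p_B\,\bar\partial^A H_x$, which say that the gradient of $H_x$ in the variables $(p_1,p_2,p_3)$ is everywhere radial, i.e.\ proportional to $(p_1,p_2,p_3)$ at fixed $p_0$. Hence $H_x$ is constant on the spheres $\sum_A p_A^2=\mathrm{const}$ within each slice $p_0=\mathrm{const}$, so it depends on $(p_1,p_2,p_3)$ only through $\rho:=\sum_A p_A^2$; undoing the coordinate change, $\rho=\varepsilon\, v(p)$, giving $H_x(k)=r_x(p_0,v(p))$.

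The main obstacle I anticipate is not the PDE argument but the smoothness of $r_x$ across $v=0$, i.e.\ at the apex $p_1=p_2=p_3=0$ of the subspace directions: the radial-gradient argument only produces a function of $(p_0,\rho)$ away from $\rho=0$. To upgrade this to a genuinely smooth $r_x$ in two variables I would invoke the standard theorem on smooth invariants of a compact group action (Schwarz/Whitney), applied slicewise in $p_0$ to the rotation group acting on $\mathbb{R}^3$ with single quadratic generator $\rho$ — or, equivalently, a Taylor-expansion argument at the origin in which all terms of odd total degree in $(p_1,p_2,p_3)$ are forced to vanish by the $\mathrm{SO}(s)$-invariance. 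Everything else is the same linear-algebra and bookkeeping as in Theorem~\ref{thm:LLI}, with $g^{\mu\nu}$ replaced by the three-dimensional $s^{AB}$ and with $p_0$ playing the role of an inert spectator variable.
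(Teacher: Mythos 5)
Your proposal is correct and follows essentially the same route as the paper's proof: a direct computation for sufficiency, and for necessity the observation (obtained in the paper by contracting $p_B M^{AB}(H_x)=0$) that the gradient of $H_x$ in the $p_A$ is radial, so that $H_x$ depends on the spatial momenta only through $v$. The only place you go beyond the paper is the smoothness of $r_x$ at $v=0$: the paper's own argument divides by $v$ and passes to the degenerate coordinates $(v,p_2,p_3)$ without comment, so your Schwarz/Taylor remark addresses a point the published proof leaves implicit rather than constituting a different method.
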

\noindent The proofs of the theorems can be found in Appendix \ref{app:thmLLI}. 

Note that we do {\it not} assume in our discussion that the Hamiltonian spacetime $(\mathcal{M},H)$ features a metric structure. But the two theorems express the fact that \emph{if} the Hamiltonian has either a linear and local Lorentz symmetry or rotational invariance, defined by the symmetry vector fields $M^{\mu\nu}$ resp.\ $M^{AB}$, then the spacetime \emph{must} feature a four-dimensional Lorentzian or three-dimensional Euclidean metric as building block of the Hamiltonian, respectively. This, however, does not yet imply that, e.g., in the first case, the spacetime is a standard Lorentzian metric geometry, i.e.\ that the Hamiltonian is simply given by $H=g^{-1}(k,k)$; more general dispersion relations with local Lorentz invariance exist, in particular, if $H$ is inhomogeneous. 

An example of a locally Lorentz invariant dispersion relation is clearly the general relativistic dispersion relations, which we discuss in Sec.~\ref{ssec:LLI}; a non-trivial example for a locally rotationally invariant dispersion relation is the $\kappa$-Poincar\'e dispersion displayed in Sec.~\ref{ssec:kappa}.

Next we discuss the relation between observers who obtain an identical mass-shell and the existence of local linear symmetries.

%%%%%%%%%%%%%%%%%%%%%%%%%%%%%%%%%%%%%%%%%%%%%%%
\subsection{Observers and their mass shell encodings}\label{sec:ObsDispRleations}

In contrast to Sec.~\ref{sec_oper}, we will now describe observer frames equivalently in terms of co-tetrads instead of tetrads, as this is more convenient for dealing with dispersion relations. A co-tetrad $\{\theta^a = \theta^a{}_\mu dx^\mu\}_{a=0}^3$ is a basis of the cotangent spaces $T_x^*\mathcal{M}$ of $\mathcal{M}$. An observer co-tetrad $\{\hat \theta^a\}_{a=0}^{3}$ is a co-tetrad which satisfies the following conditions determined by the Hamiltonian
\begin{align}\label{eq:obsdef}
\hat \theta^0 \in C_x,\quad H_x(\hat \theta^0) = M > 0,\quad \hat \theta^A{}_\mu \bar{\partial}^\mu H(x,\hat \theta^0) = 0\,,\ A=1,2,3\,,
\end{align}
where $M$ is a constant. These conditions are general properties of observers, which are extended by more} precise observer definitions in the case of hyperbolic polynomial dispersion relations \cite{Raetzel:2010je} and quantum gravity phenomenology approaches~\cite{Barcaroli:2017gvg}.\footnote{For bi-hyperbolic polynomials, the relevant connected component $C_x$ is a hyperbolicity cone of the dispersion relation \cite{Raetzel:2010je,Pfeifer:2011tk}. To illustrate the need for a further specified definition of observers, we can consider a specific example; a dispersion relation given by a homogeneous fourth order hyperbolic polynomial. In the generic case of such a polynomial, there exist two forward and two backward light cones; the polynomial's vanishing set splits into four conical surfaces that intersect at the origin. In between these surfaces, the dispersion relation has a fixed sign. Therefore, three connected cones of co-vectors exist that sattisfy the first condition in (11), the positivity condition. Only those co-vectors can be regarded as valid observers that are lying in one of the hyperbolicity cones, the cones of covectors that are co-normals to valid initial data hypersurfaces of the linear matter field equations giving rise to the dispersion relation. For the $\kappa$-Poincar\'e dispersion relation, the observer momenta do not form a cone and it suffices to define them via the sign of the dispersion relation. The set $C_x$ is just given by the momenta, satisfying $H_x(\hat \theta^0)>0$ and $\theta^0{}_0>0$. Hence, the precise identification of the set of massive momenta $C_x$ depends on the class of dispersion relations which are being investigated.}

The first condition implies that $\hat \theta^0$ is the momentum of a massive particle trajectory, the second condition states that the tangent of the observer trajectory $\dot x^\mu = \bar{\partial}^\mu H(x,\hat \theta^0)$ is co-normal to the spatial co-tetrads. The dual tetrad $\{\hat e_a\}_{a=0}^3$ of the co-tetrad satisfies $\hat \theta^a(\hat e_b) = \delta^a{}_b$ and describes an observer on the tangent spaces of spacetime, as in Sec.~\ref{sec_wald}, where $\hat e_0 = \dot x$ is the tangent of the observer's trajectory.

Having defined the observer co-tetrads, we can consider the dispersion relation expanded into these physically meaningful bases, instead of in local coordinates. To this end, we expand $T_x\mathcal{M}\ni K = k_\mu \dd x^\mu = \hat k_a \hat \theta^a$, which yields $k_\mu(\hat k) = \hat \theta^a{}_\mu \hat k_a$ and write
\begin{align}
H_x(k(\hat k)) = H_x(\hat \theta^a{} \hat k_a) := H_{\hat \theta, x}(\hat k)\,.
\end{align}
In other words, as outlined in Sec.~\ref{sec_wald}, the co-tetrad of an observer defines a canonical observer encoding map $f_{\hat \theta}$ which assigns to a physical quantity the values an observer will measure.  For the momentum $1$-form $K$ this map is $f_{\hat \theta}(K)_a = \hat k_a$. We point out that the observer encoding map is defined without involving any additional structure, apart from the observer co-tetrad themselves. In particular, this map assumes that the components of $K$ with respect to the observer co-tetrad are directly the momentum components which the observer would measure. This interpretation of the $\hat{k}_a$ need not be justified in general. If not, this map would have to be constructed differently by using additional geometric structures. In the following, we will, however, always assume this simple form of the observer encoding map $f_{\hat \theta}$.

Notice that the Hamilton function, expressed in an observer basis $H_{\hat \theta, x}$, is a map from $\mathbb{R}^4$ to $\mathbb{R}$. We shall refer to it as the encoding of the dispersion relation at $x$ with respect to the observer co-tetrad $\hat \theta$ and to $H_{\hat \theta, x}(\hat k) =  const = m^2$ as the observer's encoding of the mass shell.  In the following, the mass shell encoding will take the role of the sufficient (``speakable'') structure, provided by the spacetime structure $H$, to meaningfully state whether two co-tetrads have the same operational orientation. This concretizes the general discussion and arguments of Sec.~\ref{sec_wald} to spacetimes with dispersion relation $H$. 

Next, consider two different observer co-tetrads $\{\hat \theta^a\}_{a=0}^3$ and $\{\tilde \theta^a\}_{a=0}^3$. By definition they are both bases of $T^*_x\mathcal{M}$ and thus can be expressed in terms of each other via
\begin{align}
	\hat \theta^a =  \Lambda^a{}_b \tilde \theta^b\quad \mathrm{ and }\quad \hat e_a = (\Lambda^{-1})^b{}_a\tilde e_b\,.
\end{align}
The components of the matrices can be projected as
\begin{align}\label{eq:obstrans}
\Lambda^a{}_c = \tilde e_c(\hat \theta^a)\quad \mathrm{ and }\quad  (\Lambda^{-1})^c{}_a = \tilde \theta^c(\hat e_a)\,.
\end{align} 
Observe that in general $\Lambda$ is a non-linear function of the co-tetrads, i.e.\ $\Lambda = \Lambda(\hat \theta, \tilde{\theta})$ in a non-trivial way. For the sake of readability we do not display this dependence further. Let us remark that for generic dispersion relations, applying the observer transformation (\ref{eq:obstrans}) to a third observer co-tetrad $\{\check\theta^a\}$ does \emph{not} produce another observer co-tetrad. In other words, the observer transformations $\Lambda^a{}_b$ for all $\{\hat \theta^a\}$ and $\{\tilde \theta^a\}$ do not form a group representation on $\mathbb{R}^4$. This can be seen by considering infinitesimal observer transformations of a given co-tetrad $\{\hat \theta^a\}$. Indeed, in Appendix \ref{sec:obsdep}, we show that the conditions, which an infinitesimal observer transformation has to fulfill, depend on the momentum $\hat \theta^0$ of the start co-tetrad if the dispersion relation has a non-vanishing third derivative with respect to the momenta. The latter is only the case for dispersion relations that are quadratic in the momenta (as, e.g., general relativistic ones) and only in this case do the observer transformation directly define a group representation on $\mathbb{R}^4$.

It is clear that two observers use an identical encoding of the dispersion relation if $H_{\hat \theta, x}$ and $H_{\tilde \theta, x}$ are identical functions on $\mathbb{R}^4$, i.e.\ if the dispersion relation has the same form in the two observer co-tetrads. The following proposition states a condition on the existence of observers with equal dispersion relation encodings. It is related to the existence of local and linear symmetries, which we discussed at the end of Sec.~\ref{sec:DispHam}:

\begin{prop}[Identical mass-shell encodings]\label{prop:ObsMS}
	Let $\{\hat \theta^a\}_{a=0}^3$ and $\{\tilde \theta^a\}_{a=0}^3$ be observer co-tetrads on the Hamiltonian spacetime $(\cm,H)$, which are related by the observer transformation matrix $\Lambda^a{}_b(\hat \theta,\tilde{\theta})$. The observers agree on their encodings of the dispersion relation, $H_{\hat \theta, x}$ and $H_{\tilde \theta, x}$, via the encoding map $f_{\hat \theta}$ and $f_{\tilde \theta}$ if and only if the observer transformation is a local \emph{linear} symmetry of the Hamiltonian.
\end{prop}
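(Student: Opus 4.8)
The plan is to prove the equivalence by directly unfolding the two definitions involved: both the mass-shell encoding $H_{\hat\theta,x}$ and the observer transformation $\Lambda$ are built out of the co-tetrad legs, so it suffices to track how a linear change of co-tetrad propagates through the encoding map $f_{\hat\theta}$.

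First I would package each observer co-tetrad as a linear isomorphism $\hat\Theta:\mathbb{R}^4\to T^*_x\mathcal{M}$, $(v_a)\mapsto v_a\hat\theta^a$, so that by construction of the encoding one has $H_{\hat\theta,x}=H_x\circ\hat\Theta$ and likewise $H_{\tilde\theta,x}=H_x\circ\tilde\Theta$ as functions on $\mathbb{R}^4$. Next I would rewrite the change-of-basis relation $\hat\theta^a=\Lambda^a{}_b\tilde\theta^b$ at the level of covector components: it says precisely that $\hat\Theta=\tilde\Theta\circ L$, where $L:\mathbb{R}^4\to\mathbb{R}^4$ is the linear map whose matrix is the transpose of $(\Lambda^a{}_b)$ --- equivalently, a covector $K=\hat k_a\hat\theta^a=\tilde k_b\tilde\theta^b$ has $\tilde k_b=\Lambda^a{}_b\hat k_a$. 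The point to record here is that $L$ is \emph{linear} in the components, even though, as noted in the text, $\Lambda$ itself depends non-linearly on the pair of co-tetrads.

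The proposition then falls out of the single composition identity $H_{\hat\theta,x}=H_x\circ\hat\Theta=H_x\circ\tilde\Theta\circ L=H_{\tilde\theta,x}\circ L$, read in both directions: $H_{\hat\theta,x}$ and $H_{\tilde\theta,x}$ coincide as functions on $\mathbb{R}^4$ --- i.e.\ the two observers agree on their mass-shell encodings --- if and only if $H_{\tilde\theta,x}\circ L=H_{\tilde\theta,x}$, that is, iff $L$ is a linear symmetry of the $\tilde\theta$-encoding. Transporting this back along the isomorphism $\tilde\Theta$, it is equivalent to the statement that the linear automorphism $\Psi:=\tilde\Theta\circ L\circ\tilde\Theta^{-1}$ of $T^*_x\mathcal{M}$ satisfies $H_x\circ\Psi=H_x$, which is exactly the notion of a local linear symmetry of $H$ from Sec.~\ref{sec:DispHam}. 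Since $\Psi$ is nothing but the action on covectors induced by the observer transformation matrix $\Lambda$, this is the asserted ``if and only if'', and both implications sit inside the one chain of equalities.

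I do not expect a genuine obstacle: the content is essentially definitional. The only place demanding care is the web of dualizations --- co-tetrad versus tetrad, basis legs versus components --- so that one states precisely which of $\Lambda$, its transpose, or its inverse-transpose plays the role of ``the observer transformation acting on momenta''; the symmetry appearing in the conclusion is then the corresponding one, but this is a harmless convention. It is worth emphasizing in the write-up that linearity of $\Psi$ as a map of $T^*_x\mathcal{M}$ is automatic, being a change of basis, so that the substantive content of ``$\Lambda$ is a local \emph{linear} symmetry'' lies entirely in the invariance $H_x\circ\Psi=H_x$; the proposition identifies precisely this invariance as the obstruction to the two mass-shell encodings coinciding. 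I would also note that the defining observer conditions \eqref{eq:obsdef} are not used --- the equivalence holds verbatim for any two co-tetrads --- and that phrasing it for observer co-tetrads is only what connects it, via the operational-orientation discussion of Sec.~\ref{sec_wald}, to the group $\cg^{\rm or}$ analysed next.
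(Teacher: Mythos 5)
Your argument is correct and is essentially the paper's own proof (Appendix~\ref{app:propObsMS}) rewritten in coordinate-free language: the chain $H_{\hat\theta,x}=H_x\circ\hat\Theta=H_x\circ\tilde\Theta\circ L$ is exactly the paper's index computation $H_x(\hat\theta^a\hat k_a)=H_x((\Lambda^{-1})^a{}_b\hat\theta^b\hat e_a{}^\mu k_\mu)=H_x(\Psi_\Lambda(k))$, with your $\Psi=\hat\Theta\circ\tilde\Theta^{-1}$ being the inverse of the paper's $\Psi_\Lambda{}^\mu{}_\nu=(\Lambda^{-1})^a{}_b\hat\theta^b{}_\nu\hat e_a{}^\mu$ --- a harmless convention difference you already flag. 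Your added observations (linearity of $\Psi$ is automatic, the observer conditions \eqref{eq:obsdef} are never used) are accurate and consistent with the paper.
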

\noindent We proof of this proposition in Appendix~\ref{app:propObsMS}

An illustrative way to understand the identical mass shell encodings is the following: Let the two observers with co-tetrads  $\{\hat \theta^a\}$ and $\{\tilde \theta^a\}$ sample their mass shell by measuring energies and momenta of particles with a fixed mass. Then one of them, say the first observer, sends a fit of their mass shell sampling (or even the raw data) by classical communication. The second observer then receives or recovers a representation of the encoding $H_{\hat \theta,x}$ as a function on $\mathbb{R}^4$. If and only if the observer co-tetrads are connected by an observer transformation that induces a local \emph{linear} symmetry of $H_x$ in the sense of Proposition \ref{prop:ObsMS}, will the second observer find that their encoding $H_{\tilde \theta,x}$ coincides with~$H_{\hat \theta,x}$. 

The comparison of dispersion relations, respectively mass-shell encodings, allows us to group observers into equivalence classes. In fact, this is precisely the specification of the operational equivalence classes, defined by the same `operational orientation' in Sec.~\ref{sec_wald}, to Hamiltonian spacetimes. Here, we say that two observer co-tetrads have the same operational orientation, and thus belong to the same equivalence class, if they yield the same encoding of the dispersion relation. This is all the operationally accessible structure that the spacetime structure $H$ offers the observers. Hence, we say that two observer co-tetrads $\{\hat \theta^a\}$ and $\{\tilde \theta^a\}$ belong to the same equivalence class with respect to the encoding of the mass shell and write $\{\hat \theta^a\}\sim\{\tilde \theta^a\}$ if the observer transformation $\Lambda$ between the co-tetrads is a local and \emph{linear} symmetry of the Hamiltonian $H_x$. 
Each equivalence class forms a sub-manifold of the manifold of all observer co-tetrads. The co-dimension of the manifolds of equivalence classes quantifies the amount of information that can be gained by comparing the two mass shells. In other words, if $\mathcal{F}$ denotes the manifold of all observer co-tetrads, the quotient $\mathcal{F}/\sim$ represents the information which the observers can access, i.e.\ the ``speakable information'', which observers can agree on by classical communication. The dimension of the sub-manifolds of equivalence classes quantifies the ``unspeakable information'' about tetrad orientations that observers can{\it not} communicate classically. For these sub-manifolds, which encode the remaining possible observer relations \emph{after} classical communication, we can give the following proposition:

\begin{prop}[Observer transformations and local and linear symmetries]\label{prop:obstrans}
	Let $\cg^{\rm dis}$ be the group of local and linear symmetries of $H_x$. For each co-tetrad $\{\theta^a\}_{a=0}^3$, the map $I_\theta : \cg^{\rm dis} \rightarrow \mathfrak{G}_\theta \subset GL(4)$ with $I_\theta (\Psi)^a{}_b = \Lambda^a{}_b(\Psi(\theta),\theta)$ for all $\Psi\in \cg^{\rm dis}$ is a group isomorphism. Furthermore, $\mathfrak{G}_\theta$ is equivalent to the set of all observer transformations between observer co-tetrads in the same equivalence class as the co-tetrad~$\{\theta^a\}_{a=0}^3$.
\end{prop}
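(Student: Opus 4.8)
The plan is to prove the two claims of Proposition~\ref{prop:obstrans} in turn: that $I_\theta$ is a bijective group (anti-)homomorphism onto $\mathfrak{G}_\theta$, and that $\mathfrak{G}_\theta$ is exactly the set of observer transformations relating $\{\theta^a\}$ to the co-tetrads in its equivalence class. I would work entirely in the fixed cotangent space $T^*_x\mathcal{M}$, identify each $\Psi\in\cg^{\rm dis}$ with the invertible linear map it induces there, and exploit throughout that, since $\Psi$ is linear, $I_\theta(\Psi)$ is simply the component array fixed by $\Psi(\theta^a)=I_\theta(\Psi)^a{}_b\,\theta^b$ -- i.e.\ $I_\theta(\Psi)$ is the matrix of $\Psi$ in the basis $\{\theta^a\}$.

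\emph{Homomorphism, injectivity, image.} For $\Psi_1,\Psi_2\in\cg^{\rm dis}$ I would expand $(\Psi_1\circ\Psi_2)(\theta^a)=\Psi_1\big(I_\theta(\Psi_2)^a{}_b\,\theta^b\big)=I_\theta(\Psi_2)^a{}_b\,I_\theta(\Psi_1)^b{}_c\,\theta^c$ using linearity of $\Psi_1$; thus $I_\theta$ intertwines composition with matrix multiplication, up to the ordering dictated by one's index conventions (should it come out reversed, precomposition with $\Psi\mapsto\Psi^{-1}$, or passing to the opposite product on $\mathfrak{G}_\theta$, repairs it). Injectivity is immediate: two linear maps that agree on the basis $\{\theta^a\}$ coincide, so $I_\theta(\Psi_1)=I_\theta(\Psi_2)$ forces $\Psi_1=\Psi_2$. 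Since $\Psi$ is invertible, every $I_\theta(\Psi)$ lies in $GL(4)$; the image of a group homomorphism is a subgroup, so $\mathfrak{G}_\theta\subset GL(4)$ is a group and $I_\theta\colon\cg^{\rm dis}\to\mathfrak{G}_\theta$ an isomorphism.

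\emph{Identification of $\mathfrak{G}_\theta$.} The substantive part is to show $\mathfrak{G}_\theta=\{\Lambda(\tilde\theta,\theta)\,:\,\tilde\theta\ \text{an observer co-tetrad},\ \tilde\theta\sim\theta\}$ (the choice of anchoring at $\theta$ being immaterial, as $\mathfrak{G}_\theta$ is a group). For ``$\subseteq$'' I must check that $\Psi\theta:=\{\Psi(\theta^a)\}$ is again an observer co-tetrad in $[\theta]$. The first two conditions of~(\ref{eq:obsdef}) are clear since $H_x(\Psi(\theta^0))=H_x(\theta^0)=M>0$; for the spatial condition I would differentiate the symmetry identity $H_x(\Psi k)=H_x(k)$ in $k$ to get $\Psi_\mu{}^{\rho}\,\bar{\partial}^\mu H(x,\Psi k)=\bar{\partial}^\rho H(x,k)$, evaluate at $k=\theta^0$, and contract with $\theta^A{}_\rho$, which yields $\Psi(\theta^A)_\mu\,\bar{\partial}^\mu H(x,\Psi(\theta^0))=\theta^A{}_\rho\,\bar{\partial}^\rho H(x,\theta^0)=0$. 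That $\Psi\theta\sim\theta$ then follows from linearity: $H_{\Psi\theta,x}(\hat k)=H_x\big(\sum_a\hat k_a\,\Psi(\theta^a)\big)=H_x\big(\Psi(\sum_a\hat k_a\,\theta^a)\big)=H_{\theta,x}(\hat k)$, so the mass-shell encodings coincide and, by Proposition~\ref{prop:ObsMS} (equivalently, by the definition of $\sim$), $I_\theta(\Psi)=\Lambda(\Psi\theta,\theta)$ is a local linear symmetry. For ``$\supseteq$'', given an observer co-tetrad $\tilde\theta\sim\theta$ with matrix $\Lambda=\Lambda(\tilde\theta,\theta)$, i.e.\ $\tilde\theta^a=\Lambda^a{}_b\theta^b$, I would define $\Psi$ to be the linear map on $T^*_x\mathcal{M}$ with $\Psi(\theta^a)=\tilde\theta^a$; writing an arbitrary $k=\sum_a c_a\theta^a$ and using $H_{\tilde\theta,x}=H_{\theta,x}$ gives $H_x(\Psi k)=H_{\tilde\theta,x}(c)=H_{\theta,x}(c)=H_x(k)$, so $\Psi\in\cg^{\rm dis}$ and $I_\theta(\Psi)=\Lambda$. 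The two inclusions give the claim.

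\emph{Main obstacle.} The only genuinely delicate point -- which I would flag explicitly -- is the interplay with the distinguished connected component $C_x$: for $\Psi\theta$ to be an \emph{observer} co-tetrad one needs $\Psi(\theta^0)\in C_x$, not merely $H_x(\Psi(\theta^0))>0$, and a linear symmetry can a priori map $C_x$ to a different component (e.g.\ $\Psi=-\mathbf{1}$ when $H$ is quadratic). I would handle this by taking $\cg^{\rm dis}$ to mean the local linear symmetries of $H_x$ that in addition preserve $C_x$ -- automatic for the identity component, and vacuous if $C_x$ is the unique component with $H_x>0$ -- which leaves all later uses untouched. Everything else is index bookkeeping; note that this is fully consistent with the observation after~(\ref{eq:obstrans}) that observer transformations between \emph{arbitrary} co-tetrads need not form a representation on $\mathbb{R}^4$, because here composition is taken only \emph{within} the single equivalence class $[\theta]$ anchored at $\theta$, and there closure is exactly what the group structure of $\cg^{\rm dis}$ provides.
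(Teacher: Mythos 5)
Your proposal is correct and follows the same core route as the paper's proof in Appendix E: identify $I_\theta(\Psi)$ with the matrix of the linear map $\Psi$ in the basis $\{\theta^a\}$, verify the (anti-)homomorphism property by direct expansion, get injectivity from the fact that a linear map is fixed by its action on a basis, and then identify $\mathfrak{G}_\theta$ with the observer transformations inside the class $[\theta]$. The ordering caveat you flag is harmless and matches the index-convention bookkeeping in the paper. Where you go beyond the paper is in the second half: the paper only checks the inclusion that observer transformations between two co-tetrads of $[\theta]$ land in $\mathfrak{G}_\theta$ (by factoring $\Lambda(\tilde\theta,\hat\theta)$ through $\tilde\Psi\circ\hat\Psi^{-1}$), whereas you also prove the reverse inclusion and, importantly, verify that $\Psi\theta$ actually satisfies the observer conditions of Eq.~(\ref{eq:obsdef}) --- in particular the spatial condition, obtained by differentiating $H_x(\Psi k)=H_x(k)$ --- which the paper silently assumes. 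Your observation about the connected component $C_x$ is a genuine subtlety the paper does not address: a local linear symmetry (e.g.\ $\Psi=-\mathbf{1}$ for quadratic $H$) need not preserve $C_x$, in which case $\Psi\theta^0\notin C_x$ and $I_\theta(\Psi)$ would not be an observer transformation; your fix of restricting $\cg^{\rm dis}$ to $C_x$-preserving symmetries (or to the identity component) is the right repair and leaves the paper's subsequent uses of the proposition intact.
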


The proof is given in Appendix \ref{sec:proofprop2}.

It follows from Proposition 2 that the remaining operational ignorance about the relation between observer co-tetrads after classical communication is encoded in a group structure. This is precisely the internal frame orientation preserving transformation group $\mathcal{G}^\mathrm{or}$ of Sec.~\ref{sec_wald}, specified to a Hamiltonian spacetime. Furthermore, Proposition 2 says that $\mathcal{G}^\mathrm{or}$ is isomorphic to the group of local \emph{linear} symmetries $\cg^{\rm dis}$ of the Hamilton function $H$. The latter thereby also quantifies the amount of ``unspeakable information''. Notice that these results are subject to our choice of the observer encoding functions $f_{\hat \theta}(K)$ being linear.

%%%%%%%%%%%%%%%%%%%%%%%%%%%%%%%%%%%%%%%%%%%
\subsection{The three groups $\cg^{\rm op}$, $\cg^{\rm or}$ and $\cg^{\rm dis}$ and their relation}\label{sec_3g}

In this work, we thus have three groups appearing: 
\begin{description}
\item[$\cg^{\rm op}$] This operational group comes directly out of the structure of quantum matter. It derives from the technical implementation of the LMP in Sec.~\ref{sec_lmp} and the premise that local inertial reference frames are fundamentally made up of quantum matter. Recall, in particular, that, owing to our finite-dimensionality assumption in Sec.~\ref{sec_lmp}, this group acts on observables and state spaces of {\it discrete} quantum matter degrees of freedom. 
\item[$\cg^{\rm or}$] This is the internal frame orientation preserving transformation group, generally introduced in Sec.~\ref{sec_wald} and in \ref{sec:ObsDispRleations} specified as the symmetry group of operational equivalence classes of observers who see the same mass shell encoding. Hence, it arises from dispersion relations and acts on {\it continuous} momentum degrees of freedom. At the same time, it follows from Sec.~\ref{sec:relloc} that it also acts on the {\it discrete} quantum matter degrees of freedom. As such, this group is the  connection between {\it continuous} and {\it discrete} degrees of freedom and, specifically, between our discussion of the quantum matter and the effectively classical dispersion relations.
\item[$\cg^{\rm dis}$] This is the group of local and \emph{linear} symmetries of the (encoding of the) dispersion relation $H_x$. In particular, a priori it acts purely on {\it continuous} momentum degrees of freedom.
\end{description}

A priori, their relation is not obvious. However, Theorem \ref{thm_gopgor} and Proposition \ref{prop:obstrans} provide their crucial links. 
\begin{description}
\item[Proposition \ref{prop:obstrans}] proves that $\cg^{\rm dis}$ and $\cg^{\rm or}$ are actually isomorphic. 
\item[Theorem \ref{thm_gopgor}] does {\it not} prove that, in turn, $\cg^{\rm or}$ and $\cg^{\rm op}$ are isomorphic too (recall the discussion at the end of Sec.~\ref{sec:relloc}). However, it uses the LMP to constrain $\cg^{\rm or}$ by exploiting that it must be a subgroup of $\cg^{\rm op}$. The precise technical implications depend on the physical background assumptions. 
\begin{description}
\item[Observation~\ref{ObsGop}:] The physical background assumptions leading to it imply, firstly, that $\cg^{\rm op}\supseteq\rm{SO}(3)$ and, in consequence, through Theorem \ref{thm_gopgor}, that also $\cg_0^{\rm or}\supseteq\rm{SO}(3)$, where $\cg_0^{\rm or}$ is the connected component at the identity. (We recall from Theorem \ref{thm_gopgor} that also the trivial case $\cg_0^{\rm or}=\{\mathbf{1}\}$ appears, which too has a consistent physical interpretation, as discussed at the end of Sec.~\ref{sec:relloc}. But we shall henceforth ignore this rather exotic case.)
\item[Observation~\ref{ObsGop2}:] The physical background assumptions leading to it imply, firstly, that $\cg^{\rm op}\supseteq\mathbb{R}_+\times\rm{O}(3,1)$ and, in consequence, through Theorem \ref{thm_gopgor}, that $\cg_1^{\rm or}\supseteq\rm{SO}^+(3,1)$, where $\cg_1^{\rm or}$ is the connected component at the identity with multiples of the identity, i.e.\ $\mathbb{R}_+$, factored out. (Similarly, Theorem \ref{thm_gopgor} also permits the trivial case $\cg_1^{\rm or}=\{\mathbf{1}\}$, which too has a consistent interpretation, see Sec.~\ref{sec:relloc}, but again we shall henceforth ignore it.)
\end{description}
Recall that the LMP, through Theorem \ref{thm_gopgor}, says nothing about discrete transformations in $\cg^{\rm or}$. However, this is not a caveat for our discussion. As discussed at the end of Sec.~\ref{sec:relloc}, observers can always restrict to the connected component at the identity as this is operationally distinguished. In many spacetime structures, observers could also even agree by classical communication whether their labs are related by discrete transformations such as parity or time reversal. For instance, this is the case in general relativistic spacetimes with Standard Model matter where observers could exploit CP violation to agree on the handedness of their frames or exploit a possible global time orientation to also agree on the time direction. \\
Similarly, the LMP is silent on whether a factor $\mathbb{R}_+$ is actually part of $\cg^{\rm or}$ or not. In the context of dispersion relations, this factor $\mathbb{R}_+$ would amount to rescalings of the mass shell. For homogeneous dispersion relations a factor $\mathbb{R}_+$ would thus be a symmetry of $H_x$, while it would {\it not} be a symmetry for inhomogeneous dispersion relations. As such, \emph{the LMP is a priori compatible with both homogeneous and inhomogeneous dispersion relations}. Notice also that observers in the inhomogeneous case would always be able to agree by classical communication on whether their respective mass shell encodings are related by a rescaling. In the homogeneous case, this depends on some background assumptions. For instance, under the assumption that the same matter theory applies everywhere in the universe, in analogy to the Standard Model in Lorentzian spacetimes, observers might agree by classical communication on what an `electron' would be and agree to use this as a mass standard. In that case, the $\mathbb{R}_+$ relation would, in fact, also be part of the ``speakable'' information. 

\end{description}
	
For operational purposes, it is thus sufficient to restrict to the connected component of $\cg^{\rm or}$ at the identity modulo rescalings, i.e.\ to $\cg^{\rm or}_1$. We will now exploit this to formulate the constraint that the LMP imposes on dispersion relations. To this end, given that $\cg^{\rm or}\simeq\cg^{\rm dis}$, we also write $\cg^{\rm dis}_1$ for the connected component of $\cg^{\rm dis}$ at the identity modulo possible rescalings.
	
%%%%%%%%%%%%%%%%%%%%%%%%%%%%%%%%%%%%%%%%%%%%%%%%%%%%%%%%%%%%%%%%%%%%%%%%%%%%%%%%%%%%
\subsection{Implications of the LMP {for spacetimes with dispersion relation}}
\label{sec:implLMP}

Enforcing the LMP on a Hamiltonian spacetime, the above relations, implied by \emph{both} Theorem \ref{thm_gopgor} and Proposition~\ref{prop:obstrans}, thus lead to the following conclusions:
\begin{itemize}
	\item If $\mathcal{G}_0^\mathrm{or}\simeq \cg^{\rm dis}_0\supseteq SO(3)$, complying with Observation \ref{ObsGop}, 
	Theorem \ref{thm:LRI} immediately implies that $H_x$ is a function of two variables, where one of these is defined by a three dimensional Euclidean metric norm of momenta.
	\item If $\mathcal{G}^\mathrm{or}_1\simeq\cg^{\rm dis}_1 \supseteq SO^+(3,1)$, complying with Observation \ref{ObsGop2}, 
	Theorem \ref{thm:LLI} immediately implies that $H_x$ is a function of a four-dimensional Lorentzian metric. If, additionally, $H_x$ is a homogeneous polynomial of even degree (as for example when induced by linear matter field equations with well posed initial data problem \cite{Raetzel:2010je}), Euler's homogeneous function theorem directly implies $H_x = \alpha (g^{-1}_x)^n$, where $g$ is a Lorentzian metric and $\alpha\in\mathbb{R}$ and $n\in\mathbb{N}$ even. Then, $(\mathcal{M},H_x)$ is a Lorentzian spacetime. In other words, \emph{given a well defined initial value problem for a homogeneous dispersion relation, our LMP  entails a Lorentzian metric spacetime.}
\end{itemize}

In the next section, we will discuss three examples of Hamiltonian spacetimes: Lorentzian spacetimes, a spacetime inspired by electrodynamics in a uniaxial crystal and $\kappa$-Poincar\'e dispersion relations. Furthermore, we will interpret the above conclusions for each example in the context of operational ignorance and ``speakable information''.

%%%%%%%%%%%%%%%%%%%%%%%%%%%%%%%%%%%%%%%%%%%%%%%%%%%%%%%%%%%%%%%%%%%%%%%%%%%%%%%%%%%%
\subsection{Example dispersion relations}\label{sec:ex}
After the general discussion we explain our findings on specific example dispersion relations and clarify their relation to the LMP. The fundamental assumption of the LMP, that quantum matter decouples completely from the geometry of spacetime is only satisifed in the first example of Lorentzian metric spacetimes. In the uniaxial crystal spacetime two vector fields define the geometry of spacetime as additional structure to the metric. For $\kappa$-Poincar\'e spacetimes one additional vector field to the metric defines the dispersion relation. In the context of quantum, gravity phenomenology, these additional fields effectively describe the interaction of point particles with the quantum nature of gravity on a certain scale. Apart from the effective quantum gravity interpretation, which employ here, in particular the uni-axial crystal spacetime also describes the propagation of light through a medium, which is where this model actually originated from.

%%%%%%%%%%%%%%%%%%%%%%%%%%%%%%%%%%%%%%%%%%%%%%%%%%%%%%%%%%%%%%%%%%%%%%%%%%%%%%%%%%%%
\subsubsection{Lorentzian spacetimes}\label{ssec:LLI}

For a Lorentzian spacetime $(\mathcal{M},g)$, equipped with a metric $g$ with signature $\{1,-1,-1,-1\}$, the point particle dispersion relation is given by $H_x(k) = g^{-1}_x(k,k)$. The local symmetries of the Hamiltonian spacetime $(\mathcal{M},H)$ are given by the Lorentz group, as we proved in Theorem~\ref{thm:LLI}. From Proposition \ref{prop:obstrans} it follows that $\mathcal{G}^\mathrm{or}={\rm SO}^+(3,1)$ and, therefore, Lorentzian spacetimes comply with Observation \ref{ObsGop2} and the technical form of the LMP can be fulfilled. Let us illustrate the operational meaning of $\mathcal{G}^\mathrm{or}={\rm SO}^+(3,1)$ a bit further. Since handedness of co-tetrads and a time orientation can be communicated classically (see above), the group ${\rm SO}^+(3,1)$ quantifies the amount of information that cannot be communicated classically between two observers. The dimension of this group of symmetries is 6.  A co-tetrad has 16 free components. The observer definition conditions in Eq.~\eqref{eq:obsdef} become 
\begin{align}
	g^{-1}(\hat \theta^0,\hat \theta^0) = m^2\quad \textrm{ and }\quad g^{-1}(\hat \theta^A,\hat \theta^0) = 0
\end{align} 
and fix 4 of them. Therefore, we find that the manifold of observer co-tetrads has dimension twelve. Hence, the amount of information that can be gained by comparing the mass shells of two observers is of dimension 6 in the case of a Lorentzian spacetime. The remaining freedom of observers is to orient their spatial co-tetrad $\{\hat \theta^A\}$. The mass-shell encoding of two observers is identical if and only if these observers orient their spatial co-tetrad $\{\hat \theta^A\}$ and $\{\tilde \theta^A\}$ in the same way with respect to the spacetime metric. Thus two observers are in they same equivalence class with respect to their mass-shell encoding $\{\hat \theta^a\}\sim\{\tilde \theta^a\}$ if and only if $g(\hat \theta^A,\hat \theta^B) = g(\tilde \theta^A,\tilde \theta^B)$. This is exactly the information to fix spatial momentum scales and angles. In other words, if all observers fix their spatial co-tetrad components as $g^{-1}(\theta^A,\theta^B)=0$ for $A\neq B$ and $g^{-1}(\theta^A,\theta^A)=1$, there is no additional information to be gained from comparing their respective mass shells; no orientation of the observer's measurement devices is distinguished. This is not the case in a generic Hamiltonian spacetime as we will see in the next example.

%%%%%%%%%%%%%%%%%%%%%%%%%%%%%%%%%%%%%%%%%%%%%%%%%%%%%%%%%%%%%%%%%%%%%%%%%%%%%%%%%%%%
\subsubsection{Uniaxial crystal spacetime}
Maxwell's equations inside a uniaxial crystal are given as \cite{Perlick,Fewster:2017mtt}
\begin{align}
\mathcal{F}_{\mu\nu} = \partial_{[ \mu}A_{\nu ]},\quad (2\eta^{\mu[\sigma}\eta^{\rho]\nu} +4 X^{[\sigma} U^{\rho]} X^{[\nu} U^{\mu]})\partial_\rho \mathcal{F}_{\mu\nu}=0\,.
\end{align}
Particle propagation inside a uniaxial crystal is then governed by the Hamiltonian function
\begin{align}
H_x(k)
&= (\eta^{\mu\nu}k_\mu k_\nu)(\eta^{\rho\sigma} - \xi^2 U^\rho U^\sigma + X^\rho X^\sigma )k_\rho k_\sigma = \eta^{-1}(k,k)\zeta^{-1}(k,k)\,.
\end{align}
The vector field $U$ represents the rest-frame of the crystal, while $X$ represents the crystal's optical axis. They satisfy the orthonormality conditions
\begin{align}
\eta(U,U)=1,\quad \eta(X,U) = 0,\quad \eta(X,X) = \xi^2\,.
\end{align}
Therefore, we can always find a coordinate system in which $\eta^{-1}=\mathrm{diag}(1,-1,-1,-1)$, $U=(1,0,0,0)$ and $X=(0,\xi,0,0)$. In this coordinate system, we find $\zeta^{-1}=\mathrm{diag}((1-\xi^2),-(1-\xi^2),-1,-1)$. Hence, the remaining symmetry of the dispersion relation is ${\rm O}(1,1)\times {\rm O}(2)$, which has two real parameters. Since time orientation and handedness can be communicated classically, they reduce the symmetry group to ${\rm SO}^+(1,1)\times {\rm SO}(2)$. Proposition \ref{prop:obstrans} then implies that $\mathcal{G}^\mathrm{or}={\rm SO}^+(1,1)\times {\rm SO}(2)$ and we find that the uniaxial crystal spacetime neither complies with Observation \ref{ObsGop2} nor Observation \ref{ObsGop}; our technical form of the LMP cannot be fulfilled. In the following, we illustrate the meaning of $\mathcal{G}^\mathrm{or}={\rm SO}^+(1,1)\times {\rm SO}(2)$ a bit further. We can conclude that the mass shell can be used to fix the co-tetrad components up to two real parameters (an alternative, more precise derivation of this property is given in Appendix \ref{sec:altinf}). Since the conditions in Eq.~\eqref{eq:obsdef} fix observer co-tetrads already up to 12 free components by the relations
\begin{align}
	\eta^{-1}(\hat\theta^0,\hat\theta^0)\zeta^{-1}(\hat\theta^0,\hat\theta^0) = m^4\quad \textrm{ and }\quad \eta^{-1}(\hat\theta^A,\hat\theta^0)\zeta^{-1}(\hat\theta^0,\hat\theta^0) + \eta^{-1}(\hat\theta^0,\hat\theta^0)\zeta^{-1}(\hat\theta^A,\hat\theta^0) = 0\,,
\end{align} 
the dimension of the manifold of information to be gained from observers comparing mass shell encodings is 10. Again, there remains a freedom of choice in the observer co-tetrad which is the orientation of the spatial co-tetrad components with respect to each other and with respect to the two lightcones. Demanding that two observers have the same mass-shell encoding, $H_{\hat{\theta},x} = H_{\tilde{\theta},x}$, their co-tetrads $\{\hat \theta^a\}$ and $\{\tilde \theta^b\}$ must satisfy
\begin{align}\label{eq:massshelluni}
	\eta^{-1}(\hat \theta^{(a},\hat \theta^b)\zeta^{-1}(\hat \theta^c,\hat \theta^{d)}) = \eta^{-1}(\tilde \theta^{(a},\tilde \theta^b)\zeta^{-1}(\tilde \theta^c,\tilde \theta^{d)})\,.
\end{align}
This is identically satisfied for  $a=b=c=d=0$ and $a=A, b=c=d=0$ by the definition of the observer co-tetrads, but yields additional conditions for any other choice of the indices. Observers who satisfy these equations lie in one equivalence class with respect to their mass shell encoding.

This property shows that fixing spatial co-tetrad would not use all the available information that can be gained by comparing the mass shells in a uniaxial crystal spacetime since \eqref{eq:massshelluni} also imposes additional constraints between the spatial and the $0$-tetrad of different observers in order to have the same mass shell encoding. For example, observers could agree to fix their co-tetrads such that their encodings of the mass shell becomes $(\hat{k}_a {B_{\eta}}^{ab}\hat{k}_b)( \hat{k}_d {{L^{\tau}}^d}_e{B_{\zeta}}^{ef} {L_f}^e \hat{k}_e) = m^4$, where the matrices are given as $B_{\eta}=\mathrm{diag}(1,-1,-1,-1)$ and $B_{\zeta} = \mathrm{diag}((1-\xi^2),-(1-\xi^2),-1,-1)$ and $L$ is an element of the Lorentz group.

%%%%%%%%%%%%%%%%%%%%%%%%%%%%%%%%%%%%%%%%%%%%%%%%%%%%%%%%%%%%%%%%%%%%%%%%%%%%%%%%%%%%
\subsubsection{Quantum gravity phenomenology inspired dispersion relations}\label{ssec:kappa}
In quantum gravity phenomenology and in the systematic study of violation of Local Lorentz invariance, inhomogeneous dispersion relations are being explored. A most famous model, which originates in the quantum deformations of the Poincar\'e group \cite{Lukierski:1991pn} and is investigated in the context of non-commutative spacetime \cite{Majid:1994cy}, as well as curved momentum space \cite{Gubitosi:2013rna} geometry models, is the $\kappa$-Poincar\'e dispersion relation.  On flat spacetime it reads
\begin{align}
H_{\kappa}(k) = \frac{4}{\ell^2}\sinh\left(\tfrac{\ell}{2}k_0\right)^2 - e^{\ell k_0}\vec k^2\,,
\end{align}
while on curved spacetime it becomes \cite{Barcaroli:2017gvg}
\begin{align}\label{eq:kappacov}
H_{Z\kappa}(k) = \frac{4}{\ell^2}\sinh\left(\tfrac{\ell}{2}Z(k)\right)^2 - e^{\ell Z(k)}(g^{-1}(k,k) -  Z(k)^2)\,,
\end{align}
where $g$ is a Lorentzian spacetime metric and $Z$ a unit timelike vector field of $g$. Observe that, according to Theorem \ref{thm:LRI}, this Hamiltonian has a ${\rm SO}(3)$ symmetry defined by the metric $h = g - Z\otimes Z$ on the subspace spanned by the co-vectors which are annihilated by $Z$.

To demonstrate what our findings imply for inhomogeneous Hamiltonians, we apply our arguments to \eqref{eq:kappacov}. The equal mass-shell condition $H_{\hat{\theta},x} = H_{\tilde{\theta},x}$ for two observer co-tetrads $\{\hat \theta^a\}$ and $\{\tilde \theta^a\}$ is satisfied if the co-tetrads satisfy sufficient condition
\begin{align}\label{eq:ems}
Z(\hat \theta^a) = Z(\tilde \theta^a)\quad \textrm{ and } \quad g^{-1}(\hat \theta^a,\hat \theta^b) =  g^{-1}(\tilde \theta^a,\tilde \theta^b)\,.
\end{align}
Hence, all observer co-tetrads which satisfy these conditions lie in the same equivalence class and see the same mass shell encoding. 

For the definition of observers there are two possibilities in the context to effective quantum gravity. The first is to argue that observers, as well as the point particles they observe, are subject to the $\kappa$-Poincar\'e dispersion relation. In this case, our definition \eqref{eq:obsdef} of an observer co-tetrad $\{\hat \theta^a\}$ yields
\begin{align}
	\frac{4}{\ell^2}\sinh\left(\tfrac{\ell}{2}Z(\hat \theta^0)\right)^2 - e^{\ell Z(\hat \theta^0)}(g^{-1}(\hat \theta^0,\hat{\theta^0}) -  Z(\hat \theta^0)^2 ) &= m^2\,,\\
	Z_\mu \bar\partial^\mu H(\hat \theta^0) Z(\theta^A) - e^{\ell Z(\hat \theta^0)}(g^{-1}(\hat \theta^0,\hat \theta^A) - Z(\hat{\theta^0})Z(\hat \theta^A)) &= 0\,.
\end{align}
which are additional constraints to the equal mass-shell conditions \eqref{eq:ems}.

The second case is to argue that  observers themselves are not affected by quantum gravity effects and thus not subject to the $\kappa$-Poincar\'e dispersion relation but only to its classical limit $\ell\to 0$, which is given by the Lorentzian metric dispersion relations. In this case the observer definition is given by the classical general relativistic observer co-tetrad conditions
\begin{align}
	g^{-1}(\hat \theta^0,\hat \theta^0) = m^2,\quad \textrm{ and }\quad g^{-1}(\hat \theta^0, \hat \theta^A) = 0\,,
\end{align}
which imply that the $a=0,b=0$ and $a=0,b=B$ equations of the second equal mass-shell constraints \eqref{eq:ems} are identically satisfied for observers. In addition to the orthogonality of their spatial co-tetrad, the observers can use the projection of their co-tetrad onto the vector field $Z$, to align their mass-shells.

%%%%%%%%%%%%%%%%%%%%%%%%%%%%%%%%%%%%%%%%%%%%%%%%%%%%%%%%%%%%%%%%%%%%%%%%%%%%%%%%%%%%%
\section{Conclusions}

In this work, we have used a novel operational perspective to revisit the question of what distinguishes the description of the spacetime's geometry in terms of a Lorentzian metric from other more general possibilities. Our approach revisits the notion of local inertial laboratories, starting from the observation that both the Einstein equivalence principle and the global Mach principle, which essentially specify what local inertial laboratories are, can be interpreted in generalized geometric structures too. A reason why the EEP and GMP are compatible with many effective spacetime structures is that they are silent on what structures an agent in a local inertial laboratory may maximally exploit for operationally defining the (spatial and temporal) orientation of their reference frame. But it is these local frame orientations and their mutual relations, which encode a lot of information about the local spacetime structure. In particular, in Lorentzian spacetimes, these local orientations of freely falling frames are related by Lorentz transformations. Contemplating operational frame orientations is thus a promising handle for constraining the spacetime environment of a local inertial laboratory that abides by operational formulations of the EEP and GMP.

To this end, we departed from the usual idealizations of local inertial frames in general relativity and took serious that any reference frame is actually a {\it physical} system, which we here describe operationally in terms of a laboratory. A next step is to realize that any experiments, which an agent inside a local laboratory can perform to test its inertial nature, involve matter physics, which ultimately must be described by quantum theory. It is thus natural to characterize local inertial laboratories in terms of the local quantum matter physics inside of it.  Being in free fall, there are no experiments which the agent can carry out to test whether there is any matter outside of their lab and in this sense it can be considered as a system that is operationally isolated from the rest of the matter content of the universe.

Given that we interpret the effective spacetime environment to emerge as a suitable coarse-graining limit of some quantum gravity state, the next natural question to ask is whether the matter physics inside the local inertial lab should or should not be isolated from direct interactions with these  quantum gravitational degrees of freedom too (at the relevant scales of our discussion). While the latter option is clearly an interesting case for quantum gravity phenomenology (on which we also commented), we have mostly focused on the former case in this work, in line with our aim to identify operational statements that single out Lorentzian metrics from within generalized geometries, which we here took to be defined by dispersion relations. 

Specifically, in this light, we have formulated what we called a {\it local Mach principle}, which essentially extends the notion of a local inertial laboratory by requiring that the quantum matter inside of it is {\it also} isolated from net interactions with effective quantum gravitational degrees of freedom (washed out, e.g., through renormalization). Operationally, this means that such a maximally inertial local laboratory must be self-sufficient, so that the agent must generate any reference structures, relative to which they may orient their frame, from the quantum matter physics inside it -- without any `external' help from effective quantum gravity properties. This essentially entails (i) that transformations between different operational frame orientations must emerge purely from quantum matter structures, and (ii) that the local spacetime structure, as seen from a local inertial laboratory, to be completely isotropic (except for a distinction between time and space). However, it does not yet imply by itself a Lorentzian metric structure from within spacetime structures defined by dispersion relations.

In fact, various additional ingredients are necessary to arrive at that conclusion, even after formalizing a version of the LMP in terms of local quantum theory. We note that we had also assumed that, while the local quantum physics will certainly involve, e.g., momentum modes, the agent inside the laboratory may restrict to {\it discrete}, i.e.\ finite-dimensional quantum matter degrees of freedom for the purpose of defining the orientation of their frame. This is certainly a non-trivial restriction, however, as argued, we believe that this assumption can probably be dropped without modifications of the main implications if one works with von Neumann operator algebras. 

At the quantum level, we also had to specify further physical background assumptions to which we subjected the agent in the local laboratory. In particular, we have distinguished the situations in which the {\it eigenvalues} of quantum observables are frame-independent and in which only the {\it rank} of the observables (i.e.\ their zero-eigenvalues) are frame-independent. The former situation led to Observation \ref{ObsGop} and the conclusion that at least three-dimensional rotations must be contained in the set of transformations relating different frame operational frame orientations. The latter situation led to Observation \ref{ObsGop2} and the conclusion that these transformations must, at least, include the orthochronous Lorentz group.

Subsequently, we combined these results with the effective spacetime structures defined by dispersion relations, leading to non-trivial consistency conditions. Firstly, we considered possible observer relations in such spacetime structures, exploiting all operationally accessible structure that observers could, in principle communicate by classical communication (``speakable information''). Specifically, we exploited the mass shell encodings, defined through observer frames (under a choice of momentum encoding map) and the dispersion relation, which here constitutes the ``speakable information''. We showed that if the observers agree on the mass shell encoding, there is no further ``speakable information'' and that the remnant relation between their frames is encoded in a group structure, which defines local and \emph{linear} symmetries of the dispersion relation. 

As we showed, the conjunction with the LMP then implies that these linear and local symmetries of the dispersion relation {\it must} contain at least either the rotations (in the case of Observation \ref{ObsGop}), or the orthochronous Lorentz group (in the case of \ref{ObsGop2}). This was sufficient to show that the dispersion relation {\it must} be a function of a Euclidean metric, in the former case, or even of a Lorentzian metric alone (and momentum vectors), in the latter case. However, the second result by itself does {\it not} imply that the effective spacetime structure is a Lorentzian metric spacetime. For example, the LMP and the second result are consistent with inhomogeneous dispersion relations such as $H_x(k) = \sum_i a_i\ g_x(k,k)^i$, where $g_x(\cdot,\cdot)$ is a Lorentzian metric. As we illustrated in the examples, general relativistic spacetimes do {\it not} feature such dispersion relations (if more than one $a_i\neq0$).

Only if, in addition, we assume the dispersion relation to (a) be homogeneous, (b) be of even degree, and (c) give rise to a well defined initial value problem for the matter described by it \cite{Raetzel:2010je}, does one eventually arrive at Lorentzian metric spacetimes. That is, the entire sequence of steps summarized here is necessary to physically single out Lorentzian metric spacetimes from within spacetime structures defined by dispersion relations. We emphasize, however, that the LMP has played a key role in this sequence.

Our discussion was based on the interpretation that the effective spacetime structures defined by dispersion relations amount to a large-scale coarse-graining limit of some quantum gravity state. However, one could, of course, also interpret some of these spacetime structures alternatively as describing an effective matter structure, the description being technically identical. For example, the uniaxial crystal spacetime \cite{Perlick,Fewster:2017mtt} could also be interpreted as literally a crystal background in which light propagates. The reason we have not entertained such an interpretation here is that it would violate two core assumptions going into the LMP: firstly, that the matter inside of the local inertial laboratory is isolated from any interactions with external matter (the crystal structure, just like the magnetic field $h$ in Example~\ref{ExHeisenberg}, would have to be considered external to the lab) and, secondly, that the agent can, in principle, control all the matter degrees of freedom in their laboratory. 

Finally, we emphasize that a violation of the LMP does {\it not} imply an outright deficiency or operational non-viability of the corresponding dispersion relation. Quite the contrary, in line with our general discussion, the corresponding effective spacetime structure would here be interpreted as a large-scale limit of some quantum gravity state in which the net interactions of the local quantum matter with quantum gravitational degrees of freedom have {\it not} been washed out through renormalization at the relevant laboratory scales. This, in fact, would be the most interesting case for quantum gravity phenomenology.

%%%%%%%%%%%%%%%%%%%%%%%%%%%%%%%%%%%%%%%%%%%%%%%%%%%%%%%%%%%%%%%%%%%%%%%%%%%%%%%%%%%%%
\section*{Acknowledgements}

PH acknowledges support through a Vienna Center for Quantum Science and Technology Fellowship.
CP acknowledges support by the European Regional Development Fund
through the Center of Excellence TK133 ``The Dark Side of the Universe''.
DR thanks the Humboldt Foundation for funding his research
with their Feodor-Lynen Fellowship.
This research was supported in part by Perimeter Institute for
Theoretical Physics. Research at Perimeter Institute is supported by the
Government of Canada through the Department of Innovation, Science and
Economic Development Canada and by the Province of Ontario through the
Ministry of Research, Innovation and Science. This publication was made
possible through the support of a grant from the John Templeton
Foundation. The opinions expressed in this publication are those of the
authors and do not necessarily reflect the views of the John Templeton
Foundation.

\appendix

%%%%%%%%%%%%%%%%%%%%%%%%%%%%%%%%%%%%%%%%%%

\section{Relation between $\cg^{\rm op}$ and $\cg^{\rm or}$}\label{app_gopgor}

We provide the proof of theorem \ref{thm_gopgor}.

\noindent \textbf{Theorem \ref{thm_gopgor}}}. \textit{Let $\cg^{\rm or}_0$ be the connected component at the identity of $\cg^{\rm or}$. If $\mathbb{R}_+$ is a subgroup of $\cg^{\rm or}_0$, we define $\cg^{\rm or}_1$ via $\cg^{\rm or}_0=\cg^{\rm or}_1\times\mathbb{R}_+$, and otherwise $\cg^{\rm or}_1:=\cg^{\rm or}_0$.}
		
\textit{Under the physical background assumptions that lead to Observation~\ref{ObsGop}, we find that either $\cg^{\rm or}_1=\{\mathbf{1}\}$ (the trivial group), or $\cg^{\rm or}_1\supseteq {\rm SO}(3)$. Under the modified physical background assumptions (i), (ii) and (iii) that lead to Observation~\ref{ObsGop2}, and upon invoking Conjecture~\ref{ConjTrans}, we find that either $\cg^{\rm or}_1=\{\mathbf{1}\}$ or $\cg^{\rm or}_1\supseteq {\rm SO}^+(3,1)$, the proper orthochronous Lorentz group.}

	\begin{proof}
		Let us start with the choice of background assumptions that lead to Observation~\ref{ObsGop}. We assume that agents have operational access to the spacetime symmetry group, and that they can apply elements of $\cg^{\rm or}_1$ to their measurement devices (or at least counterfactually understand what would happen if they did so). This way, they have access to the generators of their group as observables. That is, let $\tilde\cg^{\rm or}_1$ be the set of unitaries $U$ with unit determinant such that $U\bullet U^\dagger\in\cg^{\rm or}_1$, which is a subgroup of the special unitary group. Consider its Lie algebra $\mathfrak{g}^{\rm or}$, which is a subspace of antisymmetric operators. The set $i\cdot \mathfrak{g}^{\rm or}$ is a set of observables that is operationally distinguished for any observer: choosing any operator basis, agents can tomographically determine the action of the group $\cg^{\rm or}_1$ on their observables, and thus determine a description of $i\cdot\mathfrak{g}^{\rm or}$ (and, if they want, they can construct measurement devices that measure the corresponding observables). The resulting set of observables is independent of the initial choice of basis, in the sense that agents with different choices of basis would be led to the construction of the same family of devices.
		
		Now suppose there exists $V\bullet V^\dagger\in\cg^{\rm op}$ such that $V\mathfrak{g}^{\rm or}V^\dagger \neq \mathfrak{g}^{\rm or}$. Then $V\mathfrak{g}^{\rm or}V^\dagger$ and $\mathfrak{g}^{\rm or}$ are two equivalent subspaces of observables which can be operationally distinguished: one contains all the generators of spacetime symmetries and the other one doesn't. This contradicts the LMP. In other words, the LMP implies that $V\mathfrak{g}^{\rm or}V^\dagger=\mathfrak{g}^{\rm or}$ for all $V\bullet V\in\cg^{\rm op}$, or $e^{tG}Xe^{-tG}\in\mathfrak{g}^{\rm or}$ for all $X\in\mathfrak{g}^{\rm or}$ and $G\in\mathfrak{g}^{\rm op}$, where $\mathfrak{g}^{\rm op}=\bigoplus_{i=1}^n {\rm su}(n_i)$ is the Lie algebra of $\cg^{\rm op}$ (cf.\ Observation~\ref{ObsGop}), with $n_i\geq 2$. Differentiating at $t=0$ implies that $[G,X]\in\mathfrak{g}^{\rm or}$, hence $\mathfrak{g}^{\rm or}$ is an \emph{ideal} of $\mathfrak{g}^{\rm op}$. Taking into account the direct sum decomposition of $\mathfrak{g}^{\rm op}$ into simple Lie algebras ${\rm su}(n_i)$, it follows that there exists $m\in [0,n]$ and $1\leq i_1<i_2<\ldots<i_m\leq n$ such that $\mathfrak{g}^{\rm or}=\bigoplus_{j=1}^m {\rm su}(n_{i_j})$ \cite{bourbaki1975elements}. If $m=0$ then $\mathfrak{g}^{\rm or}=\{0\}$, and hence $\tilde\cg^{\rm or}_1$ and thus $\cg^{\rm or}_1$ is trivial. This describes a situation in which there simply are no spacetime symmetries (other than perhaps rescalings $A\mapsto \lambda A$ that we have factored out from the start). Such a situation is perfectly compatible with the LMP. On the other hand, if $m\geq 1$, $\mathfrak{g}^{\rm or}$ contains at least one summand ${\rm su}(n_i)$ with $n_i\geq 2$. In this case, $\tilde \cg^{\rm or}_1$ contains at least ${\rm SU}(2)$, and, by conjugation, $\cg^{\rm or}_1$ contains at least ${\rm PSU}(2)={\rm SO}(3)$.
		
		Now consider the background assumptions (i), (ii) and (iii) that have led us to Observation~\ref{ObsGop2}. In this case, the elements of $\mathfrak{g}^{\rm or}$ do not transform as observables (they are also neither all symmetric nor all antisymmetric), thus the proof strategy of above cannot be generalized. (In fact, even the notion of operator multiplication and thus of Lie bracket becomes basis-dependent.) We will use a different argumentation. First, note that every element of $\cg^{\rm or}_1$ is a map $A\mapsto Y_1\otimes\ldots \otimes Y_n A Y_1^\dagger\otimes\ldots\otimes Y_n^\dagger$, where $Y_i\in{\rm SL}(n_i,\mathbb{C})$. Thus, we can simply pick one of the $\mathcal{A}_i$ for $1\leq i \leq n$ and study the restriction of the action of $\cg^{\rm or}_1$ to $\mathcal{A}_i$. We will now show that this action is either trivial, or corresponds to the full projective special linear group ${\rm PSL}(n_i,\mathbb{C})$. Only considering one of the $\mathcal{A}_i$ is equivalent to assuming $n=1$, which we will thus do from now on.
		
		Analogously to further above, we can define the group $\tilde\cg^{\rm or}_1$ as the set of all $X\in{\rm SL}(n_1,\mathbb{C})$ such that $X\bullet X^\dagger\in\cg^{\rm or}_1$. We define the group $\tilde\cg^{\rm op}_1$ analogously; note that $\cg^{\rm op}_1$ is the projective special linear group ${\rm PSL}(n_1,\mathbb{C})$, hence $\tilde\cg^{\rm op}_1={\rm SL}(n_1,\mathbb{C})$. A \emph{$\cg^{\rm or}_1$-orbit} is a non-empty set $\{XAX^\dagger\,\,|\,\, X\in\tilde \cg^{\rm or}_1\}$, with $A=A^\dagger$ an arbitrary observable. The LMP implies the following: if $S$ is a $\cg^{\rm or}_1$-orbit, then so is $YSY^\dagger$ for every $Y\in\tilde\cg^{\rm op}_1$. Now, for every pair of non-negative rank-one observables $P,Q$, write $P\sim Q$ if $P$ and $Q$ are elements of a common $\cg^{\rm or}_1$-orbit, i.e.\ if there exists $X\in\tilde\cg^{\rm or}_1$ with $XP X^\dagger = Q$.
		
		Now we make the following claim: \emph{If $P$, $P'$, $Q$, $Q'$ are non-negative rank-one observables, and if $P\sim P'$ and $P\sim Q$ and if there exists $Y\in{\rm SL}(n_1,\mathbb{C})$ with $YPY^\dagger=P'$ and $YQY^\dagger=Q'$ then $P\sim Q'$.} To prove this claim, note that the assumptions imply that $P$, $P'$ and $Q$ are all on the same $\cg^{\rm or}_1$-orbit, and $YPY^\dagger=P'$ means that $Y\bullet Y^\dagger$ maps this orbit onto itself. Hence $Q$ and $Q'$ also lie on this very same orbit, and consequently $P\sim Q'$.
		
		This insight has an interesting consequence. Let $S$ be any $\cg^{\rm or}_1$-orbit, i.e.\ an equivalence class of non-negative rank-one observables with respect to the relation $\sim$. Let $Y\in{\rm SL}(n_1,\mathbb{C})$, and suppose that there exists some $Q\in S$ such that $YQY^\dagger\not\in S$. Then there \emph{cannot} be two elements $P,P'\in S$ with $YPY^\dagger=P'$, otherwise one would obtain a contradiction to the insight above. But this implies that $Y\bullet Y^\dagger$ maps \emph{all} elements of $S$ outside of $S$. On the other hand, if $Z\in{\rm SL}(n_1,\mathbb{C})$ maps at least one element of $S$ inside of $S$, then $ZS Z^\dagger\subseteq S$, and since orbits are disjoint we must actually have $ZSZ^\dagger=S$. In summary, we conclude that to $S$ there corresponds a group $\cg_S\subseteq \cg^{\rm op}_1$ such that
		\begin{itemize}
			\item $Y\bullet Y^\dagger\in\cg_S\Leftrightarrow YSY^\dagger=S$,
			\item $Y\bullet Y^\dagger\in\cg^{\rm op}_1\setminus \cg_S \Leftrightarrow YSY^\dagger\cap S =\emptyset$.
		\end{itemize}
		In particular, $\cg^{\rm or}_1\subseteq \cg_S$. Define the group $\tilde\cg_S:=\{Y\in{\rm SL}(n_1,\mathbb{C})\,\,|\,\, Y\bullet Y^\dagger\in\cg_S\}$. Now here is yet another claim: \emph{If $\psi,\varphi\in\mathbb{C}^{n_1}\setminus\{0\}$ are any two non-zero vectors, then $|\psi\rangle\langle\psi|\sim |\varphi\rangle\langle\varphi|$ if and only if there exists some $X\in\tilde\cg_S$ such that $X\psi=\varphi$, where $S$ is the $\cg^{\rm or}_1$-orbit containing $|\psi\rangle\langle\psi|$.} One direction of this claim is easy to show: if $X\psi=\varphi$ for $X\in\tilde\cg_S$ then $X|\psi\rangle\langle\psi|X^\dagger=|\varphi\rangle\langle\varphi|$ for $X\bullet X^\dagger\in\cg_S$, and so $|\psi\rangle\langle\psi|\in S$ implies that $|\varphi\rangle\langle\varphi|\in S$ too -- that is, $|\psi\rangle\langle\psi|\sim|\varphi\rangle\langle\varphi|$. Conversely, if $|\psi\rangle\langle\psi|\sim|\varphi\rangle\langle\varphi|$ then there is some $Y\in\tilde\cg^{\rm or}_1$ with $Y|\psi\rangle\langle\psi| Y^\dagger=|\varphi\rangle\langle\varphi|$, i.e.\ $Y\psi=e^{i\theta}\varphi$ for some $\theta\in\mathbb{R}$. Since ${\rm SL}(n_1,\mathbb{C})$ is transitive on $\mathbb{C}^{n_1}\setminus\{0\}$ \cite{geatti2012some}. there exists $Z\in{\rm SL}(n_1,\mathbb{C})$ with $Z\varphi=e^{-i\theta}\varphi$. Since $Z|\varphi\rangle\langle\varphi|Z^\dagger=|\varphi\rangle\langle\varphi|$ and $Z\bullet Z^\dagger\in\cg^{\rm op}_1$, we must have $Z\bullet Z^\dagger\in\cg_S$. But $Y\bullet Y^\dagger\in\cg^{\rm or}_1\subseteq \cg_S$, hence $(ZY)\bullet(ZY)^\dagger\in\cg_S$ and $ZY\psi=\varphi$. This proves our intermediate claim.
		
		Suppose that \emph{all} $\cg^{\rm or}_1$-orbits of non-negative rank-one observables contain exactly one element. Then $\cg^{\rm or}_1$ must be the trivial group, which is the first possibility listed in the statement of the theorem. On the other hand, suppose there exists some $\cg^{\rm or}_1$-orbit $S$ that contains more than one element. If all elements of $S$ are scalar multiples of a single element $|\psi\rangle\langle\psi|$ (such that $S$ is the subset of a single ray), then the same is true for $YSY^\dagger$, for all $Y\in{\rm SL}(n_1,\mathbb{C})$. Again using the transitivity of that group on $\mathbb{C}^{n_1}\setminus\{0\}$, it follows every $X\in\tilde \cg^{\rm or}_1$ acts as $X\psi=\lambda_{\psi}\psi$ for every $\psi\in\mathbb{C}^{n_1}\setminus\{0\}$, where $\lambda_{\psi}\in\mathbb{C}$, and this is only possible if $X$ is a multiple of the identity, which is a contradiction -- we have factored out all multiples of the identity from the start. Hence there must be at least two vectors $\psi,\varphi\in\mathbb{C}^{n_1}\setminus\{0\}$ which are linearly independent, and $|\psi\rangle\langle\psi|,|\varphi\rangle\langle\varphi|\in S$. Define the stabilizer group $\tilde\cg_\psi:=\{X\in{\rm SL}(n_1,\mathbb{C})\,\,|\,\, X\psi=\psi\}$ and analogously define $\tilde\cg_\varphi$, and let $\tilde\cg_{\psi,\varphi}$ be the smallest group containing both. It follows that
		\[
		\tilde\cg_\psi\subsetneq \tilde\cg_{\psi,\varphi}\subseteq \tilde \cg_S \subseteq {\rm SL}(n_1,\mathbb{C}).
		\]
		But the stabilizer subgroup $\tilde\cg_\psi$ is a maximal subgroup of ${\rm SL}(n_1,\mathbb{C})$ \cite{king1981some} hence $\tilde\cg_{\psi,\varphi}=\tilde\cg_S={\rm SL}(n_1,\mathbb{C})$. Again using the transitivity of this group, this implies that $S$ in fact contains \emph{all} non-negative rank-one observables; that is, $\tilde\cg^{\rm or}_1$ acts transitively by conjugation on the full set of non-negative rank-one observables.
		
		Now we use Conjecture~\ref{ConjTrans}: it tells us that $\mathcal{\tilde G}^{\rm or}_1={\rm SL}(n_1,\mathbb{C})$ or (if $n_1$ is even) $\mathcal{\tilde G}^{\rm or}_1={\rm Sp}(n_1,\mathbb{C})$. Clearly, for all $n_1\geq 2$, ${\rm SL}(2,\mathbb{C})$ is a subgroup of ${\rm SL}(n_1,\mathbb{C})$, and for $n_1=2$ we have ${\rm SL}(2,\mathbb{C})={\rm Sp}(2,\mathbb{C})$. Furthermore, for $k\geq 2$, ${\rm SL}(k,\mathbb{C})$ is a subgroup of ${\rm Sp}(2k,\mathbb{C})$, which can be seen by mapping $A\in{\rm SL}(k,\mathbb{C})$ to $\left(\begin{array}{cc} A & 0 \\ 0 & (A^\top)^{-1} \end{array}\right)\in{\rm Sp}(2k,\mathbb{C})$. All in all,we obtain that ${\rm SL}(2,\mathbb{C})$ is a subgroup of $\mathcal{\tilde G}^{\rm or}_1$, and thus, by conjugation, ${\rm PSL}(2,\mathbb{C})={\rm SO}^+(3,1)$ is a subgroup of $\cg^{\rm or}_1$.
	\end{proof}

%%%%%%%%%%%%%%%%%%%%%%%%%%%%%%%%%%%%%%%%%%%%%%%%%%%%%%%%%%%%%%%%%%%%%%%%%%%%%%%%%%%%%
\section{Proof Theorem: Local Lorentz invariant dispersion relations}\label{app:thmLLI}
In Sect.~\ref{sec:DispHam} we stated the Theorems \ref{thm:LLI} and \ref{thm:LRI}. Here we present their proofs. \\

\noindent \textbf{Theorem \ref{thm:LLI}} (Local Lorentz invariant dispersion relations).
\textit{Consider a Hamiltonian spacetime $(M,H)$ and let $g$ be some Lorentzian spacetime metric. The generators of local Lorentz transformations on the cotangent spaces of spacetime $M^{\mu\nu} = g^{\mu\sigma}k_\sigma \bar{\partial}^\nu - g^{\nu\sigma}k_\sigma \bar{\partial}^\mu$ generate local symmetries of $H$ if and only if $H_x(k) = h_x(w(k))$, where $w(k) = g^{\mu\nu}(x)k_\mu k_\nu$ and $h_x(w)$ is a function in one variable only.}\\

\noindent {\textbf {Proof:}} Assume $H_x(k) = H_x(w(k))$ with $w(k) = g^{\mu\nu}(x)k_\mu k_\nu$. Then, 
\begin{align}
M^{\mu\nu}(H_x) = (g^{\mu\sigma}k_\sigma \bar{\partial}^\mu w - g^{\nu\sigma}k_\sigma \bar{\partial}^\nu w)\partial^w H_x = 0\,.
\end{align} 
The other way around, assume that $M^{\mu\nu}(H_x) = 0$, then $k_\nu M^{\mu\nu}H_x = 0$ implies
\begin{align}\label{eq:thm2prf1}
\bar{\partial}^\mu H_x = g^{\mu\sigma} k_\sigma \frac{k_\rho \bar{\partial}^\rho H_x}{w} = k^\mu \frac{Q}{w},\quad \textrm{where } g^{\mu\sigma} k_\sigma = k^\mu,\ Q = k_\rho \bar{\partial}^\rho H_x\,.
\end{align}
Now introduce new coordinates $\tilde k_0(k) := w(k),\ \tilde k_A(k) := k_A,\ A=1,2,3$ and use them to study the consequences of \eqref{eq:thm2prf1}. For $\mu=0$ we conclude from 
\begin{align}
\bar{\partial}^0 H_x = k^0 \frac{Q}{w}\quad &\textrm{and}\quad \bar{\partial}^0 H_x = (\bar{\partial}^0 \tilde k_\nu) \tilde {\bar \partial}^\nu H_x = 2 k^0 \tilde {\bar \partial}^0 H_x
\end{align}
that $2 w \tilde {\bar \partial}^0 H_x = Q$. For $\mu=A$ we use 
\begin{align}
\bar{\partial}^A H_x = k^A \frac{Q}{w}\quad &\textrm{and}\quad \bar{\partial}^A H_x =(\bar{\partial}^A \tilde k_\nu) \tilde {\bar \partial}^\nu H_x = 2 k^A \tilde {\bar \partial}^0 H_x + \tilde {\bar \partial}^A H_x = k^A \frac{Q}{w} + \tilde {\bar \partial}^A H_x
\end{align}
to finally conclude that $\tilde {\bar \partial}^A H_x = 0$. Hence $H_x(k(\tilde k)) = h_x(\tilde k_0) = h_x(w)$ is just a function of the variable $w$. $\square$\\

\noindent \textbf{Theorem \ref{thm:LRI}} (Rotational invariant dispersion relations).
\textit{Consider a Hamiltonian spacetime $(M,H)$ and let $\Sigma_x\subset T_xM$ be a three dimensional sub-vector space of $T^*_xM$ equipped with a positive/negative definite scalar product $s$ and coordinates $\{p_A = A^\mu{}_A(x) k_\mu\}_{A=1}^3$. The generators of orthogonal transformations in $(\Sigma_x,s)$ are $M^{AB} = s^{AC}p_C \bar{\partial}^B - s^{BC}p_C \bar{\partial}^A$. They generate local symmetries of $H$ if and only if $H_x(k) = r_x(p_0,v(p))$, where $p_0$ completes the sub-vector space coordinates $\{p_A\}_{A=1}^3$ to coordinates on $T^*_xM$, $v(p) = s^{AB}(x)p_Ap_B$ and $r_x(p_0,v)$ is a function in two variables only.}\\

\noindent {\textbf {Proof:}} The proof works analogue to one of Theorem \ref{thm:LLI}. Assume $H_x(k) = H_x(p_0,v)$ with $v(p) = s^{AB}(x)p_Ap_B$, then 
\begin{align}
	M^{AB}(H) = (s^{AC}p_C \bar{\partial}^B v- s^{BC}p_C \bar{\partial}^A v)\partial^v H_x = 0\,.
\end{align}
The other way around, assume that $M^{AB}(H_x) = 0$, then $p_B M^{AB}H_x = 0$ implies
\begin{align}\label{eq:thm3prf1}
	\bar{\partial}^A H_x = s^{s^{AC}p_C}\frac{p_B\bar{\partial}^BH_x}{v} = p^A \frac{R}{v},\quad \textrm{where } s^{AC} p_C = p^A,\ R = p_B\bar{\partial}^BH_x\,.
\end{align}
Again introduce new coordinates $\tilde p_1(k) := v(p),\ \tilde p_\mathfrak{a} := p_\mathfrak{a},\ \mathfrak{a}=2,3$ and use them to study the consequences of \eqref{eq:thm3prf1}. For $A=1$ we conclude from 
\begin{align}
\bar{\partial}^1 H_x = p^1 \frac{R}{v}\quad &\textrm{and}\quad \bar{\partial}^1 H_x = (\bar{\partial}^1 \tilde p_A) \tilde {\bar \partial}^A H_x = 2 p^1 \tilde {\bar \partial}^1 H_x
\end{align}
that $2 v \tilde {\bar \partial}^1 H_x = R$. For $A=\mathfrak{a}$ we use 
\begin{align}
\bar{\partial}^\mathfrak{a} H_x = p^\mathfrak{a} \frac{R}{v}\quad &\textrm{and}\quad \bar{\partial}^\mathfrak{a} H_x =(\bar{\partial}^\mathfrak{a} \tilde p_A) \tilde {\bar \partial}^A H_x = 2 p^A \tilde {\bar \partial}^1 H_x + \tilde {\bar \partial}^\mathfrak{a} H_x = p^\mathfrak{a} \frac{R}{v} + \tilde {\bar \partial}^\mathfrak{a} H_x
\end{align}
to finally conclude that $\tilde {\bar \partial}^\mathfrak{a} H_x = 0$. Hence $H_x(k) = r_x(p_0, v(p))$ is just a function of the variables $p_0$ and $v$. $\square$
  
%%%%%%%%%%%%%%%%%%%%%%%%%%%%%%%%%%%%%%%%%%%%%%%%%%%%%%%%%%%%%%%%%%%%%%%%%%%%%%%%%%%%%
\section{Observer dependence of observer transformations\label{sec:obsdep}}
Let us consider infinitesimal frame transformations $\tilde{\theta}^a = \hat \theta^a + \delta \theta^a$. From the observer frame conditions in Eq. \eqref{eq:obsdef} we obtain the two conditions
\begin{align}
\delta \theta^0{}_{\mu}\bar{\partial}^\mu H_x(\hat \theta^0) = 0
\quad \mathrm{and}\quad
\delta \theta^A{}_\mu \bar\partial^\mu H_x(\hat \theta^0) + \delta \theta^0{}_\mu \theta^A{}_\nu \bar\partial^\mu \bar\partial^\nu H_x(\hat \theta^0) = 0 
\end{align}
From the first condition, we see that $\delta \theta^0$ lies in the annihilator of $e_0{}^\mu = \bar \partial^\mu H_x(\hat \theta^0)$. Therefore, $\delta \theta^0$ is spatial with respect to the tetrad $\{\theta^a\}$ and we can express $\delta \theta^0$ by choosing a three-co-vector $\vec s$ and writing $\delta \theta^0 = s_A \theta^A$. Then, from the second condition, we deduce that, in general, $\delta \theta^A$ has to depend on $\theta^0$ since $\bar{\partial}^\mu\bar{\partial}^\nu H_x(\theta^0)$ depends on $\theta^0$ for a generic dispersion relation.

%%%%%%%%%%%%%%%%%%%%%%%%%%%%%%%%%%%%%%%%%%%%%%%%%%%%%%%%%%%%%%%%%%%%%%%%%%%%%%%%%%%%%
\section{Proof of Proposition on observers' identical mass shell encodings} \label{app:propObsMS}

In Sec.~\ref{sec:ObsDispRleations} we stated Prop~\ref{prop:ObsMS}, which we prove here.\\

\noindent \textbf{Proposition 1} (Identical mass-shell encodings). 
\textit{Let $\{\hat \theta^a\}_{a=0}^3$ and $\{\tilde \theta^a\}_{a=0}^3$ be observer tetrads on the Hamiltonian spacetime $(M,H)$, which are related by the observer transformation matrix $\Lambda^a{}_b(\hat \theta,\tilde{\theta})$. The observer agree on their encodings of the dispersion relation $H_{\hat \theta, x}$ and $H_{\tilde \theta, x}$ via the encoding map $f_{\hat \theta}$ and $f_{\tilde \theta}$ if and only if the observer transformation is a local \emph{linear} symmetry of the Hamiltonian.}\\

\noindent{\textit{Proof:}} Assume $H_{\hat \theta, x}(\hat k) = H_{\tilde \theta, x}(\hat k)$. Then, 
\begin{align}\label{eq:proofprop1}
H_x(k) = H_x(\hat \theta^a \hat k_a) = H_x(\tilde \theta^a \hat k_a) = H_x((\Lambda^{-1})^a{}_b \hat \theta^b \hat k_b) = H_x((\Lambda^{-1})^a{}_b \hat \theta^b \hat e_a{}^\mu k_\mu) = H_x(\Psi_{\Lambda}(k))\,.
\end{align} 
Hence the observer transformation induced local map $\Psi_\Lambda$ from $T^*_xM$ to $T^*_xM$, defined by $\Psi_{\Lambda}{}^\mu{}_\nu = (\Lambda^{-1})^a{}_b \hat \theta^b{}_\nu \hat e_a{}^\mu $ is a local symmetry of $H$ and, in particular, its a linear map acting on $k$: $\Psi_\Lambda(k)_\nu = \Psi_\Lambda{}^\mu{}_\nu k_\mu$. The other way around, if $\Psi_{\Lambda}{}^\mu{}_\nu = (\Lambda^{-1})^a{}_b \hat \theta^b{}_\nu \hat e_a{}^\mu $ defines a local symmetry, then $H_x(\Psi_{\Lambda}(k)) = H_x(k)$ and the analogue manipulations as in \eqref{eq:proofprop1} show that $H_{\hat \theta, x}(\hat k) = H_{\tilde \theta, x}(\hat k)$. $\square$

%%%%%%%%%%%%%%%%%%%%%%%%%%%%%%%%%%%%%%%%%%%%%%%%%%%%%%%%%%%%%%%%%%%%%%%%%%%%%%%%%%%%%
\section{Proof of Proposition 2\label{sec:proofprop2}}
Here we proof Prop.~\ref{prop:obstrans} of Sec.~\ref{sec:ObsDispRleations}.\\

\textbf{Proposition 2} (Observer transformations and local and linear symmetries).
\textit{Let $\cg^{\rm dis}$ be the group of local and linear symmetries of $H_x$. For each co-frame $\{\theta^a\}_{a=0}^3$, the map $I_\theta : \cg^{\rm dis} \rightarrow \mathfrak{G}_\theta \subset GL(4)$ with $I_\theta (\Psi)^a{}_b = \Lambda^a{}_b(\Psi(\theta),\theta)$ for all $\Psi\in \cg^{\rm dis}$ is a group isomorphism. Furthermore, $\mathfrak{G}_\theta$ is equivalent to the set of all observer transformations between observer frames in the same equivalence class as the frame~$\{\theta^a\}_{a=0}^3$. }\\

\noindent{\textit{Proof:}} Let us consider a coordinate system around $x$ and the corresponding basis at $x$, $\{dx^\mu,\partial_{\mu}\}$. In this basis, a co-frame and its respective dual frame are given by the expressions $\theta^a{}_\mu=\theta^a(\partial_\mu)$ and $e_a{}^\mu = dx^\mu(e_a)$, which satisfy $\theta^a{}_\nu e_a{}^\mathcal{\mu} = \delta^\mu_\nu$ since  the duality relation $\theta^a{}(e_b) = \delta^a_b$ implies $e_a{}^\sigma \theta^a{}_\mu e_b{}^\mu = e_b{}^\sigma$.

Furthermore, $I_\theta (\Psi)^a{}_b = \theta^a{}_\mu \Psi^\mu{}_\nu e_b{}^\nu$. Therefore, the inverse of $I_\theta$ is the map $I_\theta^{-1}(\Lambda)^\mu{}_\nu= e_a{}^\mu \Lambda^a{}_b  \theta^b{}_\nu$. For the product of two elements of $\mathfrak{G}_\theta$, we find 
\begin{align*}
	I_\theta(\Psi)^a{}_b I_\theta(\bar\Psi)^b{}_c = Q^a{}_b(\Psi(\theta),\theta) Q^b{}_c(\bar\Psi(\theta),\theta) = \theta^a{}_\mu \Psi^\mu{}_\nu e_b{}^\nu  \theta^b{}_\rho \bar\Psi^\rho{}_\sigma e_c{}^\sigma = \theta^a{}_\mu \Psi^\mu{}_\rho \bar\Psi^\rho{}_\sigma e_c{}^\sigma = I_\theta(\Psi(\bar\Psi))^a{}_c\,. 
\end{align*}

Next, let us consider two observer co-frames $\{\tilde\theta^a\}$ and $\{\hat\theta^a\}$ with $\{\tilde\theta^a \} \sim \{ \theta^a\} \sim \{\hat\theta^a\}$, i.e.\ $\exists$ $\tilde\Psi,\,\hat\Psi\in \cg^{\rm dis}$ such that $\tilde\theta^a=\tilde\Psi(\theta^a)$ and $\hat\theta^a = \hat\Psi(\theta^a)$. Since $\hat\Psi(\theta^a)_\mu=\theta^a{}_\nu\hat\Psi^\nu{}_\mu$, we have $\hat e_a{}^\mu= (\hat\Psi^{-1})^\mu{}_\nu e_a{}^\nu $, where $(\hat\Psi^{-1})^\nu{}_\mu\hat\Psi^\mu{}_\rho=\delta^\nu_\rho$. Then, $\Lambda^a{}_b(\tilde\theta,\hat\theta)=\tilde\theta^a(\hat{e}_b)=\theta^a{}_\nu\tilde\Psi^\nu{}_\mu (\hat\Psi^{-1})^\mu{}_\rho e_a{}^\rho $. Since $\tilde\Psi(\hat\Psi^{-1})\in \cg^{\rm dis}$, we find that the co-frame $\theta^a{}_\nu\tilde\Psi^\nu{}_\mu (\hat\Psi^{-1})^\mu{}_\rho$ is in the same equivalence class as $\theta$ and  $\Lambda^a{}_b(\tilde\theta,\hat\theta) \in \mathfrak{G}_\theta$. $\square$

%%%%%%%%%%%%%%%%%%%%%%%%%%%%%%%%%%%%%%%%%%%%%%%%%%%%%%%%%%%%%%%%%%%%%%%%%%%%%%%%%%%%%
\section{Uniaxial crystal spacetime - information through mass shell comparison \label{sec:altinf}}

Alternatively, we can derive the dimension of the manifolds of information that can be gained from comparing mass shells in an uniaxial crystal spacetime by investigating the tensor 
\begin{align}
G^{\mu\nu\rho\sigma}:=\frac{1}{4!}\partial_{y_\mu}\partial_{y_\nu}\partial_{y_\rho}\partial_{y_\sigma}D_x(y)^2=\eta^{(\mu\nu}(\eta^{\rho\sigma)} - \xi^2 U^\rho U^{\sigma)} + X^\rho X^{\sigma)} )
\end{align}
and its encoding $G(\theta)^{\mathcal{A}\mathcal{B}\mathcal{C}\mathcal{D}}:=G^{\mu\nu\rho\sigma}\theta^\mathcal{A}_\mu \theta^\mathcal{B}_\nu \theta^\mathcal{C}_\rho \theta^\mathcal{D}_\sigma$. The encoding $G(\theta)^{\mathcal{A}\mathcal{B}\mathcal{C}\mathcal{D}}$ represents the maximal ammount of information an observer can obtain about the spacetime structure in case of the uniaxial crystal spacetime.
$G(\theta)^{\mathcal{A}\mathcal{B}\mathcal{C}\mathcal{D}}$ induces a map from the 16 components of the co-frame to the $35$ components of a general completely symmetrized tensor of degree 4 in 4 dimensions. By defining a map $ind:[1,35]\subset \mathbb{Z} \rightarrow [1,4]^4\subset\mathbb{Z}^4$ such that the vector with components $V_G(\theta)^{J}:=G(\theta)^{ind(J)}$ contains all 35 independent components of $G^{\mathcal{A}\mathcal{B}\mathcal{C}\mathcal{D}}$ and defining the vector $v_\theta=(\theta^0_0,...,\theta^3_3)\in \mathbb{R}^{16}$ containing all $16$ components of $\theta$, we obtain the nonlinear map $Enc_G:\mathbb{R}^{16}\rightarrow \mathbb{R}^{35}$ defined as $Enc_G(v_\theta)=V_G(\theta)$. Analyses performed with Wolfram Mathematica revealed that the Jacobian matrix of $Enc_G$ has rank 14, which implies that only 14 of 16 components of the co-frame can be fixed 
by comparing two encodings of the dispersion relation. The remaining uncertainty amounts to two real numbers, which coincides with the result we obtain from the analysis of the remaining symmetries of the dispersion relation. 
%%%%%%%%%%%%%%%%%%%%%%%%%%%%%%%%%%%%%%%%%%%%%%%%%%%%%%%
\bibliographystyle{utphys}
\bibliography{LMPC}

\end{document}